\DeclareMathAlphabet{\pazocal}{OMS}{zplm}{m}{n}
\def\thm@space@setup{%
  \thm@preskip=\parskip \thm@postskip=0pt
}
\definecolor{MyDarkBlue}{rgb}{0,0.08,0.45}
\definecolor{cites}{HTML}{324b13}
\definecolor{links}{HTML}{1a663b}
\definecolor{MyLightMagenta}{cmyk}{0.1,0.8,0,0.1}
\definecolor{scyan}{HTML}{CBEAFC}
\definecolor{red}{HTML}{B5595C}
\definecolor{green}{HTML}{609B57}
\definecolor{spink}{HTML}{FFB0FF}
\definecolor{yellow}{HTML}{E5A919}
\definecolor{gray}{HTML}{c8c8c8}
\definecolor{darkgray}{HTML}{646464}
\definecolor{blue}{RGB}{0,114,178}
\definecolor{violet}{RGB}{136, 93, 144}
\pgfplotsset{width=7cm,compat=1.16}
\pgfplotsset{ytick style={draw=none}}
\pgfplotsset{ztick style={draw=none}}
\pgfplotsset{xtick style={draw=none}}
\patchcmd{\hyper@makecurrent}{%
    \ifx\Hy@param\Hy@chapterstring
        \let\Hy@param\Hy@chapapp
    \fi
}{%
    \iftoggle{inappendix}{
        \@checkappendixparam{chapter}%
        \@checkappendixparam{section}%
        \@checkappendixparam{subsection}%
        \@checkappendixparam{subsubsection}%
        \@checkappendixparam{paragraph}%
        \@checkappendixparam{subparagraph}%
    }{}%
}{}{\errmessage{failed to patch}}
\newcommand*{\@checkappendixparam}[1]{%
    \def\@checkappendixparamtmp{#1}%
    \ifx\Hy@param\@checkappendixparamtmp
        \let\Hy@param\Hy@appendixstring
    \fi
}
\apptocmd{\appendix}{\toggletrue{inappendix}}{}{\errmessage{failed to patch}}
\apptocmd{\subappendices}{\toggletrue{inappendix}}{}{\errmessage{failed to patch}}
\newtheorem{theorem}{Theorem}
\newtheorem{lemma}{Lemma} 
\newtheorem{proposition}{Proposition}
\newtheorem{example}{Example}
\newtheorem{definition}{Definition} 
\newtheorem{corollary}{Corollary}
\newtheorem{remark}{Remark}
\newtheorem{observation}{Observation}
\newcommand{\pay}{\ensuremath{u}}
\newcommand{\payb}{\ensuremath{\pay^\prime}}
\newcommand{\payi}{\ensuremath{\pay_i}}
\newcommand{\alloc}{\ensuremath{a}}
\newcommand{\alloci}{\ensuremath{\alloc_i}}
\newcommand{\allocmi}{\ensuremath{\alloc_{-i}}}
\newcommand{\Alloc}{\ensuremath{A}}
\newcommand{\Alloci}{\ensuremath{\Alloc_i}}
\newcommand{\Allocmi}{\ensuremath{\Alloc_{-i}}}
\newcommand{\type}{\ensuremath{\theta}}
\newcommand{\mint}{\ensuremath{\underline{\type}}}
\newcommand{\maxt}{\ensuremath{\overline{\type}}}
\newcommand{\typeb}{\ensuremath{\type^\prime}}
\newcommand{\typec}{\ensuremath{\tilde{\type}}}
\newcommand{\typeh}{\ensuremath{\hat\type}}
\newcommand{\typei}{\ensuremath{\type_i}}
\newcommand{\typebi}{\ensuremath{\typei^\prime}}
\newcommand{\typemi}{\ensuremath{\type_{-i}}}
\newcommand{\typecmi}{\ensuremath{\typec_{-i}}}
\newcommand{\Types}{\ensuremath{\Theta}}
\newcommand{\Typesi}{\ensuremath{\Types_i}}
\newcommand{\Typesmi}{\ensuremath{\Types_{-i}}}
\newcommand{\state}{\ensuremath{\omega}}
\newcommand{\stateb}{\ensuremath{\state^\prime}}
\newcommand{\statec}{\ensuremath{\tilde\state}}
\newcommand{\States}{\ensuremath{\Omega}}
\newcommand{\Beliefs}{\ensuremath{\Delta(\States)}}
\newcommand{\Posteriors}{\ensuremath{\Delta(\Types)}}
\newcommand{\cor}{\ensuremath{\varepsilon}}
\newcommand{\fcor}{\ensuremath{\pazocal{E}}}
\newcommand{\cori}{\ensuremath{\cor_i}}
\newcommand{\estate}{\ensuremath{\statec}}
\newcommand{\testate}{\ensuremath{\estate^\star}}
\newcommand{\estateb}{\ensuremath{\estate^\prime}}
\newcommand{\estates}{\ensuremath{\tilde\States}}
\newcommand{\ebeliefs}{\ensuremath{\Delta(\estates)}}
\newcommand{\aarg}{\ensuremath{(\alloc,\type,\state)}}
\newcommand{\argq}{\ensuremath{(q,\type,\state)}}
\newcommand{\argb}{\ensuremath{(\alloc,\type,\state,\belief)}}
\newcommand{\argme}{\ensuremath{(\alloc,\type,\mssg,\estate)}}
\newcommand{\argmeb}{\ensuremath{(\alloc,\type,\mssg,\estate,\belief)}}
\newcommand{\argi}{\ensuremath{(\alloci,\typei,\state)}}
\newcommand{\Arg}{\ensuremath{\Alloc\times\Types\times\States}}
\newcommand{\Argi}{\ensuremath{\Alloci\times\Typesi\times\States}}
\newcommand{\Argb}{\ensuremath{\Alloc\times\Types\times\States\times\Beliefs}}
\newcommand{\Argbe}{\ensuremath{\Alloc\times\Types\times\mssgs\times\estates\times\ebeliefs}}
\newcommand{\Argme}{\ensuremath{\Alloc\times\Types\times\mssgs\times\estates}}
\newcommand{\bsplit}{\ensuremath{\tau}}
\newcommand{\bsplitb}{\ensuremath{\bsplit^\prime}}
\newcommand{\belief}{\ensuremath{\mu}}
\newcommand{\beliefh}{\ensuremath{\hat\belief}}
\newcommand{\beliefi}{\ensuremath{\belief_i}}
\newcommand{\typed}{\ensuremath{f}}
\newcommand{\typecdf}{\ensuremath{F}}
\newcommand{\typecdfi}{\ensuremath{F_i}}
\newcommand{\typecdfmi}{\ensuremath{F_{-i}}}
\newcommand{\typedi}{\ensuremath{\typed_i}}
\newcommand{\typedmi}{\ensuremath{\typed_{-i}}}
\newcommand{\prior}{\ensuremath{\belief_0}}
\newcommand{\cord}{\ensuremath{\eta}}
\newcommand{\excdf}{\ensuremath{\pi}}
\newcommand{\cexp}{\ensuremath{\excdf_\mechanism}}
\newcommand{\cexpi}{\ensuremath{\excdf_{\mechanism,i}}}
\newcommand{\signal}{\ensuremath{s}}
\newcommand{\signali}{\ensuremath{\signal_i}}
\newcommand{\signals}{\ensuremath{S}}
\newcommand{\asignal}{\ensuremath{\hat{\signal}}}
\newcommand{\asignals}{\ensuremath{\hat{\signals}}}
\newcommand{\csignal}{\ensuremath{\signal^*}}
\newcommand{\csignali}{\ensuremath{\signali^*}}
\newcommand{\csignals}{\ensuremath{\signals^*}}
\newcommand{\csignalsi}{\ensuremath{\csignals_i}}
\newcommand{\mechanism}{\ensuremath{\phi}}
\newcommand{\mmech}{\ensuremath{\mechanism_{\text{\texttt{My}}}}}
\newcommand{\myerson}{\ensuremath{\text{\texttt{My}}}}
\newcommand{\cali}{\ensuremath{\text{\texttt{cal}}}}
\newcommand{\full}{\ensuremath{\text{\texttt{full}}}}
\newcommand{\tsmech}{\ensuremath{\psi}}
\newcommand{\rmech}{\ensuremath{\phi}}
\newcommand{\mssg}{\ensuremath{m}}
\newcommand{\mssgs}{\ensuremath{M}}
\newcommand{\allocr}{\ensuremath{\alpha}}
\newcommand{\allocrb}{\ensuremath{\allocr^\prime}}
\newcommand{\beliefr}{\ensuremath{\beta}}
\newcommand{\beliefrb}{\ensuremath{\beliefr^\prime}}
\newcommand{\dmech}{\ensuremath{\varphi}}
\newcommand{\dmechc}{\ensuremath{\tilde\dmech}}
\newcommand{\dmecht}{\ensuremath{\dmech_t}}
\newcommand{\virtual}{\ensuremath{J}}
\newcommand{\ooi}{\ensuremath{\alloc_{i\emptyset}}}
\newcommand{\oo}{\ensuremath{\alloc_\emptyset}}
\newcommand{\mechanisms}{\ensuremath{\pazocal{M}}}
\newcommand{\cmechanisms}{\ensuremath{\mechanisms_{\mathrm{cal}}}}
\newcommand{\val}{\ensuremath{v}}
\newcommand{\valb}{\ensuremath{\val^\prime}}
\newcommand{\vs}{\ensuremath{J}}
\newcommand{\DP}{\ensuremath{W}}
\newcommand{\ddp}{\ensuremath{w}}
\newcommand{\Qup}{\ensuremath{Q_\uparrow}}
\newcommand{\Qdown}{\ensuremath{Q_\downarrow}}
\newcommand{\Qmy}{\ensuremath{Q_\myerson}}
\newcommand{\Qcal}{\ensuremath{Q_\cali}}
\newcommand{\maxq}{\ensuremath{\bar q}}
\newcommand{\lotq}{\ensuremath{\xi}}
\newcommand{\strat}{\ensuremath{\sigma}}
\newcommand{\stratb}{\ensuremath{\strat^\prime}}
\newcommand{\terminals}{\ensuremath{\pazocal{H}}}
\newcommand{\terminalsh}{\ensuremath{\terminals^\infty}}
\newcommand{\weakc}{\ensuremath{\stackrel{w^*}{\longrightarrow}}}
\newcommand{\occup}{\ensuremath{\nu}}
\newcommand{\eoccup}{\ensuremath{\overline{\occup}}}
\newcommand{\rep}{\ensuremath{\rho}}
\newcommand{\freq}{\ensuremath{\mathrm{freq}}}
\newcommand{\nlast}{\ensuremath{n^{\mathrm{last}}}}
\newcommand{\Tlast}{\ensuremath{ T^{\mathrm{last}}}}
\newcommand{\flast}{\ensuremath{ f^{\mathrm{last}}}}
\newcommand{\pp}{\ensuremath{p}}
\newcommand{\Ntotal}{\ensuremath{\tilde N^{\mathrm{total}}}}
\newcommand{\Ltotal}{\ensuremath{\tilde L^{\mathrm{total}}}}
\newcommand{\Utruth}{\ensuremath{U^{\mathrm{truth}}}}
\newcommand{\outcome}{\ensuremath{\vartheta}}
\newcommand{\outcomeb}{\ensuremath{\outcome^\prime}}
\newcommand{\lebesgue}{\ensuremath{\lambda}}
\newcommand{\reals}{\ensuremath{\mathbb{R}}}
\newcommand{\naturals}{\ensuremath{\mathbb{N}}}
\newcommand{\borel}{\ensuremath{\pazocal{B}}}
\newcommand{\testf}{\ensuremath{g}}
\newcommand{\testfb}{\ensuremath{q}}
\newcommand{\cav}{\ensuremath{\mathrm{cav}\;}}
\newcommand{\expect}{\ensuremath{\mathbb{E}}}
\newcommand{\pr}{\ensuremath{\mathbb{P}}}
\newcommand{\setplayers}{\ensuremath{[N]}}
\newcommand{\estrat}{\ensuremath{(\pp,\strat)}}
\newcommand{\estratb}{\ensuremath{(\pp^\prime,\stratb)}}
\newcommand{\estratopt}{\ensuremath{(\pp^*,\strat^*)}}
\newcommand{\estratc}{\ensuremath{(\tilde\pp,\tilde\strat)}}
\title{Calibrated Mechanism Design\thanks{The latest version of the paper can be found \href{https://www.dropbox.com/scl/fi/dbucfzy2mhu954vo1spmn/calibrated-mechanism-design.pdf?rlkey=4atd3hai9v6o32meyv36uchvs&dl=0}{here}. We thank Dirk Bergemann, James Best, Elliot Lipnowski, Emir Kamenica, Navin Kartik, Stephen Morris , Phil Reny, Marzena Rostek, and \href{https://www.refine.ink}{Refine.ink} for valuable comments and suggestions, as well as seminar audiences at EARIE 2024, WUSTL Economic Theory Conference 2024, MIT IDSS Distinguished Speaker Series, the Workshop in Market Design, Bonn, Berlin HU, CERGE-EI, Gerzensee, UCL, Naples, Stanford University, MIT Sloan, and the CEPR Workshop on Contracts, Incentives and Information at Collegio Carlo Alberto. Laura Doval gratefully acknowledges financial support from the Sloan Foundation. Alex Smolin gratefully acknowledges
funding from the French National Research Agency (ANR) under the Investments for the Future program
(grant ANR-17-EURE-0010) and the AI Interdisciplinary Institute ANITI (grant ANR-23-IACL-0002). This paper was partly written while the first author was visiting Stanford University and the second author was visiting Northwestern University and Columbia Business School; we are grateful for their hospitality.}}
\author{Laura Doval\thanks{Columbia Business School and CEPR. E-mail: \href{mailto:laura.doval@columbia.edu}{\texttt{laura.doval@columbia.edu}.}} \and Alex Smolin\thanks{Toulouse School of Economics and CEPR. E-mail: \href{mailto:alexey.v.smolin@gmail.com}{\texttt{alexey.v.smolin@gmail.com}.}}}
\begin{document}
\pagenumbering{gobble}
\maketitle
\begin{abstract}

 We study mechanism design when a designer repeatedly uses a fixed mechanism to interact with strategic agents who learn from observing their allocations. We introduce a static framework, \emph{calibrated mechanism design}, requiring mechanisms to remain incentive compatible given the information they reveal about an underlying state through repeated use. In single-agent settings, we prove implementable outcomes correspond to two-stage mechanisms: the designer discloses information about the state, then commits to a state-independent allocation rule. This yields a tractable procedure to characterize calibrated mechanisms, combining information design and mechanism design. In private values environments, full transparency is optimal and correlation-based surplus extraction fails. We provide a microfoundation by showing calibrated mechanisms characterize exactly what is implementable when an infinitely patient agent repeatedly interacts with the same mechanism.  Dynamic mechanisms that condition on histories expand implementable outcomes only by weakening incentive constraints, but not by enriching the designer's ability to obfuscate learning.

\end{abstract}
\newpage
\clearpage
\pagenumbering{arabic}

\section{Introduction}\label{sec:intro}

Many economic institutions rely on mechanisms that remain fixed while agents interact with them repeatedly. Online platforms commit to stable auction formats for advertising slots, lenders use persistent scoring algorithms for loan decisions, and regulators establish durable rules for market participants.   When the mechanism's operation depends on information known only to the designer\textemdash such as the platform's data about match values, the lender's assessment of credit market conditions, or the regulator's understanding of market fundamentals\textemdash participants may infer this information by observing their outcomes across repeated interactions.  This learning creates a fundamental constraint: the information a mechanism reveals through repeated use limits what outcomes it can implement in the long run. Participants  can use the information gleaned from past interactions when deciding whether and how to participate, tightening the designer's incentive constraints.  A lender whose approval decisions depend on unobserved credit market conditions will gradually reveal   these conditions to borrowers through his lending decisions, constraining the lender's ability to provide credit efficiently. We study how this endogenous information leakage shapes the set of implementable outcomes in mechanism design.

A simple example illustrates how learning prevents the designer from exploiting his information. Consider a seller who repeatedly offers a good whose demand depends on an unobserved state, which can be either low ($L$) or high ($H$). Each state is equally likely. The seller faces a buyer whose value for the good can take one of two values, $1/2$ or $1$. The probability that the buyer's value is 1 is higher when the demand state is high. \autoref{table:cml-intro} summarizes the value distribution conditional on the demand state:
\begin{table}[ht!]
\centering
\begin{tabular}{c|ccc}
&  $\val=1/2$ & $\val=1$ \\ \hline
 $L$\  & $2/3$ & $1/3$ \\
 $H$\ & $1/3$ & $2/3$
\end{tabular}
\caption{Value distribution conditional on demand state.}\label{table:cml-intro}
\end{table}

Suppose the seller can design the terms of trade, that is, the probability with which he allocates the good to the buyer ($q\in[0,1]$) and the payment the buyer makes to the seller ($t\in\reals$). The buyer's payoff is $\val q-t$, and the seller's is $t$. The buyer can always choose to not trade with the seller and ensure a payoff of $0$. 

Suppose first the buyer and the seller interact only once. \autoref{table:cml-intro-mech} depicts an optimal mechanism for the seller in this case:
\begin{table}[ht!]
\centering
\begin{tabular}{c|cc}
&$\val=1/2$&$\val=1$\\
\hline
$L$&$(1,0)$&$(1,0)$\\
$H$&$(1,3/2)$&$(1,3/2)$
\end{tabular}
\caption{Trade probabilities and payments as a function of buyer's value and demand state.}\label{table:cml-intro-mech}
\end{table}

 In this mechanism, the buyer gets the good for free when the demand state is $L$ and pays a price of $3/2$ when it is $H$. If this mechanism were offered once without the buyer observing the demand state, the buyer obtains a payoff of $0$ from participating and truthfully reporting her type. Unsurprisingly, the seller extracts the buyer's surplus: the seller knows the demand state, which is correlated with the buyer's type, and exploits this information in the design of his mechanism (cf. \citealp{cremer1988full}).
 
 Suppose now the buyer interacts repeatedly with the mechanism, but the state remains fixed. If the buyer observes nothing from her interaction with the mechanism, the buyer is willing to participate and truthfully report her value into the mechanism, no matter how many times it is offered: In each period, she anticipates getting a (continuation) payoff of $0$ from engaging with the mechanism. Suppose, instead, the buyer observes her allocation in the mechanism.  If the demand state is $L$, the buyer gets the good for free at the end of the first period, and from now on knows this is what she will get in the mechanism. If the demand state is $H$, the buyer gets the good and pays a price of $3/2$ as she agreed to when she decided to participate in period 1, but anticipating a price of $3/2$ from then onwards, never again participates in the mechanism. Thus, whereas the seller can implement the outcomes in \autoref{table:cml-intro-mech} when the buyer does not observe her allocations, this is no longer the case when she can. 
 
This paper develops a framework for mechanism design in which agents' ability to learn about the designer's information from repeatedly playing a mechanism constrains implementable outcomes. In our framework, allocations depend on agents' reports and on a state known only to the designer. Through repeated participation, agents observe their allocations and gradually learn about this state. A mechanism therefore serves a dual role: it determines allocations based on reports, and it acts as an information structure that reveals the underlying state. The more the mechanism conditions on the state, the more information it leaks, and the tighter the constraints on implementable outcomes.
 
We approach our analysis in two steps. First, we introduce a static solution concept for mechanism design that directly models the feedback between the mechanism, the information it reveals, and participants' behavior. This solution concept allows us to tractably capture the limits on the set of implementable outcomes implied by agents' learning,  while abstracting from the dynamics of experimentation. Second, we provide a dynamic microfoundation showing this static solution concept precisely captures the implementable outcomes when an infinitely patient agent repeatedly interacts with the same mechanism.

In \autoref{sec:model}, we introduce a static solution concept\textemdash calibrated mechanism design\textemdash requiring that mechanisms remain incentive compatible and individually rational given the information they reveal about the state through their allocations.  We formalize this requirement through the notion of a \emph{calibrated mechanism}. We couple each mechanism with an information structure that describes what participants learn about the state from the mechanism. The information structure reveals to each agent an interim allocation rule\textemdash the mapping from her type reports to lotteries over her allocations\textemdash capturing what she would learn from repeatedly observing her outcomes in the mechanism. We require the information structure to be calibrated in the sense of \cite{foster1997calibrated}: the interim allocation rule each agent observes must accurately describe the allocation probabilities she faces. Throughout the paper, we study calibrated mechanism design: the designer chooses a mechanism that remains incentive compatible and individually rational when participants have access to the mechanism's calibrated information structure before playing. Calibration imposes a constraint on the designer relative to standard mechanism design: the more the allocation rule depends on the state, the more informative the calibrated information structure becomes, and hence the more incentive and participation constraints the designer must satisfy. 

In private values environments, the constraint that the mechanism must remain incentive compatible and individually rational given the information it reveals about the state pushes the designer to full transparency. We show in \autoref{theorem:pv} that, under the calibration constraint, the designer can do no better than inducing in each state the optimal direct mechanism when there is common knowledge of that state. In particular, in settings with transferable utility in which the designer has statistical information about the agents' types, \autoref{theorem:pv} implies the designer cannot extract full surplus.

 In \autoref{sec:main}, we characterize optimal calibrated mechanisms through a tractable class we dub \emph{two-stage} mechanisms. In a two-stage mechanism, the designer first discloses information about the state to the agent\textemdash inducing a belief about the state\textemdash then commits to an allocation rule that depends only on the agent's report, not  the state itself. \autoref{theorem:cmd-two-stage} shows that in single-agent settings, calibrated mechanisms and two-stage mechanisms implement exactly the same outcome distributions. This equivalence yields a practical algorithm for finding optimal calibrated mechanisms, combining tools from information design and mechanism design: for each possible belief the designer might induce, solve a standard mechanism design problem given that belief; then choose the optimal information disclosure by concavifying the resulting value function. 

 In the case of multiple agents, \autoref{theorem:cmd-two-stage-multi} shows calibrated mechanisms admit a similar representation via \emph{generalized} two-stage mechanisms: Like two-stage mechanisms, the designer individually discloses to each agent a belief about the state and offers an incentive compatible and individually rational interim allocation rule that no longer conditions on the state. Whereas the designer observes the disclosed belief profile, each agent only observes the belief disclosed to her.\footnote{In (generalized) two-stage mechanisms, the designer communicates with the agents \emph{before} the agents communicate with the mechanism. Whereas this communication is a restriction on the set of implementable outcomes relative to the single-designer Myersonian benchmark, \cite{attar2025keeping} show that allowing competing principals to first communicate with agents expands the set of implementable outcomes.} Moreover, each agent learns only her own interim allocation rule\textemdash how her reports map to her allocations\textemdash rather than the complete mapping from type profiles to allocations. This partial observability requires additional consistency conditions to ensure agents' interim allocation rules are mutually compatible. In contrast to the single-agent case, not every generalized two-stage mechanism induces a calibrated mechanism, as generalized two-stage mechanisms may reveal strictly less information than calibrated mechanisms. 

In \autoref{sec:app},  we study optimal calibrated mechanism design in the canonical setting of quasilinear utilities, single-dimensional types and allocations. In \autoref{sec:transparent}, we study the single-agent case. We show that if the order of types is state independent, then optimal two-stage mechanisms fully reveal the state, whereas this conclusion can be reversed when the order of types is state-dependent. In \autoref{sec:myerson}, we compare optimal calibrated mechanism design against the Myersonian benchmark. We provide sufficient conditions under which the designer realizes the payoff of the Myersonian benchmark under the calibration constraint; under these conditions, the optimal Myersonian mechanism satisfies the agent's incentive constraints state-by-state. Building on that result, we analyze multi-agent applications in \autoref{sec:optimal-auction}.

\autoref{sec:microf} provides a microfoundation for calibrated mechanism design. We analyze an infinite-horizon game where an infinitely patient agent repeatedly plays the same mechanism.\footnote{We assume the agent has limit-of-means preferences, so we can pass to the $\delta\rightarrow 1$ limit without approximation.}  The state remains fixed, but the agent's type is redrawn each period independently of the state.\footnote{\autoref{theorem:repeated} holds when the agent's type is drawn once at the beginning without further assumptions on the distribution. As we explain in \autoref{sec:microf}, we choose the i.i.d. specification for the evolution of the agent's private information to put repeated and dynamic mechanisms on a more equal footing.} Our notion of implementation is based on the long-run expected frequency of allocation-type-state tuples when the agent best responds to the mechanism. \autoref{theorem:repeated} shows that the implementable outcome distributions are precisely those induced by incentive compatible two-stage mechanisms. This result validates our static framework: calibrated mechanism design captures exactly what is implementable through repeated play.

We then ask whether giving the designer additional flexibility helps. In a dynamic mechanism, the designer can condition each period's allocation on the complete history of past reports and allocations, rather than using the same mechanism repeatedly. \autoref{theorem:dynamic} shows that dynamic mechanisms expand implementable outcomes in a specific way: they correspond to two-stage mechanisms with weaker incentive compatibility and individual rationality conditions. The designer can now exploit the ability to monitor the frequency of type reports over time, which allows him to punish detectable deviations\textemdash reporting strategies whose frequency distribution differs from the true type distribution. Instead, the mechanism must be robust to \emph{undetectable} ones. Importantly, in environments with transferable utility, this distinction vanishes: As shown in \cite{rahman2024detecting}, eliminating profitable undetectable deviations is equivalent to incentive compatibility, so dynamic mechanisms implement exactly the same distributions over physical allocations, types, and states as our static calibrated mechanisms. 

\paragraph{Related Literature} The paper lies at the intersection of four literatures: rational expectations equilibria, (public) information disclosure in mechanism design, the computer science literature on learning in repeated auctions, and dynamic implementation. 

The definition of a calibrated mechanism is in the spirit of rational expectations equilibria \citep{radner1979rational,green1977non,kreps1977note}. Indeed, requiring a mechanism to remain incentive compatible given the information it reveals about the state mirrors the rational-expectations requirement that prices clear markets given the information they convey. Unlike rational expectations equilibrium, where the only role of prices is to clear the market, calibrated mechanisms are chosen by a designer who understands the incentive implications of the mechanism's information leakage and trades this off against the value of conditioning the mechanism on the state. Similar to our analysis in \autoref{sec:microf}, some papers in the literature have studied the question of whether rational expectations equilibria emerge from learning dynamics (see, for instance, \citealp{milgrom1981rational,blume1982introduction}).

Following \cite{milgrom1982theory}, a literature has studied whether a designer should publicly disclose information he knows before a mechanism is played. \cite{ottaviani2001value} show revealing a signal affiliated with the buyer's value is optimal in a single-agent screening problem. When considering the case of an informed principal, they consider what we call two-stage mechanisms to bound the monopolist's profits. \cite{szabadi2018essays} and \cite{yamashita2018optimal} study the optimal release of public information followed by an optimal mechanism conditional on that disclosure, while \cite{fu2012ad} study this question in the context of a second price auction.  In those papers, the restriction to public disclosure and the independence of the mechanism on information other than the disclosed one is a  constraint on the class of mechanisms the designer can use. Instead, we show this class of mechanisms is without loss when the designer faces our calibration constraint in the single-agent case, but it may not be in the multi-agent case. Note, however, that when full or no disclosure are optimal in the Myersonian benchmark the distinction between private and public disclosure is immaterial. For that reason, the results on the achievability of the Myersonian benchmark are similar across their and our work. \cite{daskalakis2016does} lift the restriction to public disclosure and study the Myersonian benchmark in an auction setting, showing that the complexity of that problem is the same as that of a multi-product monopolist (cf. \citealp{guesnerie1984complete}).\footnote{There is also a literature that studies a designer's disclosure of information that must be elicited from the agents \citep{esHo2007optimal,bergemann2007information,li2017discriminatory,krahmer2020information,bergemann2022calibrated,bergemann2022screening,smolin2023disclosure}. By contrast, the designer knows the realization of the state and also what information the two-stage mechanism discloses to the agents, so he need not elicit this information.} 

Motivated by the prevalence of fixed auction formats with which bidders interact repeatedly, a literature in computer science studies the properties of bidder learning algorithms and the implications for the auctioneer (see, for instance, \citealp{golrezaei2019dynamic,nedelec2019learning,kanoria2020dynamic}, and \citealp{nedelec2022learning} for a survey treatment). A common finding is that learning bidders can take advantage of ``naive'' auction formats which are no longer incentive compatible when bidders learn. Inspired by this literature, we develop a framework which allows us to systematically study the question of optimal mechanism design in the presence of learning agents.

Our dynamic implementation results relate to the literature that studies whether a mechanism can be implemented either by linking decisions \citep{jackson2007overcoming,ball2023quota} or in the patient limit of a repeated interaction \citep{renou2015approximate,margaria2018dynamic,meng2021value}. Both strands identify cyclical monotonicity as the condition for implementation (cf. \citealp{rochet1987necessary}). \cite{rahman2024detecting} shows that cyclical monotonicity is equivalent to the absence of profitable undetectable deviations.

 By focusing on what agents learn from the designer's information, our paper is distinct from the literature on mechanism design with interdependent payoffs which focuses on agents' learning about others' types through their actions in the mechanism \citep{green1987posterior,niemeyer2022posterior,hafner2025mechanism}. Moreover, by focusing in the case of a designer with commitment, we are distinct from the literature on the informed principal \citep{myerson1983mechanism,maskin1990principal}.

Lastly, our paper contributes to two literatures. First, by studying the informational role of the mechanism, we contribute to the literature on feedback in auctions, which analyzes how different feedback rules affect bidders' information about other agents, and ultimately behavior in first price auctions (see, for instance, \citealp{esponda2008information,bergemann2018should,cesa2024role}). Second, by showing the designer's problem involves solving information and mechanism design problems, our paper joins a recent literature that highlights the dual role of the mechanism as an information structure and an allocation rule \citep{calzolari2006optimality,dworczak2020mechanism,doval2022mechanism}.
\section{Calibrated Mechanism Design}\label{sec:model}
In this section, we introduce the static setting and solution concept that captures the impact of agents' learning from the mechanism on the set of implementable outcomes. We defer to \autoref{sec:microf} the analysis of the dynamic game whose outcomes our static solution concept captures.

\paragraph{Primitives} A designer (he) interacts with $N$ privately informed agents (she) to determine an allocation. Let $\Typesi$ denote the set of types of agent $i$, and $\Types\equiv\times_{i=1}^N\Typesi$. Each agent knows her type, but not those of other agents. The allocation space is given by $\Alloc\equiv\times_{i=1}^N\Alloci$.\footnote{Assuming the allocation space is a product space is without loss of generality. Any restriction on the allocations, such as all agents must receive the same allocation, can be incorporated as restrictions on the support of the mechanism.} Finally, let $\States$ denote a set of states, which are known to the designer, but not to the agents. The sets \Typesi, \Alloci, and \States\ are assumed to be finite throughout. \footnote{Because we allow for lotteries over allocations, that the allocation space is finite does not preclude the case of transferable utility. Indeed, we could let each $\Alloci=\tilde\Alloci\times\{-K,K\}$ for some large enough $K>0$.} Agent $i$'s payoffs are given by $\payi:\Alloci\times\Typesi\times\States\to\reals$. That is, agent $i$ cares about her dimension of the allocation, her type, and the state, and not about other agents' allocations or types.

Denote by $\prior$ the distribution over \States. For each $\state\in\States$, let $\typed(\cdot|\state)\in\Posteriors$ denote the type distribution. We assume throughout the types are independently distributed conditional on the state, that is,
\begin{align}
\typed(\type|\state)=\prod_{i=1}^N\typedi(\typei|\state),
\end{align}
for all $\type\in\Types$ and $\state\in\States$. Together with the assumption on agents' payoffs, the assumption on $\typed(\cdot|\state)$ allows us to isolate the effect of learning about the state from that of learning about others' types (perhaps because others' types provide additional information about the state).

\paragraph{Mechanisms}
We model mechanisms as mappings
\begin{align}\label{eq:mech-def}
\mechanism:\Types\times\States\times[0,1]\to\Delta(\Alloc),
\end{align}
where $\cor\in[0,1]$ is a uniformly distributed random variable, which we refer to as the randomization device. 

Several comments are in order. First, to understand how a mechanism works, the timing of when the different random variables is drawn is important. In particular, we assume that both the state \state\ and the realization of the randomization device \cor\ are independently drawn at the beginning, but not observed by the agents. This determines the direct mechanism $\mechanism(\cdot,\state,\cor):\Types\to\Delta(\Alloc)$ to which the agents send type reports, which in turn determines the lottery from which the allocation is drawn. Thus, the allocation is random in our setting for two reasons: on the one hand, the agents do not know the realization of $(\state,\cor)$, and hence the direct mechanism $\mechanism(\cdot,\state,\cor)$ they face. Second, conditional on $(\state,\cor)$, the allocation may be drawn at random. Mathematically, we could have subsumed \emph{all} sources of randomness in the allocation into the randomization device. However, as we explain next, the definition in \autoref{eq:mech-def} allows us to distinguish the source of randomness in the allocation that is informative about the state from that which is not.

Second, it is useful to consider the reason for the randomization device in the definition of a mechanism. For simplicity, consider the case of the designer facing a single agent. If the agent had repeated access to the mechanism, the agent would stand to learn the mapping $\mechanism(\cdot,\state,\cor):\Types\to\Delta(\Alloc)$ by experimenting with different reports into the mechanism and observing the resulting allocations.\footnote{When the agent is infinitely patient as in \autoref{sec:microf}, we can exhibit a sequence of strategies under which the agent (approximately) learns this mapping. See the proof of \autoref{lemma:adequate-learning} in \autoref{appendix:aux}.} Without the randomization device, the agent would stand to learn a partition of the set of states, where states in the same cell of the partition induce the same direct mechanism $\mechanism(\cdot,\state,\cor)$. By allowing the designer to rely on the randomization device, we allow him to \emph{obfuscate} the agent's learning beyond a simple partitional structure. Contrast this with the Myersonian benchmark in which without loss of generality the designer would offer mechanisms that do not rely on such devices, that is, $\mmech:\Types\times\States\to\Delta(\Alloc)$. Indeed, the Myersonian designer is not concerned with the agents' learning: without loss of generality, he does not disclose anything about the state to the agents, so that the question of how to optimally release information about the state is moot. 

Lastly, note that we assume the mechanism asks the agents for type reports. In \autoref{appendix:aux}, we show that the revelation principle holds in the setting of this section: it is without loss of generality to focus on direct and incentive compatible mechanisms that induce full participation.

\paragraph{Calibrated information structures} We now describe how a mechanism induces an information structure, which we define using the language in \cite{green2022two} and \cite{gentzkow2017bayesian}. An information structure is a mapping\footnote{\cite{gentzkow2017bayesian}  highlight that the language in \cite{green2022two} allows one to describe the correlation across signal structures. This is exactly what we need to allow the designer to obfuscate the agents' ability to learn. It is again instructive to consider the single-agent case. For each type report $\type\in\Types$, the mechanism can be seen as an information structure $\mechanism(\type,\cdot):\States\times[0,1]\to\Delta(\Alloc)$. Thus, the randomization device allows the designer to control the correlation across these different signal structures, which in turn disciplines what the agent stands to learn when experimenting with different reports.}
\begin{align*}
\excdf:\States\times[0,1]\to \csignals_1\times\dots\times \csignals_N,
\end{align*}
where $\cor\in[0,1]$ is a uniformly distributed random variable\textemdash in fact, it is the same as in the definition of a mechanism\textemdash and 
\[\csignalsi=\Delta(\Alloci)^{\Typesi},\]
is the set of agent $i$'s interim allocation rules.\footnote{The terminology is by analogy to reduced form auctions where the map from own types to own probabilities of being allocated the good are referred to as the interim allocation.}  We choose this language for the information structure to capture the idea that if agent $i$ plays the mechanism repeatedly, she stands to learn how her reports influence her allocation probabilities, i.e., her interim allocation rule. The interim allocation rule, in turn, depends on the mechanism and the strategies of others. Below, we require the interim allocation rule is well-calibrated with the mechanism and others' strategies:
\begin{definition}[Calibrated information structures]
We say that the information structure is calibrated to mechanism \mechanism\ if for all $(\state,\cor)\in\States\times[0,1]$ such that 
$\excdf(\state,\cor)=(\csignal_1,\dots,\csignal_N)$ we have that for all $i\in\{1,\dots,N\}$, all $\typei\in\Typesi$, and all $\alloci\in\Alloci$
\begin{align}\label{eq:calibration}
\csignal_i(\alloci|\typei)=\mathbb{E}_{\typecmi\sim \typedmi(\cdot|\state)}\left[\sum_{\allocmi\in\Allocmi}\mechanism(\typei,\typecmi,\state,\cor)(\alloci,\allocmi)\right].
\end{align}
We denote by \cexp\ the information structure calibrated to mechanism \mechanism.
\end{definition}
In words, the information structure is calibrated if whenever agent $i$ observes that her interim allocation rule in the mechanism is \csignali, then $\csignali$ describes the true probabilities with which agent $i$ gets different allocations \alloci\ as a function of her different type reports \typebi\  in the mechanism. As the right hand side of \autoref{eq:calibration} shows, these probabilities depend on: (i) the mechanism $\mechanism(\cdot,\state,\cor)$, and (ii) others' type reports. Implicit in the definition is that other agents are submitting their reports truthfully. While this is a simplification,\footnote{When we consider mechanisms with arbitrary message spaces in \autoref{appendix:model}, the requirement of calibration is relative to both the mechanism and agents' equilibrium participation and reporting strategies.} it turns out to not be an issue because we study incentive compatible and individually rational mechanisms in the sense we define next.

\paragraph{Information leakage from a mechanism} To close our model, we consider how the mechanism and its induced information structure affect agents' incentives. The mechanism \mechanism\ and the calibrated information structure \cexp\ induce the following game of incomplete information among the agents, where we use Bayes Nash equilibrium as the solution concept. In this game, nature draws (i) the state \state\ from distribution \prior, (ii) $\cor\in[0,1]$ according to the uniform distribution, and (iii) the type profile \type\ from $\typed(\cdot|\state)$. Then, each agent $i$ observes her type $\typei$ and her signal $\csignali=\cexpi(\state,\cor)$. Finally, agents simultaneously decide whether to participate in the mechanism, and conditional on participating what type report to send. Conditional on an agent choosing not to participate, each agent $i$ gets outside option \ooi.\footnote{Thus, we are assuming that $\oo=(\ooi)_{i\in N}$ is an element of \Alloc. In \autoref{appendix:model}, we consider more general participation decisions, allowing the mechanism to condition on the set of participating agents, but even with this extra generality, it is still without loss to restrict attention to mechanisms that induce full participation.} 

Formally, given the mechanism \mechanism\ and its calibrated information structure \cexp, we say that the mechanism is incentive compatible if for all agents $i$, types $\typei\in\Typesi$, signals $\csignali\in\csignalsi$ on the support of \cexpi, the following holds:
\begin{align}\label{eq:ic}\tag{IC$(\typei,\csignali)$}
\typei\in\arg\max_{\typebi\in\Typesi}\mathbb{E}_{(\state,\cor,\typemi)}\left[\payi(\mechanism(\typebi,\typemi,\state,\cor),\typei,\state)|(\typei,\csignali)\right],
\end{align}
where we abuse notation and implicitly (linearly) extend the agent's payoff function to account for lotteries over allocations (conditional on $(\typemi,\state,\cor)$). Furthermore, we say that the mechanism is individually rational if for all agents $i$, types $\typei\in\Typesi$, and signals $\csignali\in\csignalsi$ on the support of \cexpi, the following holds:
\begin{align}\label{eq:ir}\tag{IR$(\typei,\csignali)$}
\mathbb{E}_{(\state,\cor,\typemi)}\left[\payi(\mechanism(\typei,\typemi,\state,\cor),\typei,\state)-\payi(\ooi,\typei,\state)|(\typei,\csignali)\right]\geq0.
\end{align}
Importantly, the agents' incentive and participation constraints must hold for each of their types and each of their private signals, reflecting the agents have access to the information leaked by the mechanism before they play in it. Note, however, the mechanism need not elicit the agents' observed signals, as the mechanism ``knows'' each agent's signal realization.

\paragraph{Calibrated Mechanism Design} In the rest of the paper, we study the problem of \emph{calibrated mechanism design} in which the designer selects a mechanism \mechanism\ that satisfies Equations \ref{eq:ic} and \ref{eq:ir} for all $(i,\typei,\csignali)$, when the signals are drawn according to the calibrated information structure \cexp.

\begin{definition}[Calibrated Mechanism Design]\label{definition:calibration} Let $\ddp:\Arg\to\reals$ denote the designer's payoff and let \cmechanisms\ denote the set of mechanisms that are incentive compatible and individually rational when agents have access to the calibrated information structure. The calibrated mechanism design problem is as follows:
\begin{align}\label{eq:opt-cal}\tag{OPT$_{\mathrm{cal}}$}
\max_{\mechanism\in\cmechanisms}\mathbb{E}_{(\state,\cor,\type)}\left[\ddp(\mechanism(\type,\state,\cor),\type,\state)\right].
\end{align}
We refer to elements of \cmechanisms\ as calibrated mechanisms and the solution to \ref{eq:opt-cal} as the optimal calibrated mechanism.
\end{definition}
Three comments are in order: 

First, calibration imposes a constraint on the designer vis-\`a-vis standard mechanism design. After all, the incentive and participation constraints faced by the designer are \emph{endogenous} to the mechanism. The more the designer's mechanism depends on the state, the more informative the calibrated information structure is, and the more incentive constraints the designer faces. Only when each agent's interim allocation rule is constant in \state\ does the mechanism not leak information and the incentive and participation constraints reduce to the standard ones.

Second, in the single-agent setting, the calibration constraint admits two complementary interpretations. Throughout the paper, we emphasize the learning-by-experimentation interpretation: the calibrated information structure represents what the agent can ultimately infer by repeatedly interacting with the mechanism. Accordingly, the designer should ensure incentive compatibility with respect to the full information the agent eventually obtains. At the same time, calibration can also be interpreted as a \emph{transparency} requirement. Indeed, upon observing signal $\csignal:\Types\to\Delta(\Alloc)$, the agent knows the consequences of her choices in the mechanism, even if she does not know the state.\footnote{It is common for online platforms to inform agents of the consequences of their choices: marketplaces inform sellers of their probability of sale at different posted prices, and transportation providers inform riders of their probability of receiving a seat upgrade at different bid levels. Even insurance companies provide consumers with projected expenditures under different plan choices.}


With multiple agents, these interpretations differ. The natural extension of the transparency requirement is that agents learn the mapping from profiles of type reports to lotteries over profiles of allocations before playing the mechanism. By contrast, the calibrated information structure reveals to each agent her interim allocation rule, that is, the mappings from her own reports to lotteries over her own allocations. As we discuss in the next section, the gap between these two interpretations is the gap between the designer publicly or privately disclosing information about the state to the agents.

Lastly, the definition of calibration assumes agents only learn about the state through their allocations in the mechanism, and not their payoffs.\footnote{Indeed, in the analysis of the dynamic interaction in \autoref{sec:microf}, we assume the agent only observes her type and her allocation, but not her payoffs.}\footnote{This assumption is routinely made in dynamic settings. See \cite{pavan2014dynamic} and \cite{cesa2024role} for two examples in the context of agents' behavior within mechanisms.}
 This assumption allows us to focus on the information that the mechanism leaks \emph{regardless of payoff assumptions}. This allows us to avoid situations in which the mechanism does not condition the allocation on the state, but the agents learn because they have different payoffs from the same allocation in different states; or the mechanism conditions on the state, but this information is not payoff relevant to (some types of) the agent. Our microfoundation in \autoref{sec:repeated} in fact deals with this last wrinkle: We show that even if the agent extracts less information than that in the calibrated information structure, she learns enough that her payoff is \emph{as if} she had access to the calibrated information structure.

We conclude this section by illustrating how our static solution concept captures the dynamics we alluded to in the introductory example:
\begin{example}[Selling a good under demand uncertainty]\label{example:cml} Consider again the example in the introduction, in which a buyer with binary values $\val\in\{1/2,1\}$ faces a seller who knows whether demand is high $(\state=H)$ or low $(\state=L)$. The left panel of \autoref{table:cml-calibration} describes the probabilities of trade and payments of the optimal (Myersonian) mechanism. In the introduction, we discussed this mechanism fails to extract full surplus in the long run as the buyer would quit the mechanism after seeing her allocation is $(1,3/2)$. We now describe this in the language of calibration.

The right panel of \autoref{table:cml-calibration} describes the information structure induced by the surplus extraction mechanism. Because in this mechanism the buyer's allocation does not depend on her values, we describe signals as allocations. The calibrated information structure is fully informative: when the state is $L$, the buyer sees signal $(1,0)$ with probability $1$, and when the state is $H$, she sees signal $(1,3/2)$ with probability $1$. 
\begin{table}[h!]
\centering
\begin{tabular}{lcc}
\begin{tabular}{c|cc}
&$\val=1/2$&$\val=1$\\
\hline
$\state=L$&$(1,0)$&$(1,0)$\\
$\state=H$&$(1,3/2)$&$(1,3/2)$
\end{tabular}&&
\begin{tabular}{c|cc}
&$(1,0)$&$(1,3/2)$\\
\hline
$\state=L$&$1$&$0$\\
$\state=H$&$0$&$1$
\end{tabular}
\end{tabular}
\caption{Trade probabilities and payments in optimal Myersonian mechanism (left); calibrated information structure (right). We describe signals as allocations, because the mechanism does not screen the buyer's values.}\label{table:cml-calibration}
\end{table}

When the buyer has access to the calibrated information structure before playing the mechanism, the surplus extraction mechanism does not satisfy the buyer's participation constraints, which must hold for each buyer value and each signal she observes. In particular, when the buyer sees signal $(1,3/2)$, she knows her payoff in the mechanism is negative and quits. Thus, the calibration constraint prevents the seller from extracting the buyer's surplus. In this case, the restriction induced by calibration endogenously provides the buyer with \emph{withdrawal rights}, which, as \cite{haberman2025auctions} show, prevent sellers from employing Cr\'emer-McLean-style schemes.\footnote{In \autoref{sec:main}, we provide an example in which when agents have access to the calibrated information structure the optimal Myersonian mechanism fails to be incentive compatible.}

Consider now the mechanism in the left panel of \autoref{table:opt-calibration}, which corresponds to posting a price of $1/2$ when the state is $L$ and a price of $1$ when the state is $H$. The right panel of \autoref{table:opt-calibration} depicts the calibrated information structure. Note that when the state is $H$, the information structure sends with probability $1$ the interim allocation rule $\{(1/2,(0,0)),(1,(1,1))\}$, representing that if the buyer reports her value is $1/2$ she gets nothing and pays nothing, whereas if her report is $1$, she obtains the good at a price of $1$.
\begin{table}[h!]
\centering
\begin{tabular}{ccc}
\begin{tabular}{c|cc}
&$\val=1/2$&$\val=1$\\
\hline
$\state=L$&$(1,1/2)$&$(1,1/2)$\\
$\state=H$&$(0,0)$&$(1,1)$
\end{tabular}&&
\begin{tabular}{c|cc}
&\{(1,1/2)\}&\{(1/2,(0,0)),(1,(1,1))\}\\
\hline
$\state=L$&$1$& $0$\\
$\state=H$&$0$&$1$
\end{tabular}
\end{tabular}
\caption{Trade probabilities and payments in optimal calibrated mechanism (left); calibrated information structure (right)}\label{table:opt-calibration}
\end{table}

Note that the mechanism is incentive compatible and individually rational when the buyer has access to the calibrated information structure. As the results that follow allow us to establish, this is indeed the optimal calibrated mechanism. 
\end{example}

\paragraph{Private value environments} A natural case to consider is that when agents' payoffs are state independent, that is,  for each agent $i$, the agent's utility function can be written as $\payi(\alloci,\typei)$. Under private values, the state describes either statistical information about the agents' types as in \autoref{example:cml}, or a payoff-relevant variable for the designer.
%
%

\autoref{theorem:pv} collects our main characterization result for this case. To state it, let $\mechanism_\full$ denote the following mechanism: For each $(\state,\cor)\in\States\times[0,1]$, $\mechanism_\full(\cdot,\state,\cor):\Types\to\Delta(\Alloc)$ is the designer optimal incentive compatible and individually rational direct mechanism when it is common knowledge that the state is \state. 
\begin{theorem}[Private values]\label{theorem:pv}
Under private values, the designer's payoff under the optimal calibrated mechanism is the same payoff he would obtain by choosing $\mechanism_\full$.
\end{theorem}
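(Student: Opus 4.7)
The plan is to prove the result via a matching upper and lower bound, both resting on a single reduction: under private values, the calibration incentive and participation constraints reduce to standard one-agent IC and IR constraints on the induced interim allocation rule $\csignali$ viewed as a direct mechanism for agent $i$ alone. Once this reduction is in place, the upper bound follows because every calibrated mechanism is state-wise dominated by $\mechanism_\full$, and the lower bound follows by verifying that $\mechanism_\full$ is itself calibrated.

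To establish the reduction, fix a calibrated mechanism $\mechanism$, an agent $i$, a type $\typei$, and a signal $\csignali$ on the support of $\cexpi$. By the tower property,
\begin{align*}
\mathbb{E}\bigl[\payi(\mechanism(\typebi,\typemi,\state,\cor),\typei)\,\big|\,\typei,\csignali\bigr]=\mathbb{E}\bigl[\mathbb{E}[\payi(\mechanism(\typebi,\typemi,\state,\cor),\typei)\mid\typei,\state,\cor]\,\big|\,\typei,\csignali\bigr].
\end{align*}
The inner expectation equals $\sum_{\alloci}\csignali(\alloci|\typebi)\payi(\alloci,\typei)$: calibration \eqref{eq:calibration} together with conditional independence of $\typemi$ from $\typei$ given $\state$ pins the marginal distribution of $\alloci$ given $(\typebi,\state,\cor)$ at $\csignali(\cdot|\typebi)$ uniformly across every $(\state,\cor)$ in the preimage of $\csignali$, while private values remove any residual $\state$-dependence from $\payi$. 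The outer expectation then collapses, so IC$(\typei,\csignali)$ is equivalent to the one-agent IC condition that $\typei$ maximize $\sum_{\alloci}\csignali(\alloci|\typebi)\payi(\alloci,\typei)$ over $\typebi$; an identical computation reduces IR$(\typei,\csignali)$ to the statement that $\csignali$ is individually rational at $\typei$ (using $\oo\in\Alloc$).

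With the reduction in hand, for any calibrated mechanism $\mechanism$ and any $(\state,\cor)$, the interim allocation rules induced by $\mechanism(\cdot,\state,\cor)$ are IC and IR, which under private values is precisely the statement that $\mechanism(\cdot,\state,\cor):\Types\to\Delta(\Alloc)$ is a valid IC and IR direct mechanism under common knowledge that the state is $\state$. By optimality of $\mechanism_\full$, the state-$\state$ expected payoff from $\mechanism$ is bounded above by that of $\mechanism_\full$, and integrating against $\prior$ and the uniform on $[0,1]$ yields the upper bound. Conversely, for $\mechanism_\full$ itself, every $(\state,\cor)$ consistent with a given signal $\csignali$ on the support of $\excdf_{\mechanism_\full,i}$ produces the same interim rule $\csignali$, and $\mechanism_\full(\cdot,\state,\cor)$ is IC and IR under common knowledge of $\state$ by construction; reversing the reduction gives IC$(\typei,\csignali)$ and IR$(\typei,\csignali)$ for every $\typei$, so $\mechanism_\full\in\cmechanisms$ and attains the bound.

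The main obstacle is the reduction in the multi-agent case: each agent observes only her own $\csignali$, not the full signal profile, so the conditional distribution over $(\state,\cor,\typemi)$ given $(\typei,\csignali)$ is nontrivial. Calibration saves the argument by pinning the marginal distribution of $\alloci$ at $\csignali(\cdot|\typebi)$ uniformly across the preimage of $\csignali$, and private values ensure that this marginal is all that matters for the agent's expected payoff\textemdash no averaging over the unobserved posterior on $\state$ disturbs the constraint.
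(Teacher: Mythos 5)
Your proof is correct and takes essentially the same route as the paper's: both arguments rest on the observation that under private values, calibration reduces the IC and IR constraints to $(\state,\cor)$-by-$(\state,\cor)$ constraints on each agent's interim allocation rule. The paper asserts this reduction directly in the form of the rewritten constraints and then concludes via separability of the program across $(\state,\cor)$; you make the tower-property step explicit and organize the conclusion as a matching upper bound and a verification that $\mechanism_\full\in\cmechanisms$, which is a cosmetic rather than substantive difference.
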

That is, in private values environments, the calibration constraint pushes the designer toward full transparency. In particular, in settings with transferable utility in which the designer possesses statistical information about the agents' types, \autoref{theorem:pv} implies the designer cannot engage in Cr\'emer-McLean style schemes under calibration, and hence extract full surplus. Whereas the implication of calibrated mechanism design in private values environments is powerful, the result is fairly intuitive: The designer benefits from making the mechanism opaque by pooling states inasmuch as it weakens the incentive or participation constraints of the agents. Under private values, however, agents' incentive constraints depend on the state only through the mechanism, and calibration imposes constraints on the mechanism state-by-state.\footnote{A tempting comparison is \citet[Prop. 11]{maskin1990principal}: with private values and quasilinear utilities, the informed principal's unique equilibrium payoff coincides with the state-by-state optimum. The authors show this conclusion depends on quasilinearity:  absent this assumption, an informed principal can benefit from concealing his information in the case of private values. Instead, \autoref{theorem:pv} relies neither on quasilinearity nor on the designer's lack of commitment.}

\section{Two-stage mechanisms}\label{sec:main}
In this section, we introduce an alternative representation of calibrated mechanisms that we use throughout our illustrations. We introduce it first for the case of a single agent and then for multiple agents.

\paragraph{Single-agent case and two-stage mechanisms} We find it instructive to first consider the case $N=1$, and for simplicity drop the subscripts $1$  from the notation. Consider a mechanism \mechanism\ and its calibrated information structure \cexp. When the agent of type \type\ observes signal \csignal, two things happen: On the one hand, the agent updates her prior, $\prior(\state|\type)$\footnote{Formally, 
\begin{align*}
\prior(\state|\type)=\frac{\prior(\state)\typed(\type|\state)}{\sum_{\stateb\in\States}\prior(\stateb)\typed(\type|\stateb)}.
\end{align*}}, to some belief $\belief(\type,\csignal)\in\Beliefs$. On the other hand, the agent learns that she faces allocation rule \csignal\ in the mechanism. Thus, her payoff in the mechanism when her type is \type, observes signal \csignal, and reports \typeb\ can be written as follows:
\begin{align}\label{eq:calibrated-payoff}
\mathbb{E}_{(\state,\cor)}\left[\pay(\mechanism(\typeb,\state,\cor),\type,\state)|(\type,\csignal)\right]=\sum_{\alloc\in\Alloc}\csignal(\alloc|\typeb)\left(\sum_{\state\in\States}\belief(\state|\type,\csignal)\pay\aarg\right).
\end{align}
In other words, the information structure \cexp\ provides the agent with all the necessary information to evaluate her payoffs in the mechanism: her belief about the state \emph{and} her allocation rule.  This allocation rule $\csignal:\Types\to\Delta(\Alloc)$ satisfies two properties. First,  because under calibration \csignal\ is the true interim allocation rule faced by the agent, she learns no further information about the state beyond that contained in $\belief(\type,\csignal)$. Second,  Equations \ref{eq:ic} and \ref{eq:ir} imply the allocation rule is incentive compatible and individually rational when the agent holds belief $\belief(\type,\csignal)$. 

The above discussion suggests an alternative representation of a calibrated mechanism, which we dub a two-stage mechanism and define as follows:
\begin{definition}[Two-stage mechanisms]\label{definition:ts}
A two-stage mechanism is a mapping $\tsmech:\Types\times\States\to\Delta(\Alloc\times\Beliefs)$ such that a Bayes plausible Blackwell experiment $\beliefr:\States\to\Delta(\Beliefs)$ and an allocation rule $\allocr:\Types\times\Beliefs\to\Delta(\Alloc)$ exist such that for  all $(\type,\state)\in\Types\times\States$ and all measurable subsets $\tilde\Delta\subset\Beliefs$,\footnote{We refer the reader to the appendix for our mathematical conventions, in particular, the definition of the corresponding $\sigma$-algebras.}
\begin{align*}
\tsmech(\{\alloc\}\times\tilde{\Delta}|\type,\state)=\int_{\tilde{\Delta}}\allocr(\alloc|\type,\belief)\beliefr(d\belief|\state).
\end{align*}
We say the two-stage mechanism is incentive compatible and individually rational if on the support of $\prior\otimes\beliefr$, the allocation rule $\allocr(\cdot|\cdot,\belief):\Types\to\Delta(\Alloc)$ is incentive compatible and individually rational conditional on the agent observing \belief.
\end{definition}
In a two-stage mechanism, the designer first discloses information about \state\ in the form of a belief $\belief$ about \States, and conditional on that belief\textemdash but not the state\textemdash offers a direct mechanism $\allocr(\cdot|\cdot,\belief):\Types\to\Delta(\Alloc)$. Two aspects of two-stage mechanisms are worth highlighting: First, the disclosure is type-independent. The designer discloses information to the agent without first communicating with the agent. Second, because the direct mechanism $\allocr(\cdot|\cdot,\belief)$ does not depend on \state, observing the allocation reveals no further information about the state.

Lastly, when we say the experiment \beliefr\ is Bayes plausible,  we mean that the distribution of posteriors induced by \beliefr\ has mean \prior, and hence we can interpret \belief\ as the designer's belief about the state conditional on observing \belief.\footnote{Formally, define the belief distribution induced by $\beliefr$, $\bsplit_\beliefr=\prior\otimes\beliefr$. The claim is that $\mathbb{E}_{\bsplit_\beliefr}\left[\belief\right]=\prior$.} Whereas the designer and the agent do not necessarily have the same beliefs about the state, the agent's beliefs about the state conditional on observing \belief\ obtain from a known transformation from those of the designer \citep{alonso2016bayesian,laclau2017public}. Thus, ensuring Bayes plausibility with respect to \prior\ suffices.

\autoref{theorem:cmd-two-stage} shows that (incentive compatible and individually rational) calibrated mechanisms and two-stage mechanisms implement the same distributions over outcomes $\outcome\in\Delta(\Arg)$:
\begin{theorem}[Two-stage and calibrated mechanisms]\label{theorem:cmd-two-stage}Suppose $N=1$. An outcome distribution $\outcome\in\Delta\left(\Arg\right)$ is implementable by an incentive compatible and individually rational calibrated mechanism if and only if it is implementable by an incentive compatible and individually rational two-stage mechanism. That is, if and only if
\begin{align}\label{eq:ts-impl-0}
\outcome\aarg=\prior(\state)\typed(\type|\state)\int_{\Beliefs}\allocr(\alloc|\type,\belief)\beliefr(d\belief|\state),
\end{align}
for some Bayes plausible $\beliefr:\States\to\Delta(\Beliefs)$ and incentive compatible and individually rational $\allocr:\Types\times\Beliefs\to\Delta(\Alloc)$.
\end{theorem}
The proof of this and all results in this section can be found in \autoref{appendix:main}. 

In the single-agent case, \autoref{theorem:cmd-two-stage} shows that the calibrated mechanism design problem is equivalent to a standard mechanism design problem in which we restrict the designer to using a specific class of mechanisms; namely, incentive compatible and individually rational two-stage mechanisms. As we explained above, a mechanism \mechanism\ and its calibrated information structure \cexp\ can be seen as actually inducing a joint distribution over \Argb. \autoref{theorem:cmd-two-stage} implies this joint distribution admits two conditional independence properties. First, the allocation is conditionally independent of the state, conditional on the agent's type and the induced belief.\footnote{This is a consequence of Bayes rule: beliefs are a sufficient statistic for \state. Hence, conditional on $(\type,\belief)$, the allocation rule carries no more information about the state.} This follows from the signals \csignal\ carrying no further information about the state than that what is contained in the agent's belief.  Second, the designer disclosed belief is conditionally independent of the agent's type conditional on the state. In the static setting of \autoref{sec:model}, this is because the calibrated information structure discloses information to the agent uniformly across her types. In the dynamic setting of \autoref{sec:repeated}, this type-independent disclosure arises endogenously because the agent's experimentation opportunities are independent of her type.

\paragraph{Two-stage mechanisms solve calibrated mechanism design} \autoref{theorem:cmd-two-stage} is of practical import as it provides a recipe of sorts for characterizing the designer's optimal calibrated mechanism (see the applications in \autoref{sec:app}). For each $\belief\in\Beliefs$, the designer chooses a mechanism $\allocr(\cdot|\cdot,\belief):\Types\to\Delta(\Alloc)$ that maximizes his expected payoff when the designer believes \belief\ is the distribution of  states, and subject to the agent's incentive compatibility and individually rational constraints conditional on the designer's belief being \belief. Proceeding in this way, we obtain the designer's value function $\DP:\Beliefs\to\reals$. The optimal Blackwell experiment obtains from the concavification of \DP. We illustrate this procedure with two examples: 

\setcounter{example}{0}
\begin{example}[continued] Consider again the seller-buyer example, in which the buyer is privately informed about her value for the good and the seller knows the demand state. By \autoref{theorem:cmd-two-stage}, we can find the seller's optimal calibrated mechanism as follows. First, equate \belief\ with the probability that the state is $H$. For each $\belief\in[0,1]$, consider the following problem:
\begin{align}\label{eq:cmd-cml}
\DP(\belief)\equiv\max_{(q,t):V\to[0,1]\times\reals}&\belief\left(\frac{2}{3}t(1)+\frac{1}{3}t(1/2)\right)+(1-\belief)\left(\frac{1}{3}t(1)+\frac{2}{3}t(1/2)\right)
\\
\text{s.t.}&\left\{\begin{array}{ll}(\forall \val\in\{1/2,1\})&\val q(\val)-t(\val)\geq0\\
(\forall\val,\valb\in\{1/2,1\},\val\neq\valb)&\val q(\val)-t(\val)\geq\val q(\valb)-t(\valb)\end{array}\right.\nonumber.
\end{align}
That is, the seller chooses an incentive compatible and individually rational selling mechanism that maximizes his expected revenue when his belief is \belief. Because \state\ is not payoff relevant to the buyer\textemdash it is just statistical information about the buyer's valuation\textemdash\emph{and} the mechanism does not depend on state, the buyer's belief about \state\ does not enter her incentive constraints. 

The solution to the seller's problem in \autoref{eq:cmd-cml} is simple: the seller posts a price of $1/2$ when $\belief\leq 1/2$ and a price of $1$ when $\belief>1/2$. Hence, the seller's value function is given by
\begin{align*}
\DP(\belief)=\max\left\{\frac{1}{2},\belief\frac{2}{3}+(1-\belief)\frac{1}{3}\right\},
\end{align*}
and is illustrated by the solid line in blue on \autoref{fig:cmd-cml}. In words, the seller either sells the good at a price of $1/2$ and the buyer buys with probability 1, or he sells the good at a price of $1$ and the buyer buys whenever her value is $1$, which happens with the probability in the second argument of the $\max$. 

\begin{figure}[th!]
\centering
\scalebox{0.75}{%
\begin{tikzpicture}
\begin{axis}[xmin=0,xmax=1,ymin=0,ymax=1,xtick={1/2},xticklabels={$\prior$},xticklabel style={below},ytick={0.5},yticklabels={0.5},yticklabel style={left},xlabel=$\belief$,ylabel=$\DP$,x label style={at={(axis description cs:1,-0.01)}},
    y label style={at={(axis description cs:-0.01,1)},rotate=-90}
    ,width=9cm,height=9cm]
    \addplot[domain=0:1/2,blue,thick]{0.5};
\addplot[domain=1/2:1,blue,thick,forget plot]{(1/3)*(1+x)};
\addlegendentry{$\DP$}
\addplot[dashed, thick,red]{(1/6)*x+1/2};
\addlegendentry{$\cav\;\DP$}
            \addplot[black,thick,dashed] coordinates {(0.5,0) (0.5,7/12)};
\end{axis}
\end{tikzpicture}}
\caption{Seller's payoff in \autoref{example:cml}.}\label{fig:cmd-cml}
\end{figure}

The optimal calibrated mechanism can be read from the concavification of \DP, which is the dashed, red line in \autoref{fig:cmd-cml}: The seller first reveals the state to the agent, and offers a price of $1/2$ when $\state=L$ and a price of $1$ when $\state=H$.
\end{example}

\autoref{example:cml} illustrates a more general principle that provides additional intuition for \autoref{theorem:pv}. In the private values case and when $N=1$, the designer's value function $\DP:\Beliefs\to\reals$ is convex. As \autoref{eq:cmd-cml} illustrates, the designer maximizes a linear function in beliefs subject to constraints that do not depend on the induced belief. Convexity of \DP\ implies full disclosure is (weakly) optimal, and \autoref{theorem:pv} follows.

\begin{example}[Horizontal differentiation]\label{example:horizontal} Consider a seller who owns a good of unknown type, $\state\in\{L,R\}$, and a buyer whose private information is indexed by $\Types=\{\type_1,\type_2,\type_3\}$. Assume the good's type (the state) and the buyer's types are independent, and equally likely. \autoref{table:values} describes the buyer's value for the seller's good as a function of hers and the good's type, $\val(\type,\state)$. When the good is $\state=L$, the buyer of type $\type_3$ has the highest value for the good, whereas when the good is $\state=R$, the buyer of type $\type_3$ has the lowest value for the good. 

\begin{table}[h!]
\centering
\begin{tabular}{c|ccc}
&$\type_1$&$\type_2$&$\type_3$\\
\hline
$\state=L$&$1$&$2$&3\\
$\state=R$&$2$&$2$&1
\end{tabular}
\caption{Buyer's values.}\label{table:values}
\end{table}

Suppose the buyer's utility is quasilinear, that is, $\pay(q,t,\type,\state)=q\val(\type,\state)-t$, and the seller wishes to maximize his revenue. Furthermore, assume the buyer's outside option is no trade.

Consider first the optimal mechanism the designer would offer absent the calibration constraint, depicted in the top panel of \autoref{table:horizontal-myerson}. This mechanism asks types $\type_2$ and $\type_3$ for a payment of $2$ and allocates the good with probability 1, regardless of its kind. Instead, it asks the buyer of $\type_1$ to pay $1$ in exchange for getting the good only when it is of her favorite kind $(\state=R)$. 

\begin{table}[h!]
\centering

\begin{tabular}{c|ccc}
&$\type_1$&$\type_2$&$\type_3$\\
\hline
$\state=L$&$(0,1)$&$(1,2)$&(1,2)\\
$\state=R$&$(1,1)$&$(1,2)$&(1,2)
\end{tabular}
\vspace{0.75cm}

\begin{tabular}{c|cc}
&$\{(\type_1,(0,1)),(\type_2,(1,2)),(\type_3,(1,2))\}$&$\{(\type_1,(1,1)),(\type_2,(1,2)),(\type_3,(1,2))\}$\\
\hline
$\state=L$&$1$&0\\
$\state=R$&$0$&1
\end{tabular}
\caption{ Trade probabilities and transfers in the optimal  mechanism (top); calibrated information structure (bottom).}\label{table:horizontal-myerson}
\end{table}

The bottom panel of \autoref{table:horizontal-myerson} depicts the information structure calibrated to the optimal mechanism. It sends two signals: when the good is $L$, the buyer can choose to either not get the good and pay $1$, or get the good and pay $2$. Instead, when the good is $R$, the buyer is choosing between paying $1$ or $2$ to obtain the good with probability 1. 

Under the calibrated information structure, the optimal mechanism is neither incentive compatible nor individually rational. When the good is $R$, the buyer would prefer to choose $(1,1)$ \emph{regardless} of her type. Instead, when the good is $L$, the buyer of $\type_1$ would quit the mechanism instead of paying $1$ and getting nothing.

To characterize the optimal calibrated mechanism, we rely again on two-stage mechanisms. Equate $\belief$ with the probability that the good is $R$. Note that because states and types are independent, if the seller assigns probability \belief\ to the state being $R$, so does the buyer (and vice versa). For each $\belief\in[0,1]$, the seller solves the following problem
\begin{align}\label{eq:cmd-horizontal}
\DP(\belief)\equiv&\max_{(q,t):\Types\to[0,1]\times\reals}\sum_{\type\in\Types}\frac{1}{3}t(\type)\\
\text{s.t.}&\left\{\begin{array}{ll}(\forall \type\in\{\type_1,\type_2,\type_3\})&q(\type)\mathbb{E}_{\belief}\val(\type,\cdot)-t(\type)\geq0\\
(\forall \type,\typeb\in\{\type_1,\type_2,\type_3\},\typeb\neq\type)&q(\type)\mathbb{E}_{\belief}\val(\type,\cdot)-t(\type)\geq q(\typeb)\mathbb{E}_{\belief}\val(\type,\cdot)-t(\typeb)\end{array}\right.\nonumber.
\end{align}

In this case, the seller's  objective function does not depend on the induced belief $\belief$ as types and states are independent. Instead, the buyer's incentive and individual rationality constraints do depend on $\belief$ as the state is payoff relevant. The solution to the problem in \autoref{eq:cmd-horizontal} is a posted price, whose value depends on \belief. For instance, when $\belief\in\{0,1\}$, the optimal price is $2$ and the seller's revenue is $4/3$. Instead, when $\belief=2/3$, the optimal price is $5/3$ and profits are maximal and equal to $5/3$. Indeed, when $\belief=2/3$, the heterogeneity across buyer types is minimized (and hence, their rents), and by setting $p=5/3$ all buyer types buy. The blue line in \autoref{fig:cmd-horizontal} depicts the seller's expected profit as a function of his belief \belief.

\begin{figure}[th!]
\centering
\scalebox{0.75}{
\begin{tikzpicture}
\begin{axis}[xmin=0,xmax=1,ymin=1,ymax=2,xtick={1/2},xticklabels={\prior},yticklabels={},xticklabel style=below,xlabel=$\belief$,ylabel=$\DP(\belief)$,x label style={at={(axis description cs:1,-0.01)}},
    y label style={at={(axis description cs:-0.01,1)},rotate=-90}
    ,width=9cm,height=9cm]
    \addplot[thick,blue,forget plot,domain=0:1/3]{4/3};
        \addplot[thick,blue,forget plot,domain=1/3:2/3]{1+x};
    \addplot[thick,blue,forget plot,domain=2/3:7/8]{3-2*x};
        \addplot[thick,blue,domain=7/8:1]{2/3+(2/3)*x};
            \addlegendentry{$\DP(\belief)$}
            \addplot[thick,dashed,red,forget plot,domain=0:2/3]{0.5*x+4/3};
            \addplot[thick,dashed,red,domain=2/3:1]{-x+7/3};
            \addlegendentry{$\cav\;\DP(\belief)$}
            \addplot[black,thick,dashed] coordinates {(0.5,0) (0.5,19/12)};
\end{axis}
\end{tikzpicture}}
\caption{Seller's profit in the two-stage mechanism}\label{fig:cmd-horizontal}
\end{figure}
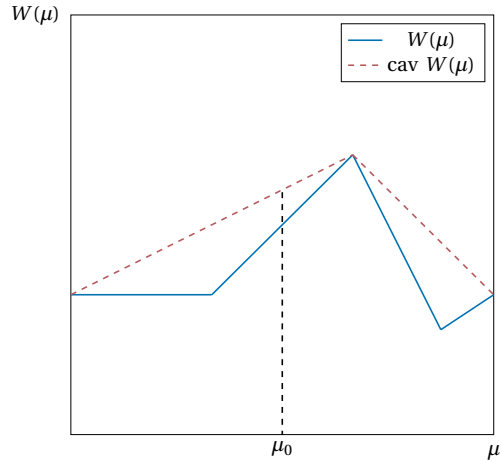

The optimal calibrated mechanism can be read from the concavification of \DP\ at $\prior=1/2$, depicted by the dashed red line in \autoref{fig:cmd-horizontal}. 
The seller provides the buyer with partial information about the good: He either reveals the good is $L$ and sells the good at a price of $2$, or he obfuscates the good\textemdash inducing a belief of $2/3$\textemdash and sets a price of $5/3$.
\end{example}

Another consequence of \autoref{theorem:cmd-two-stage} is that without loss of generality, we can focus on calibrated mechanisms with finite calibrated information structures:
\begin{corollary}[Support of calibrated information structures]
It is without loss of generality to restrict attention to two-stage mechanisms that induce at most $|\States|$ beliefs.
\end{corollary}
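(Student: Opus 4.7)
The plan is to combine \autoref{theorem:cmd-two-stage} with a Carathéodory-style argument. By \autoref{theorem:cmd-two-stage}, implementable outcomes under calibrated mechanisms coincide with those induced by incentive compatible and individually rational two-stage mechanisms $(\beliefr,\allocr)$. For each belief $\belief\in\Beliefs$, let $\DP(\belief)$ denote the designer's maximum payoff over IC and IR allocation rules $\allocr(\cdot|\cdot,\belief):\Types\to\Delta(\Alloc)$; this maximum is attained because $\Alloc$, $\Types$, and $\States$ are finite, so the feasible set is a compact polytope in $\Delta(\Alloc)^\Types$ and the designer's objective is linear. Choosing the maximizer pointwise at each belief, the designer's payoff under any two-stage mechanism equals $\int_\Beliefs\DP(\belief)d\bsplit_\beliefr$, so the problem reduces to
\[
\sup\Bigl\{\int_\Beliefs\DP(\belief)\,d\bsplit\;:\;\bsplit\in\Delta(\Beliefs),\;\int_\Beliefs\belief\,d\bsplit=\prior\Bigr\}.
\]

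This is an infinite-dimensional linear program in the measure $\bsplit$, of the standard concavification form. The key observation is that the moment constraint $\int_\Beliefs\belief\,d\bsplit=\prior$ consists of $|\States|$ scalar equations, and these already imply the normalization $\int d\bsplit=1$ because both $\belief$ and $\prior$ sum to one across states. Hence the feasibility set admits only $|\States|$ independent linear equality constraints. Standard extreme-point characterizations of moment-constrained probability measures---equivalently, Carathéodory's theorem applied to the graph $\{(\belief,\DP(\belief)):\belief\in\Beliefs\}$ in the $|\States|$-dimensional affine space $\Beliefs\times\reals$, combined with the observation that the optimal value $(\prior,\cav\DP(\prior))$ lies on the upper boundary of the convex hull---imply that extreme points of this feasibility set are supported on at most $|\States|$ beliefs.

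The main obstacle is the mildly technical step of guaranteeing the supremum is attained at a finitely supported extreme point. This follows because $\DP$ is upper semi-continuous by the Maximum Theorem applied to the continuous, compact-valued IC-IR correspondence, and the feasibility set of measures is weak$^*$ compact since $\Beliefs$ is compact. Attaching to each $\belief\in\mathrm{supp}(\bsplit^*)$ the pointwise-optimal IC-IR allocation rule $\allocr(\cdot|\cdot,\belief)$ yields a two-stage mechanism that achieves the optimum while inducing at most $|\States|$ beliefs, completing the proof.
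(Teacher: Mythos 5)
Your proof is correct and is essentially the argument the paper has in mind: Theorem~\ref{theorem:cmd-two-stage} reduces the problem to concavifying the value function $\DP$, and the standard Carath\'eodory/Winkler bound on Bayes-plausible posterior distributions then gives at most $|\States|$ posteriors. Two small remarks on execution: (i) the moment constraint $\int\belief\,d\bsplit=\prior$ does imply $\int d\bsplit=1$ by summing over states, so the right way to phrase the count is that among the $|\States|$ moment equations one is redundant given normalization, leaving $|\States|-1$ effective constraints and hence (via Winkler's extreme-point theorem, $n+1$ points for $n$ constraints) support of size at most $|\States|$; (ii) your refined-Carath\'eodory-on-the-upper-boundary argument needs the hypograph, not the graph, to be closed when $\DP$ is only USC, and this is automatic here, but the Winkler route you also cite is cleaner and does not require that care. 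You should also note, for completeness, that a measurable selection of the pointwise-optimal $\allocr(\cdot|\cdot,\belief)$ exists, which is a standard Kuratowski--Ryll-Nardzewski step given that the argmax correspondence is measurable with compact values.
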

In other words, it is without loss of generality to focus on calibrated mechanisms that induce at most $|\States|$ \emph{allocation rules}.

\paragraph{Multiple agents and generalized two-stage mechanisms} In the case of multiple agents, we can also interpret a calibrated mechanism as conveying to each agent $i$ both the information she should have about the state upon seeing signal \csignali, $\belief_i(\typei,\csignali)$, \emph{and} her  interim allocation rule, $\csignali:\Typesi\to\Delta(\Alloci)$. However, two differences arise relative to the single-agent case: First, each agent $i$ receives her information privately from that of other agents. Second, even if the agents put together the information they receive, this is not enough to learn the ex-post allocation rule, that is, the map from type profiles to allocations. After all, each agent $i$ observes her  interim allocation rule alone. These differences are natural when we think of calibrated mechanisms as capturing the information agents stand to learn from experimenting with the mechanism:   There is no reason all agents will learn the same information, and from observing her own allocations, and not those of others, an agent can only learn about her interim allocation rule, not the ex-post one.

These observations together imply that to describe the analogue of a two-stage mechanism in multi-agent settings we need to (i) allow for agent-by-agent information disclosure, and (ii) keep track that the interim allocation rules are consistent with the same ex-post allocation rule. These considerations motivate the following generalization of a two-stage mechanism:

\begin{definition}[Generalized two-stage mechanism]\label{definition:ts-gral}A generalized two-stage mechanism is a mapping $\tsmech:\Types\times\States\to\Delta(\Beliefs^N\times\Alloc)$ for which a tuple of mappings
\begin{align*}
\beliefr:\States\to\Delta(\Beliefs^N),\;\;\;\allocr_i:\Typesi\times\Beliefs\to\Delta(\Alloci),\;\;\; \allocr:\Types\times\States\times\Beliefs^N\to\Delta(\Alloc),
\end{align*}
exist such that:
\begin{enumerate}
\item For all $(\type,\state)\in\Types\times\States$, and all measurable subsets $(\tilde{\Delta}_i)_{i=1}^N\subset\Beliefs^N$, we have
\begin{align*}
\tsmech(\times_{i=1}^N\tilde{\Delta}_i\times\{\alloc\}|\type,\state)=\int_{\times_{i=1}^N\tilde{\Delta}_i}\allocr(\alloc|\type,\state,\belief_1,\dots,\belief_N)\beta(d(\belief_1,\dots,\belief_N)|\state)
\end{align*}

\item The Blackwell experiment $\beta$ is Bayes plausible,
\item\label{itm:gts-iar} For all $i\in \{1,\dots,N\}$, the interim allocation rule $\allocr_i$ satisfies that for all measurable subsets $\tilde{\Delta}$ of \Beliefs\ and all $\argi\in\Argi$
\begin{align*}
\int_{\tilde\Delta\times\Beliefs^{N-1}}\left\{\allocr_i(\alloci|\typei,\belief_i)-\mathbb{E}_{\typedmi(\cdot|\state)}\left[\sum_{\allocmi\in\Allocmi}\allocr(\alloci,\allocmi|\typei,\typemi,\state,\belief_i,\belief_{-i})\right]\right\}\beta(d(\belief_i,\belief_{-i})|\state)=0.
\end{align*}
\end{enumerate}
We say the generalized two-stage mechanism is incentive compatible and individually rational if for all $i\in\{1,\dots,N\}$, on the support of $\prior\otimes\beliefr$, $\allocr_i(\cdot|\cdot,\belief_i)$ is incentive compatible and individually rational for agent $i$ when she learns $\belief_i$.
\end{definition}
As anticipated, generalized two-stage mechanisms differ from two-stage mechanisms in three ways when $N>1$. First, because disclosures are private, the experiment \beliefr\ now outputs a profile of beliefs, one for each agent. As shown in \cite{arieli2024feasible}, \beliefr\ is Bayes plausible if and only if for each agent $i$, the marginal Blackwell experiment $\beliefr_i$ is Bayes plausible. Second, while the individual interim allocation rule $\allocr_i$ only depends on the disclosed belief to agent $i$, $\belief_i$, and not the state, the ex-post allocation rule $\allocr$ may depend on the state, even conditional on the belief profile $(\belief_1,\dots,\belief_N)$. The reason is that this belief profile is no longer a sufficient statistic for the ex-post allocation rule as each agent $i$ only observes their interim allocation. Third and relatedly, we need to keep track of both the interim allocation rules $(\allocr_i)_{i=1}^N$ and the ex-post allocation rule $\allocr$ to check that the interim allocation rules are consistent with the same mechanism. An interesting question for future work would be to characterize which interim allocation rules $(\allocr_i)_{i=1}^N$ are consistent with some ex-post allocation rule \allocr, so that one could focus on the interim allocation rules alone.

As we show in \autoref{theorem:cmd-two-stage-multi}, a calibrated mechanism induces a generalized two-stage mechanism:
\begin{proposition}\label{theorem:cmd-two-stage-multi} If outcome distribution $\outcome\in\Delta\left(\Arg\right)$ is implementable by an incentive compatible and individually rational calibrated mechanism, then it is implementable by an incentive compatible and individually rational generalized two-stage mechanism.
\end{proposition}
In contrast to the single-agent case, not every outcome distribution implemented by a generalized two-stage mechanism can be implemented by a calibrated mechanism. On the one hand, no agent's beliefs are a sufficient statistic for the information the mechanism leaks about the state, so that the allocation rule $\bar\allocr$ may still leak information about the state or others' beliefs, which in turn leak information about the state. On the other hand, because in a calibrated mechanism each agent learns her interim allocation rule conditional on $(\state,\cor)$, the incentive and participation constraints associated to a generalized two-stage mechanism are weaker than those implied by a calibrated mechanism whenever multiple interim allocation rules underlie the same belief: Even if the average interim allocation rule $\allocr_i$ is incentive compatible and individually rational, each of the interim allocation rules underlying that average need not be.

\section{Applications}\label{sec:app}
In this section, we study optimal calibrated mechanism design in canonical mechanism design settings with quasilinear utilities. We first consider the case of a single agent, with single-dimensional types and allocations, and supermodular payoffs. In \autoref{sec:transparent}, we show that if the order of types is state independent, then optimal two-stage mechanisms fully reveal the state, whereas this conclusion can be reversed when the order of types is state-dependent. In \autoref{sec:myerson}, we compare optimal calibrated mechanism design against the Myersonian benchmark. Lastly, we analyze a multi-agent application in \autoref{sec:optimal-auction}.

\subsection{Calibrated Screening}\label{sec:transparent}
We consider the following version of the model in \autoref{sec:model}. Suppose $N=1$ and let $\Types=[\mint,\maxt]$ denote the set of types. Assume \type\ is distributed according to a full support distribution \typecdf\ with density \typed. Hence, throughout, we consider the case in which the agent's type is independent of \state. Denote the set of allocations by $\Alloc=[0,\maxq]\times\reals$, where $q\in[0,\maxq]$ is the (physical) allocation and $t\in\reals$ is a payment from the agent to the designer.\footnote{In contrast to the model of \autoref{sec:model}, we are assuming the set of types and allocations to be intervals in the real line. The results in the previous sections go through with richer type and allocation spaces, at the cost of more notation.}

The agent's and the designer's payoffs are given by $\pay\argq-t$ and $\ddp\argq+t$, respectively. Assume that if the agent does not participate, then the outside option is $\oo=(0,0)$, and that this yields a payoff of $0$ to both the designer and the agent. Throughout, we assume that for each $\state\in\States$, the family of functions $\left\{\type\mapsto\pay\argq:q\in[0,\maxq]\right\}$ is equi-Lipschitz on \Types: a positive constant $L_\state$ exists such that for all $\type,\typeb\in\Types$ and $q\in[0,\maxq]$, $|\pay(q,\type,\state)-\pay(q,\typeb,\state)|\leq L_\state|\type-\typeb|$.\footnote{This assumption ensures the Lipschitz continuity of the agent's indirect utility function when the allocation space is infinite. Because \States\ is finite, requiring the condition to hold state-by-state suffices.} Furthermore, the analysis that follows restricts attention to mechanisms that do not randomize on the allocation (beyond the inherent randomness of $\States\times[0,1]$). \autoref{remark:deterministic} at the end of this section discusses settings in which this is not a restriction and how to generalize the observations herein when random allocations are allowed.

Our goal is to characterize the designer optimal calibrated mechanism and how its properties depend on how the state affects the order of types.

\paragraph{State-independent type ranking} We consider first the case in which the order of types is independent of the state. Formally, assume that for all $\state\in\States$, the function $\pay(\cdot,\state)$ is supermodular in $(q,\type)$. That is, in all states, the agent with higher value of \type\ values $q$ more.  
These assumptions are satisfied, for instance, for $\pay\argq=\type\state q$ or $\pay\argq=(\type+\state)q$. 
   
 By \autoref{theorem:cmd-two-stage}, we can characterize the optimal calibrated mechanism via two-stage mechanisms. To do so, we solve the problem ``backward'': For each $\belief\in\Beliefs$ the designer may induce about the state, the designer chooses an optimal direct mechanism $(q_\belief,t_\belief):\Types\to\Alloc$. This determines the designer's value function $\DP:\Beliefs\to\reals$. We obtain the designer's optimal Blackwell experiment  by studying the properties of \DP.

Given belief $\belief$, define the agent's and the designer's (expected) payoff at $(q,t,\type)$ as follows:
\begin{align*}
    \pay(q,\type|\belief)&\equiv\sum_{\state\in\States} \belief(\state)\pay\argq,\;\;
    \ddp(q,\type|\belief)\equiv\sum_{\state\in\States} \belief(\state)\ddp\argq.
\end{align*}
Thus, conditional on inducing belief \belief, the designer's problem can be written as follows:
\begin{align}
\DP(\belief)\equiv\max_{(q,t):\Types\to\Alloc}&\int_{\Types}\left[\ddp(q(\type),\type,\belief)+t(\type)\right]\typecdf(d\type)\\
\text{s.t.}&\left\{\begin{array}{ll}(\forall\type\in\Types)&\pay(q(\type),\type|\belief)-t(\type)\geq0\\
(\forall\type,\typeb\in\Types)&\pay(q(\type),\type|\belief)-t(\type)\geq\pay(q(\typeb),\type|\belief)-t(\typeb)
\end{array}\right..\nonumber
\end{align}
Our assumptions imply that $\pay(\cdot|\belief)$ is supermodular in $(q,\type)$. It follows that the designer can only choose among those $q:\Types\to[0,\maxq]$ that are (weakly) increasing in \type. Let \Qup\ denote the set of all such $q(\cdot)$. Furthermore, at the optimum, the participation constraint of $\type=\mint$ binds. 

Define the virtual surplus at $\argq$ as follows:
\begin{align*}
 \vs(\argq;\typecdf)= \ddp\argq+\pay\argq-\pay_2\argq\frac{1-\typecdf(\type)}{\typed(\type)},
\end{align*}
where $\pay_2$ is the derivative of \pay\ against its second coordinate; the equi-Lipschitz assumption implies it exists almost everywhere. Then, conditional on inducing belief \belief, the designer's payoff can be written as follows:
\begin{align}\label{eq:dp-up}
\DP(\belief)=\max_{q\in\Qup}\int_{\Types}\expect_{\belief}\left[ \vs((q(\type),\type,\state);\typecdf)\right]\typecdf(d\type).
\end{align}
Note the objective is linear in \belief\ and the constraint set is independent of \belief. We conclude that \DP\ is convex, as it is the maximum of linear functionals in \belief. It follows that full disclosure is an optimal experiment for the designer. Equivalently, an optimal calibrated mechanism exists in which the designer chooses the mechanism $\mechanism_\full^D$, where for all $\state\in\States$, $\mechanism_{\full}^D(\cdot,\state,\cdot):\Types\times[0,1]\to\Alloc$ is the optimal deterministic mechanism when it is common knowledge that the state is \state.

\autoref{prop:transparent} summarizes the above discussion:
\begin{proposition}[State-Independent Type Ranking]\label{prop:transparent}
In a single-dimensional screening problem with state-independent type ranking, the designer can do no better than choosing $\mechanism_\full^D$ among deterministic mechanisms.
\end{proposition}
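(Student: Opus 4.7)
The strategy is to invoke \autoref{theorem:cmd-two-stage} and then reduce the problem to a standard concavification argument. By \autoref{theorem:cmd-two-stage}, choosing an optimal calibrated mechanism is equivalent to choosing a Bayes-plausible experiment $\beliefr:\States\to\Delta(\Beliefs)$ together with, for each induced posterior $\belief$, an incentive compatible and individually rational direct mechanism $(q_\belief,t_\belief):\Types\to\Alloc$ (with $q_\belief\in\Qup$ under our focus on deterministic allocations). Writing the designer's expected payoff as $\expect_{\bsplit_\beliefr}[\DP(\belief)]$ where $\DP$ is the inner value function displayed in \autoref{eq:dp-up}, it suffices to show (i) $\DP$ is convex, and (ii) full disclosure attains $(\cav\;\DP)(\prior)$ and under full disclosure the inner problem at belief $\delta_\state$ coincides with the complete-information problem solved by $\mechanism_\full^D(\cdot,\state,\cdot)$.

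For step (i), I would argue as sketched in the text preceding the statement. The crucial observation is that state-independent ranking — supermodularity of $\pay(\cdot,\cdot,\state)$ in $(q,\type)$ for every $\state$ — is preserved under expectation over $\state$, so $\pay(\cdot,\cdot|\belief)$ is supermodular for every $\belief$; hence monotonicity of $q$ is necessary and sufficient for implementability, and the admissible set $\Qup$ does not depend on $\belief$. Standard envelope reasoning (legitimated by the equi-Lipschitz assumption, which makes the agent's indirect utility Lipschitz and thus absolutely continuous in $\type$) together with the binding participation constraint at $\mint$ pins down transfers, and substitution yields \autoref{eq:dp-up}. Since the integrand $\vs((q(\type),\type,\state);\typecdf)$ is linear (in fact affine) in $\belief$ for each fixed $q$, the function $\DP$ is the pointwise supremum over $q\in\Qup$ of affine functions of $\belief$, hence convex.

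For step (ii), convexity of $\DP$ implies $(\cav\;\DP)(\prior)=\sum_{\state}\prior(\state)\DP(\delta_\state)$: fully revealing $\state$ is Bayes-plausible and achieves the upper bound given by Jensen's inequality applied to any concave majorant. It remains to identify the inner optimum at $\belief=\delta_\state$. At that belief $\DP(\delta_\state)=\max_{q\in\Qup}\int_{\Types}\vs((q(\type),\type,\state);\typecdf)\typecdf(d\type)$, which is precisely the virtual-surplus representation of the designer-optimal deterministic incentive compatible and individually rational mechanism under common knowledge of the state \state. This inner optimum is therefore implemented by $\mechanism_\full^D(\cdot,\state,\cdot)$, and concatenating across states yields $\mechanism_\full^D$. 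Taken together, these two steps deliver the proposition.

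\textbf{Main obstacle.} Most of the work has been done in setting up \autoref{theorem:cmd-two-stage}; the only non-routine piece is justifying the reduction to virtual surplus in \autoref{eq:dp-up} when types and allocations are continua. The equi-Lipschitz hypothesis is precisely what makes the envelope integration rigorous, so the argument amounts to invoking the standard Myersonian derivation belief-by-belief and then checking that the resulting constraint set $\Qup$ is indeed belief-independent — a direct consequence of supermodularity being inherited by convex combinations in $\state$.
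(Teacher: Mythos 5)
Your proposal is correct and follows essentially the same approach as the paper: reduce to two-stage mechanisms via Theorem~\ref{theorem:cmd-two-stage}, observe that state-by-state supermodularity implies $\pay(\cdot,\cdot|\belief)$ is supermodular for every $\belief$ so the constraint set $\Qup$ is belief-independent, write $\DP$ as a maximum over $\Qup$ of a $\belief$-affine virtual-surplus functional (hence $\DP$ is convex), and conclude that full disclosure is optimal with the inner optimum at each $\delta_\state$ being the complete-information optimal mechanism. The paper's ``proof'' is exactly the discussion preceding the proposition, and your steps mirror it point for point, including the remark that the equi-Lipschitz hypothesis licenses the envelope derivation.
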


By \autoref{prop:transparent}, in screening problems with state-independent ranking of types across states, the calibration constraint makes any pooling of mechanisms across states unprofitable.\footnote{The calibrated mechanism which state-by-state implements the optimal direct mechanism under common knowledge of the state may reveal less than full information about the state, e.g., because at \state\ and \stateb\ the same direct mechanism is optimal. The point is that pooling those states does not weaken the incentive constraints of the agent, so it is as if the designer were forced to reveal the state.} Remarkably, this result holds for any designer objective, such as profit, revenue, or efficiency. It also requires no regularity assumptions on the type distribution, as we do not obtain the result by looking at the relaxed problem. Instead, our argument relies on the restriction to deterministic mechanisms (conditional on the induced belief), which in turn delivers that the set of implementable allocations does not depend on the induced belief. 
 \autoref{remark:deterministic} discusses conditions under which (i) the restriction to deterministic mechanisms is without loss of optimality, and (ii) the set of implementable allocations does not depend on the induced belief, even when randomized mechanisms are allowed. Readers interested in the case of state-dependent ranking can skip this remark with little loss of continuity.

\begin{remark}[\autoref{prop:transparent} without deterministic mechanisms]\label{remark:deterministic}
Under our assumptions, deterministic mechanisms are without loss of optimality if the agent's payoff is linear in $q$ and the designer's payoff is concave in $q$. (See \autoref{sec:myerson} for yet another condition.) However, the driving force behind \autoref{prop:transparent} is that the designer's constraint set does not depend on the induced belief. The state-by-state supermodularity assumption and the restriction to deterministic mechanisms is \emph{one} way to ensure this is the case. We now discuss two other cases in which the designer's constraint set does not depend on the induced belief and thus \autoref{prop:transparent} holds for the optimal (not necessarily deterministic) calibrated mechanism. 

First, suppose the agent's payoff is linear in $q$, so that $\pay\argq=q\val(\type,\state)$, where $\val(\cdot,\state)$ is increasing for all \state. Then, the set of implementable lotteries over $q$ when the belief is \belief\ is given by:
\begin{align*}
Q_{\uparrow,\mathrm{random}}=\left\{\lotq:\Types\to\Delta([0,\maxq]): \mathbb{E}_{\lotq(\type)}\left[q\right] \text{is increasing in \type}\right\}.
\end{align*}
In this case, we obtain that the designer cannot do any better than choosing $\mechanism_{\full}$, which is the mechanism that implements in each state  \state\ the optimal mechanism under common knowledge that the state is \state. This relates to the results in  \cite{szabadi2018essays} and \cite{yamashita2018optimal}, who study the optimal mechanism design preceded by public information disclosure. Both papers consider settings in which the agent's payoff is linear in $q$ and obtain that full disclosure is optimal when the ranking of types is independent of the state.\footnote{Restricting attention to deterministic mechanisms, \cite{ottaviani2001value} obtain the optimality of full disclosure without such a linearity assumption. Their model, however, is different from ours and that of the aforementioned papers: the agent's type is not payoff relevant and the agent's type and the state are affiliated. Thus, while related in spirit, \autoref{prop:transparent} is distinct from their result.}

Second, suppose the agent's payoff has the form
\begin{align*}
\pay\argq&=b(\type)c(\state)\val(q)+k_1(q,\state)+k_2(\type,\state),
\end{align*}
where $b$ is increasing in \type\ and $c(\cdot)$ does not change sign on \States. Under this assumption, $\pay(q,\type|\belief)$ satisfies monotonic expectational differences for all $\belief\in\Beliefs$ (see, e.g., \citealp{kartik2024single}). Consequently, one can define a linear order $\succeq$ over $\Delta([0,\maxq])$ as follows: $\lotq\succeq\lotq^\prime$ if $\pay(\lotq,\type|\belief)-\pay(\lotq^\prime,\type|\belief)$ is increasing in $\type$, where $\pay(\lotq,\type|\belief)$ is the linear extension of $\pay(\cdot,\type|\belief)$ to $\Delta([0,\maxq])$. This linear order implies the ranking of types is state independent. Indeed, the analog of $Q_{\uparrow,\mathrm{random}}$ is the set of all $\lotq:\Types\to\Delta([0,\maxq])$ such that $\type\geq\typeb$ implies $\lotq(\type)\succeq\lotq(\typeb)$, which is again independent of the induced belief.
\end{remark}

\paragraph{State-dependent type ranking} \autoref{example:horizontal} illustrates that when the ranking of types is not uniform across states, full transparency may not be optimal.\footnote{See \cite{szabadi2018essays} for a similar observation.} We now provide a more systematic analysis of this phenomenon, using the previous results. To provide the starkest contrast with \autoref{prop:transparent}, we consider a setting that shares a key feature of \autoref{example:horizontal}: we can partition \Beliefs\ into two regions such that within each region the ranking of types\textemdash as determined by $\pay(q,\type|\belief)$\textemdash is the same, but it differs across regions.

Concretely, suppose that $\States=\{\state_1,\state_2\}$. Furthermore, assume 
\begin{align*}
\pay\argq=\left\{\begin{array}{ll}q\type&\text{if } \state=\state_2\\
q(c-b\type)&\text{otherwise},
\end{array}\right.,
\end{align*}
where $c\in\reals$, and $b>0$.  Identify beliefs with the probability that the state is $\state_2$ and define
\begin{align*}
\beliefh=\frac{b}{1+b}.
\end{align*}
For $\belief<\beliefh$, we have that $\pay(q,\type|\belief)$ is decreasing in \type, whereas if $\belief>\beliefh$, then $\pay(q,\type|\belief)$ is increasing in \type. 

Consider now the designer's optimal payoff $\DP:\Beliefs\to\reals$ as a function of the different beliefs he may induce. When $\belief\geq\beliefh$, the designer's payoff 
 can be obtained by solving the program in \autoref{eq:dp-up} as before. Instead, when $\belief<\beliefh$, the designer's payoff can be obtained by solving a problem analogous to \autoref{eq:dp-up}, but where the space of implementable allocations is the set of decreasing $q$, \Qdown, and the participation constraint of \maxt\ binds. It follows that $\DP$ is convex on $[0,\beliefh)$ and $(\beliefh,1]$. Thus, the support of designer's optimal experiment is included in $\{0,\beliefh,1\}$. 

\begin{proposition}[State-dependent type ranking]\label{prop:reversal}
Suppose the agent's payoff satisfies the assumptions above.  If $(1-\beliefh)\DP(0)+\beliefh\DP(1)\geq\DP(\beliefh)$, full transparency is optimal. Otherwise, full transparency is not optimal: if $\prior<\beliefh$, it is optimal to  split \prior\ to $0$ and $\hat{\belief}$; if $\prior>\beliefh$, it is optimal to  split \prior\ to $\beliefh$ and $1$. In particular, if $\DP(\beliefh)>\max\{\DP(0),\DP(1)\}$, then full transparency is not optimal.
\end{proposition}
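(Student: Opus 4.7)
The plan is to invoke \autoref{theorem:cmd-two-stage} to recast the designer's calibrated mechanism problem as choosing a Bayes' plausible experiment $\beliefr$ with mean $\prior$ so as to maximize $\expect_{\beliefr}[\DP(\belief)]$, whose optimum value equals $\cav\DP(\prior)$ and is attained by a two-point distribution since $|\States|=2$. The task then reduces to pinning down $\cav\DP$. Because the excerpt has already established $\DP$ is convex on each of $[0,\beliefh]$ and $[\beliefh,1]$, the concavification is piecewise linear with vertices drawn from $\{0,\beliefh,1\}$.

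I would introduce the chord $L(\belief)\equiv(1-\belief)\DP(0)+\belief\DP(1)$, so the hypothesis dichotomy becomes whether $L(\beliefh)\geq\DP(\beliefh)$ or $L(\beliefh)<\DP(\beliefh)$. In the first subcase I would show $L\geq\DP$ on all of $[0,1]$: on $[0,\beliefh]$, convexity of $\DP$ gives
\[
\DP(\belief)\leq\frac{\beliefh-\belief}{\beliefh}\DP(0)+\frac{\belief}{\beliefh}\DP(\beliefh)\leq\frac{\beliefh-\belief}{\beliefh}L(0)+\frac{\belief}{\beliefh}L(\beliefh)=L(\belief),
\]
using $L(0)=\DP(0)$ and the case hypothesis; the analogous bound on $[\beliefh,1]$ uses $L(1)=\DP(1)$. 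Hence $L$ is a concave majorant of $\DP$ coinciding with $\DP$ at $0$ and $1$, so $\cav\DP=L$ and splitting $\prior$ to $\{0,1\}$\textemdash the full-transparency experiment\textemdash attains the optimum.

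In the opposite subcase $L(\beliefh)<\DP(\beliefh)$, let $g$ be the piecewise-linear function interpolating $\DP$ at $0$, $\beliefh$, and $1$. A direct slope comparison shows that $g$'s left-slope at $\beliefh$ weakly exceeds its right-slope if and only if $\DP(\beliefh)\geq L(\beliefh)$, so $g$ is concave under the case hypothesis. Convexity of $\DP$ on each half-interval then gives $g\geq\DP$ pointwise, so $g$ is a concave majorant of $\DP$ and $\cav\DP\leq g$. Conversely, the two-point Bayes' plausible distribution on $\{0,\beliefh\}$ (when $\prior<\beliefh$) or on $\{\beliefh,1\}$ (when $\prior>\beliefh$) attains $g(\prior)$ in expectation, so $\cav\DP(\prior)\geq g(\prior)$. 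Hence $\cav\DP=g$, the optimal splittings are as claimed, and both strictly dominate full transparency because $g(\prior)>L(\prior)$ on the interior of $[0,1]$. The final clause is then immediate: if $\DP(\beliefh)>\max\{\DP(0),\DP(1)\}$, then $L(\beliefh)$, a convex combination of $\DP(0)$ and $\DP(1)$, is strictly below $\DP(\beliefh)$, placing us in the second subcase.

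There is no conceptual obstacle\textemdash the proof is a bookkeeping exercise in concavification. The only delicate step is recognizing that the concavity of $g$ at the kink $\beliefh$ coincides algebraically with the threshold condition in the proposition's hypothesis, so that the two subcases partition $[0,1]$ cleanly.
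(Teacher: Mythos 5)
Your proposal is correct and matches the paper's approach: the paper establishes that $\DP$ is convex on $[0,\beliefh)$ and $(\beliefh,1]$ and treats the proposition as an immediate consequence via concavification, and you simply fill in the bookkeeping by showing $\cav\DP$ equals either the chord $L$ or the interpolant $g$ according to whether $L(\beliefh)\geq\DP(\beliefh)$. The slope comparison at the kink and the two-point splits you describe are exactly the content the paper leaves implicit.
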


By \autoref{prop:reversal}, whether full transparency is optimal depends on the designer and agent's payoffs and the type distribution, but only through their impact on the value the function $\DP$ takes at points $\{0,\beliefh,1\}$. At \beliefh, the agent earns no rents\textemdash as $\pay(\cdot|\beliefh)$ is constant across types\textemdash which pushes against full transparency. At the same time, efficiency may dictate the designer to condition the allocation rule on the state, which favors information disclosure. The piecewise convexity of \DP\ implies that if $\DP(\beliefh)$ dominates \DP\ at the extreme beliefs, the rent extraction motive dominates and the designer does not engage in full disclosure.

\subsection{Comparison with Myersonian Mechanism Design}\label{sec:myerson}

We now compare optimal calibrated mechanism design and the Myersonian benchmark. In the Myersonian benchmark, the designer is not concerned with the information the mechanism reveals about the state, and hence provides a natural upper bound on the designer's payoffs in calibrated mechanism design. The gap between the designer's optimal payoff across both benchmarks quantifies the loss from the calibration constraint. If no gap exists, the calibration constraint is non-binding and an optimal calibrated mechanism can be found solving the Myersonian benchmark. Instead, if a gap exists, the optimal mechanism in the Myersonian benchmark reveals information about the state in a way that it fails to be incentive compatible or individually rational under calibration.

\paragraph{Myersonian benchmark} In the Myersonian benchmark, the designer chooses a direct mechanism $(\lotq,t):\Types\times\States\to\Delta([0,\maxq])\times\reals$ subject to incentive and participation constraints that must hold on average across states under the prior \prior.\footnote{Because payoffs are quasilinear, considering mechanisms that do not randomize on transfers is without loss of generality.} Formally, 
\begin{align}\label{eq:myerson}\tag{OPT$_\myerson$}
\DP_{\myerson} \equiv &\max_{(q,t):\Types\times\States\to[0,\maxq]\times\reals} \int_\Types\mathbb{E}_{\prior}\left[\ddp(\lotq(\type,\state),\type,\state)+t(\type,\state)\right] \typecdf(d\type)\\
    \text{s.t.}&\left\{\begin{array}{ll}(\forall\type\in\Types)\expect_{\prior}\left[u(\lotq(\type,\state),\type,\state)-t(\type,\state)\right]\geq0\\
    (\forall\type,\typeb\in\Types)\expect_{\prior}\left[\pay(\lotq(\type,\state),\type,\state)-t(\type,\state)\right]\geq \expect_{\prior}\left[\pay(\lotq(\typeb,\state),\type,\state)-t(\typeb,\state)\right]\end{array}\right.,\nonumber
\end{align}
where $\ddp(\lotq,\type,\state)$ and $\pay(\lotq,\type,\state)$ are the linear extensions of $\ddp(\cdot,\type,\state)$ and $\pay(\cdot,\type,\state)$, respectively.

Program \ref{eq:myerson} is a mechanism design problem with a multidimensional allocation, corresponding to assigning (a distribution over) $q$ in each state. As a result, the distinction between the Myersonian benchmark and optimal calibrated design shows in the monotonicity requirements the allocation $\lotq(\type,\state)$ must satisfy for a transfer $t:\Types\times\States\to\reals$ to exist that implements $\lotq(\type,\state)$. Indeed, implementability of $\lotq:\Types\times\States\to\Delta([0,\maxq])$ is equivalent to \emph{integral monotonicity} \citep{rochet1987necessary,pavan2014dynamic}:\footnote{The equi-Lipschitz condition on \pay\ ensures we can take the derivative inside the integral.}
\begin{align}\label{eq:integral-monotonicity}\tag{IM}
(\forall\type,\typeb\in\Types)\int_{\typeb}^{\type}\int_\States\left[\pay_2(\lotq(s,\state),s,\state)-\pay_2(\lotq(\typeb,\state),s,\state)\right]d\prior ds\geq0,
\end{align}
where recall $\pay_2$ is the derivative of \pay\ in its second coordinate. 

\paragraph{Comparison with calibrated mechanism design} To facilitate the comparison with \autoref{prop:transparent}, we focus on deterministic mechanisms $(q,t):\Types\times\States\to[0,\maxq]\times\reals$. Remarkably, even if the agent's payoff net of transfers, $\pay\argq$, is supermodular in $(q,\type)$ for all $\state\in\States$, the characterization of the set of implementable $q(\cdot)$ cannot be simplified beyond integral monotonicity without further assumptions. Because integral monotonicity is a global, implicitly defined constraint, verifying implementability and computing the optimal mechanism is more computationally involved in the Myersonian benchmark than in calibrated mechanism design. Indeed, \autoref{prop:transparent} implies the optimal deterministic calibrated mechanism coincides with the state-by-state optimal deterministic mechanism under this assumptions. In other words, the optimal deterministic calibrated mechanism can be obtained by selecting allocations $q(\cdot)$ that satisfy 
\begin{align*}
\Qcal=\left\{q:\Types\times\States\to[0,\maxq]: (\forall\state\in\States) q(\cdot,\state)\text{ is increasing}\right\}.
\end{align*}
That is, the allocation in the optimal calibrated mechanism must satisfy monotonicity state-by-state. Instead, the optimal deterministic Myersonian mechanism can be obtained by selecting allocations $q(\cdot)$ that satisfy \autoref{eq:integral-monotonicity}, which we denote by \Qmy.

When $\pay\argq$ is supermodular in $(q,\type)$ for all $\state\in\States$, the above discussion implies the designer's optimal payoff in the Myersonian and calibration settings can be written as follows:
\begin{align}\label{eq:comparison}
\DP_\myerson^D&=\max_{q\in\Qmy}\int_{\Types}\mathbb{E}_{\prior}\left[\vs(q(\type,\state),\type,\state;\typecdf)\right]\typecdf(d\type),\\
\DP_\cali^D&=\max_{q\in\Qcal}\int_{\Types}\mathbb{E}_{\prior}\left[\vs(q(\type,\state),\type,\state;\typecdf)\right]\typecdf(d\type),\nonumber
\end{align}
where the superscript $D$ in the objective is a reminder that we restrict attention to mechanisms that are deterministic conditional on the state, or the induced belief.

By reducing the comparison across settings to monotonicity requirements on the space of allocations, the above expressions provide us with an immediate way of comparing the designer's payoffs across settings. In particular, when the optimal Myersonian mechanism satisfies the state-by-state monotonicity constraints, we have that the calibration constraint entails no loss to the designer. We record this observation for future use:
\begin{observation}\label{observation:myerson}
Suppose $\pay\argq$ is supermodular in $(q,\type)$ for all $\state\in\States$. Then, if the allocation rule in the Myersonian benchmark satisfies monotonicity state-by-state, $\DP_\cali^D=\DP_\myerson^D$.
\end{observation}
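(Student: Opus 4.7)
The plan is to sandwich $\DP_\cali^D$ between two copies of $\DP_\myerson^D$. The easy direction is $\DP_\cali^D\leq\DP_\myerson^D$, and it follows from the inclusion $\Qcal\subseteq\Qmy$. To establish that inclusion, fix $q\in\Qcal$ and any $\type,\typeb\in\Types$ with $\type\geq\typeb$. For each $s\in[\typeb,\type]$ and each $\state\in\States$, state-by-state monotonicity gives $q(s,\state)\geq q(\typeb,\state)$. Since $\pay(\cdot,\cdot,\state)$ is supermodular in $(q,\type)$, $\pay_2(\cdot,\cdot,\state)$ is nondecreasing in its first argument, so
\begin{align*}
\pay_2(q(s,\state),s,\state)-\pay_2(q(\typeb,\state),s,\state)\geq 0.
\end{align*}
Integrating against \prior\ over \States\ and then in $s$ over $[\typeb,\type]$ delivers \ref{eq:integral-monotonicity}. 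The case $\type<\typeb$ is symmetric: now $q(s,\state)\leq q(\typeb,\state)$ for $s\in[\type,\typeb]$, the inner difference flips sign, and the reversed limits of integration restore non-negativity. Hence $q\in\Qmy$.

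The reverse direction uses the hypothesis directly. By assumption, there is an optimizer $q^\star\in\Qmy$ of the Myersonian program that also lies in $\Qcal$. Then $q^\star$ is feasible in the calibrated program from \autoref{eq:comparison}, and since both programs share the same virtual-surplus objective,
\begin{align*}
\DP_\cali^D\geq\int_{\Types}\expect_{\prior}\left[\vs(q^\star(\type,\state),\type,\state;\typecdf)\right]\typecdf(d\type)=\DP_\myerson^D.
\end{align*}
Combining the two inequalities yields $\DP_\cali^D=\DP_\myerson^D$.

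The only step requiring some care is confirming that an element of $\Qcal$ is genuinely implementable by a calibrated mechanism (not merely that it satisfies the monotonicity bookkeeping). This is handled by \autoref{prop:transparent}: the designer can fully disclose the state and then run, state by state, the standard Myersonian deterministic mechanism with the usual payment formula pinned down by $\pay_2$ and a binding participation constraint at \mint. Thus the same $q^\star$ is induced with the Myersonian expected payments, so no additional obstacle arises. I do not anticipate a substantive difficulty; the content of the observation is essentially that, under state-by-state supermodularity, $\Qcal$ is the subset of $\Qmy$ cut out by the pointwise monotonicity constraints, and that this subset contains the Myersonian optimum whenever the latter happens to be state-by-state monotone.
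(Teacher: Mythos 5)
Your proposal is correct and takes essentially the same approach as the paper: the paper reduces both programs to maximizing the same virtual-surplus objective over $\Qcal$ and $\Qmy$ respectively (Equation \ref{eq:comparison}), and then observes that when the Myersonian optimizer happens to lie in $\Qcal$ the two values coincide; you make explicit the inclusion $\Qcal\subseteq\Qmy$, which the paper leaves implicit as a consequence of calibration being a constraint relative to the Myersonian benchmark.
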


Two natural questions are under what conditions the solution to \ref{eq:myerson} is deterministic and satisfies state-by-state monotonicity. We answer them simultaneously by studying the relaxed program. Inspection of \autoref{eq:comparison} reveals that if the virtual surplus is supermodular in $(q,\type)$ for every $\state$, then the solution $q_{\mathrm{rel}}$ to the relaxed problem
\begin{align}\label{eq:rel}\DP_{\mathrm{rel}}=\max_{q:\Types\times\States\to[0,\maxq]}\int_\Types\mathbb{E}_{\prior}\left[\vs(q(\type,\state),\type,\state;\typecdf)\right]\typecdf(d\type),\end{align}
satisfies monotonicity state-by-state by Topkis' theorem. Moreover, a stochastic mechanism is equivalent to a deterministic mechanism which depends on the random reports of a fictitious agent \citep{pavan2014dynamic}. The virtual surplus in this fictitious setting coincides with that in the integrand on the right-hand side of \autoref{eq:rel}\textemdash the type reports of the fictitious agent are payoff irrelevant\textemdash and is maximized by $q_{\mathrm{rel}}$.
\begin{proposition}[Sufficient condition  for no gap]\label{prop:no-gap}
Suppose the virtual surplus $\vs(\argq;\typecdf)$ is supermodular in $(q,\type)$ for all $\state$. Then,  the designer's payoffs under the optimal Myersonian and calibrated mechanisms coincide.   
\end{proposition}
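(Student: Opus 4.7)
The plan is to sandwich $\DP_\cali$ and $\DP_\myerson$ between a common relaxed program. Let $\DP_{\mathrm{rel}}$ denote the optimum of \autoref{eq:rel}, in which all monotonicity constraints on $q$ are dropped. Under the state-by-state supermodularity of $u$ maintained throughout this subsection, state-by-state monotonicity of $q$ implies integral monotonicity \autoref{eq:integral-monotonicity}, so $\Qcal\subseteq\Qmy$, and dropping all constraints relaxes even further. Moreover, both $\DP_\myerson^D$ and $\DP_\cali^D$ are maxima of the same virtual-surplus objective (the Myersonian formula by the envelope theorem, and the calibrated formula after integrating out the designer's posterior via Bayes plausibility). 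Hence
\begin{align*}
\DP_\cali^D \leq \DP_\myerson^D \leq \DP_{\mathrm{rel}}.
\end{align*}
If the relaxed program admits an optimizer $q_{\mathrm{rel}}\in\Qcal$, \autoref{observation:myerson} forces equality throughout.

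I would construct such a $q_{\mathrm{rel}}$ by pointwise maximization: for each $(\type,\state)$, select $q_{\mathrm{rel}}(\type,\state)\in\argmax_{q\in[0,\maxq]}\vs(\argq;\typecdf)$. Since $\vs(\cdot,\cdot,\state;\typecdf)$ is supermodular in $(q,\type)$ by hypothesis and $[0,\maxq]$ is a compact lattice, Topkis's monotone-selection theorem supplies a measurable selection whose $\type$-section is weakly increasing in $\type$ for each $\state$. Thus $q_{\mathrm{rel}}\in\Qcal\subseteq\Qmy$, yielding $\DP_\cali^D=\DP_\myerson^D=\DP_{\mathrm{rel}}$.

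To lift the conclusion to stochastic mechanisms, I would employ the fictitious-agent representation noted after \autoref{eq:rel}: any stochastic allocation can be represented as a deterministic function of $(\type,\rand,\state)$ for an independent uniform report $\rand$ of a payoff-irrelevant fictitious agent. Since $\rand$ enters neither $u$ nor $w$, the virtual-surplus integrand is unchanged, and pointwise maximization is again achieved by $q_{\mathrm{rel}}$ (taken constant in $\rand$). Randomization therefore offers no improvement, giving $\DP_\cali = \DP_\myerson = \DP_{\mathrm{rel}}$.

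The main obstacle is verifying the containment $\Qcal\subseteq\Qmy$. The argument proceeds by noting that for $\type\geq\typeb$ and $s\in[\typeb,\type]$, state-by-state monotonicity of $q(\cdot,\state)$ gives $q(s,\state)\geq q(\typeb,\state)$, and supermodularity of $u$ in $(q,\type)$ (which makes $\pay_2$ increasing in $q$) yields $\pay_2(q(s,\state),s,\state)\geq \pay_2(q(\typeb,\state),s,\state)$ pointwise; integration in $s$ and $\state$ preserves the sign and delivers \autoref{eq:integral-monotonicity}, with the case $\type<\typeb$ handled symmetrically. A secondary technical point is measurability of the Topkis selection, which is standard given the finite state space and compact allocation interval.
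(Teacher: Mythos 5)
Your proposal is correct and follows essentially the same route as the paper: relax the Myersonian program, observe that pointwise virtual-surplus maximization yields a state-by-state monotone allocation by Topkis (so $q_{\mathrm{rel}}\in\Qcal\subseteq\Qmy$), invoke \autoref{observation:myerson} to conclude $\DP_\cali^D=\DP_\myerson^D=\DP_{\mathrm{rel}}$, and lift to stochastic mechanisms via the fictitious-agent representation. The only thing you make more explicit than the paper is the verification that state-by-state monotonicity implies integral monotonicity (so $\Qcal\subseteq\Qmy$); the paper takes the ordering $\DP_\cali^D\leq\DP_\myerson^D$ for granted because calibration is an extra constraint. That elaboration is sound and does not change the argument.
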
 

%

By contrast to \autoref{prop:transparent}, \autoref{prop:no-gap} relies on assumptions on the type distribution and the designer's payoff. As \autoref{example:horizontal} illustrates, the supermodularity of the virtual surplus can fail when the type distribution is not regular, creating a gap between the designer's payoff at the optimal Myersonian and calibrated mechanisms. \autoref{example:ddp-nsm} illustrates such a gap can also arise when the designer's payoff is not supermodular:
\begin{example}[Payoff gap when \ddp\ is not supermodular]\label{example:ddp-nsm} Suppose states are binary, $\States=\{\state_L,\state_H\}=\{1,3\}$, and equally likely. Suppose types are uniformly distributed, $\type\sim U[0,1]$. Finally, let $q\in[0,1]$ denote the probability the seller's good is allocated. Payoffs are given by:
\begin{align*}
\pay\argq&=q\type\state\\
\ddp\argq&=2(1-2\type)q.
\end{align*}
Note that \ddp\ is increasing in $q$ when $\type<1/2$ and decreasing in $q$ when $\type>1/2$.\footnote{The designer can be viewed as an online advertising platform and the agent as an advertiser. State $\state$ represents the click-through rate of an ad slot, and higher $\type$ corresponds to a larger advertiser willing to pay more for exposure. The designer's payoff captures both the value created by advertising and the disutility from showing ads of large advertisers, e.g., due to user brand fatigue.}
In this case, the  virtual surplus evaluated at different states is:
\begin{align*}
\vs(\argq;\typecdf)=\left\{\begin{array}{ll}(1-2\type)q&\text{ if }\state=\state_L\\
(2\type-1)q&\text{otherwise}\end{array}\right. .
\end{align*}
In the Myersonian benchmark, implementable allocations are elements of \Qmy, which in this case is equivalent to requiring that $\mathbb{E}_{\prior}\left[q(\cdot,\state)\state\right]$ is increasing. The optimal Myersonian allocation obtains from pointwise maximizing the virtual surplus, and is given by:
\begin{align*}
q_\myerson(\type,\state)=\left\{ \begin{array}{ll}1 &\text{ if $\state=\state_L$ and $\type<1/2$}\\
1 & \text{ if $ \state=\state_H$ and $\type>1/2$}\\
0&\text{ otherwise}\end{array}\right..
\end{align*}
The designer's payoff under the Myersonian mechanism is $1/4$.

By \autoref{prop:transparent}, $q_\myerson$ cannot be implemented by a calibrated mechanism as it is not increasing state-by-state. Intuitively, when  $\state=\state_L$, types above $1/2$ would learn from the calibrated information structure that they do not obtain the good, whereas types below $1/2$ do, and would misreport their types.

Instead, in the optimal calibrated mechanism, the designer sets $q_\cali(\type,\state_H)=\mathbbm{1}[\type\geq 1/2]$ and sets $q(\type,\state_L)$ to be constant in $\type$. The designer's payoff under calibration is $\DP_\cali=1/8<\DP_\myerson$.

\end{example}

\subsection{Optimal Calibrated Auction}\label{sec:optimal-auction}
In this section, we consider a multiple agent application and study the design of the optimal calibrated auction. \autoref{theorem:cmd-two-stage-multi} implies the optimal calibrated auction induces a generalized two-stage mechanism, and hence the optimal generalized two-stage mechanism provides an upper bound on the designer's optimal payoff under calibration. However, computing the optimal generalized two-stage mechanism is complicated because (i) no tractable characterization of joint distributions over posterior beliefs is available, and (ii) the allocation rule may condition on the state and not only the agents' beliefs. For that reason, our analysis below relies on \autoref{observation:myerson}: We show the optimal Myersonian auction can be implemented by fully revealing the state, and hence, remains incentive compatible and individually rational when the agents have access to the calibrated information structure. Below, we first specialize our multi-agent model and notation to the auction application and then link our assumptions to online advertising.

Suppose there is a single good for sale and the state is  multidimensional, $\state=(\state_i,\state_{0i})_{i\in\setplayers}\in\reals_+^{2N}$, and distributed according to prior distribution \prior.  Suppose that for all $i\in\setplayers$, $\Typesi=[0,1]$, with $\typei\sim\typecdfi$ with full-support density $\typedi$. That is, we are assuming agents' types are independent of the state, and hence, independent across each other. Denote by $q_i\in[0,1]$ the probability agent $i$ is allocated the good, and note that feasibility implies that $0\leq\sum_{i=1}^Nq_i\leq 1$.

We assume the agents' and the designer's utilities are quasilinear in transfers. Agent $i$'s payoff net of transfers is $\payi(q_i,\typei,\state)=q_i (\state_i\typei+\state_{0i})$. Thus, state components $\state_i$ capture the value responsiveness  to agent's private information, whereas state components $\state_{0i}$ capture the overall shift.  The state components can be  correlated (and asymmetric) across agents,  allowing for interdependent values.  
The designer's payoff net of transfers is $\ddp\argq=\sum_i q_i \ddp_i(\type,\state)$ for some functions $(\ddp_i)_{i\in[N]}$.  Below, we study the  designer-optimal calibrated mechanism.

To fix ideas, consider the following mapping to an online advertising environment. The designer is an advertising platform, and the good is an advertising slot on a given webpage targeted to a selected category of users in a given week. Agents are firms that wish to display their ads, and their private types represent the expected revenue from a click on their ad. State components $\state_i$ could capture individual click-through rates or match values, while state components $\state_{0i}$ could capture individual display values, that is, the expected revenue from an ad being displayed irrespective of whether it is clicked (for instance, due to brand-building effects). The state is observed through proprietary data available to the platform and can be used in the design of the auction. The platform values the resulting revenue but may also have additional efficiency considerations, summarized by $\ddp_i$.

As anticipated, we characterize the optimal calibrated mechanism by showing that it coincides with the Myersonian optimal one. To this end, consider the Myersonian problem, in which the designer chooses $(q(\type,\state),t(\type,\state))\in[0,1]^N\times\reals^N$. Because agent $i$'s payoff is linear in $\typei$, arguments analogous to those in \autoref{sec:myerson} imply a feasible $q(\type,\state)$ is implementable if and only if for all $i$, $\expect_{F_{-i},\belief_0} [q_i(\typei,\typemi,\state)\state_i]$ is increasing in $\typei$. In a slight abuse of notation, denote by $\Qmy$ the set of all such functions and define the virtual surplus as 
\begin{align}
 \vs(\argq;\typecdf)&=
   \sum_{i=1}^N  q_i(\type,\state) \left(\ddp_i(\type,\state)+\left(\typei-\frac{1-\typecdfi(\typei)}{\typedi(\typei)}\right)\state_i+\state_{0i}\right).
\end{align} 
Standard arguments imply the individual  rationality constraint of $\typei=0$ binds for all $i$, and an optimal mechanism solves 
\begin{align*}
\DP_\myerson=\max_{q\in\Qmy}\int_{[0,1]^N}  \expect_{\belief_0}[\vs(q(\type,\state),\type,\state;F)]f(\type)d\type.
\end{align*}

\begin{proposition}[No gap in regular auctions]\label{prop:no-gap-auction}
Suppose that (i) for all $i\in\{1,\dots,N\}$, \typecdfi\ is Myerson regular, and for all $i,j,\type,$ and $\state$, $\ddp_{i\typei}(\type,\state)\geq0$, $\ddp_{i\typei}(\type,\state)\geq \ddp_{j\typei}(\type,\state)$.\footnote{In the statement, $\ddp_{k\typei}$ denotes the derivative of $\ddp_k$ with respect to \typei, for $k\in\{1,\dots,N\}$.} Then, $\DP_\cali=\DP_\myerson$.     
\end{proposition} 
The proof of \autoref{prop:no-gap-auction} in \autoref{appendix:app} shows that under our assumptions the optimal Myersonian mechanism can be obtained by solving the relaxed program. Importantly, the assumption that $\ddp_{i\typei}(\type,\state)\geq \ddp_{j\typei}(\type,\state)$ ensures that an increase in agent $i$'s type increases the designer's payoff of giving the object to agent $i$ by more than the value of giving it to other agents. This, in turn, ensures agent $i$'s allocation probability is increasing in her type.

Viewed through the lens of the online advertising example, Proposition \ref{prop:no-gap-auction} implies that in regular environments, while the advertising platform  benefits from having the data on click-through rates and display values, it does not benefit from the informational advantage over bidders that such data entails. Its objective is maximized by making the click-through rates and display values readily available to bidders and running optimal auctions in all instances.

\section{Microfoundation}\label{sec:microf}
In this section, we provide a microfoundation for calibrated mechanism design by analyzing the outcome distributions that can arise when an agent repeatedly engages with the same mechanism (\autoref{sec:repeated}) and contrast this to what can be implemented when the designer can offer the agent a \emph{fully} dynamic mechanism (\autoref{sec:dynamic}). To keep the presentation simple, we present the results with minimal notation, and refer the reader to \autoref{appendix:microf} for details. 

Throughout, we consider the case of a single agent, whose type (i) is redrawn each period from the same distribution and (ii) is independent of the state.  The reason for (i) is as follows. When the designer offers the agent a fully dynamic mechanism, the revelation principle implies that it is without loss of generality for the designer to ask the agent for type reports. Moreover, logic similar to that in \cite{myerson1986multistage} implies that the designer only elicits one type report when the agent's type is persistent, and hence, the agent has no possibility of experimenting with the mechanism. Hence, to put repeated and dynamic mechanisms on a more similar footing, assuming the agent's type is redrawn each period is necessary. However, when the agent's type is repeatedly drawn from a distribution that depends on the state, the agent learns about the state both through her own type and her allocations in the mechanism.\footnote{To be sure, \autoref{theorem:repeated} extends to the case in which the agent's type is fully persistent and correlated with the state.} Thus, we assume (ii) so that the agent learns about the state only through her interaction with the mechanism.  Lastly, we consider the single-agent case as extending the results in this section to multiple agents requires addressing subtle issues in strategic experimentation, which we plan to pursue in future work.

\subsection{Repeated Interactions with a Mechanism}\label{sec:repeated}
We consider first the case in which the agent interacts repeatedly with the same mechanism \rmech\ in each period of an infinite horizon interaction. In line with \autoref{sec:model}, a repeated mechanism is a mapping
\begin{align*}
\rmech:\mssgs\times\States\times\fcor\to\Delta(\Alloc),
\end{align*}
where \mssgs\ is a finite set of messages and \fcor\ is a finite set endowed with some measure, denoted \cord. The results in \autoref{sec:main} imply that assuming \fcor\ is finite is without loss of generality and it simplifies the proofs. In contrast to \autoref{sec:model}, we allow the mechanism to have an arbitrary message space. The reason is that we cannot invoke the revelation principle when the designer offers the same mechanism repeatedly: unless the agent's best response is the same across periods, the composition of the mechanism with the agent's reporting strategy yields a time-dependent, direct mechanism. To avoid keeping track of participation and reporting strategies separately in what follows, we assume a message $\mssg_\emptyset\in\mssgs$ exists such that for all $(\state,\cor)\in\States\times\fcor$, $\rmech(\mssg_\emptyset,\state,\cor)=\delta_{\oo}$.

\paragraph{Timing} Given \rmech, the agent faces the following extensive form. Nature draws $(\state,\cor)$ once at the beginning, unobserved to the agent. In each period, nature first draws the agent's type, which the agent observes. The agent then sends a message $\mssg$ into the mechanism. The mechanism then draws the allocation from $\rmech(\cdot|\mssg,\state,\cor)$, which the agent observes.

Given the mechanism \rmech\ and the extensive form game it induces, the agent's strategy specifies for each period $t$ and each period-$t$ type $\type\in\Types$, a distribution over \mssgs, as a function of the agent's past observations, which include her past types, messages, and allocations. Importantly, we assume the agent does not observe her payoffs to focus on the agent learning through the mechanism.

We assume the agent is infinitely patient, that is, she has limit-of-means preferences. Her average payoff through period $T$ when the realization is $(\state,\cor)$ and the type-message-allocation sequence is $(\type_t,\mssg_t,\alloc_t)_{t=1}^T$ is given by:
\begin{align*}
U_T((\type_t,\mssg_t,\alloc_t)_{t=1}^T,\state,\cor)=\frac{1}{T}\sum_{t=1}^Tu(\alloc_t,\type_t,\state).
\end{align*}
A strategy \strat\ is a best response for the agent if for all alternative strategies \stratb, we have that
\begin{align}\label{eq:br}
\lim\inf_{T\to\infty}\mathbb{E}_\strat\left[U_T\right]\geq\lim\sup_{T\to\infty}\mathbb{E}_{\stratb}\left[U_T\right],
\end{align}
where $\mathbb{E}_\strat$ is the expectation relative to the measure induced over the terminal histories by the prior on \States, the distribution on \fcor, the agent's type distribution \typed, the mechanism \rmech, and the agent's reporting strategy \strat.\footnote{The Ionescu-Tulcea extension theorem implies this measure is always well-defined for any mechanism and any agent's strategy. See \autoref{appendix:repeated} for details.}

\paragraph{Implementation} Our notion of implementation is based on the induced \emph{occupation measure} on \Arg, that is, the (limit) expected frequency of tuples \aarg\ when the agent best responds to the mechanism.  For this reason, we restrict attention to mechanisms \rmech\ for which (i) a best-response \strat\ exists, and (ii) its induced occupation measure $\occup_\strat$ over $\Alloc\times\Types\times\mssgs\times\States\times\fcor$ exists, defined as follows\footnote{Throughout this section, limits of measures should always be understood in the weak$*$ sense. }
\begin{align}\label{eq:induced-occup}
\occup_\strat(\alloc,\type,\mssg,\state,\cor)=\lim_{T\to\infty}\frac{1}{T}\mathbb{E}_\strat\left[\sum_{t=1}^T\mathbbm{1}\left[(\alloc_t,\type_t,\mssg_t,\state^\prime,\cor^\prime)=(\alloc,\type,\mssg,\state,\cor)\right]\right]=\lim_{T\to\infty}\occup_\strat^T(\alloc,\type,\mssg,\state,\cor),
\end{align}
where the last identity defines $\occup_\strat$ as the limit of the up to period $T$ occupation measures $\occup_\strat^T$, which are always well-defined. 

Under our definition of best response, which is the same as in \cite{hart1985nonzero}, existence of a best response implies the agent's payoff at the best-response strategy is well-defined.\footnote{Note that in mechanism design one always focuses on mechanisms that have well-defined best responses in single-agent settings, and equilibria in multi-agent ones.} Even if the occupation measure in \autoref{eq:induced-occup} is enough to calculate the agent's payoffs, that the agent's payoffs are well-defined does not mean the occupation measure is well-defined. Because outcome distributions\textemdash and not payoffs\textemdash are usually the focus of mechanism design, we require that both the mechanism has a best response \emph{and} it induces a well-defined occupation measure. 

\begin{definition}[Implementation]
Outcome distribution $\outcome\in\Delta(\Arg)$ can be implemented by a repeated mechanism if a mechanism \rmech\ and a best-response strategy \strat\ exist such that 
\begin{align*}
\outcome\aarg=\sum_{\cor\in\fcor,\mssg\in\mssgs}\occup_\strat(\alloc,\type,\mssg,\state,\cor).
\end{align*}
\end{definition}

We are now ready to state the main result of this section. \autoref{theorem:repeated} shows that the outcome distributions implemented by repeated mechanisms can be implemented by two-stage mechanisms, and hence by calibrated mechanisms:
\begin{theorem}[Microfoundation of Calibrated Mechanism Design]\label{theorem:repeated}
Outcome distribution $\outcome\in\Delta\left(\Arg\right)$ is implementable by a repeated mechanism if and only if \outcome\ can be implemented by an incentive compatible and individually rational two-stage mechanism, that is for all $\aarg\in\Arg$,
\begin{align}\label{eq:ts-impl}
\outcome\aarg=\prior(\state)\typed(\type)\int_{\Beliefs}\allocr(\alloc|\type,\belief)\beliefr(d\belief|\state),
\end{align}
where $\beliefr:\States\to\Delta(\Beliefs)$ is Bayes plausible and $\allocr(\cdot|\cdot,\belief):\Types\to\Delta(\Alloc)$ is incentive compatible and individually rational on the support of $\prior\otimes\beliefr$.
\end{theorem}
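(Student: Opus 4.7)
The plan is to establish both directions of the iff separately. For sufficiency, start from an incentive compatible and individually rational two-stage mechanism with decomposition $(\beliefr,\allocr)$. By the preceding corollary, I take each $\supp\beliefr(\cdot|\state)$ to be finite. I construct a repeated mechanism as follows: let $\mssgs=\Types\cup\{\mssg_\emptyset\}$, let $\fcor$ be a finite set large enough to encode both a belief $\belief\in\bigcup_{\state}\supp\beliefr(\cdot|\state)$ and an independent randomization seed for the lottery $\allocr(\cdot|\type,\belief)$, and endow $\fcor$ with a measure such that, conditional on $\state$, the belief-coordinate of $\cor$ has law $\beliefr(\cdot|\state)$ and the seed is independent and uniform. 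Set $\rmech(\mssg,\state,\cor)=\allocr(\cdot|\mssg,\belief(\cor))$ for $\mssg\in\Types$ and $\rmech(\mssg_\emptyset,\state,\cor)=\delta_{\oo}$. Since $\belief(\cor)$ is drawn once and persists while the agent's type is i.i.d.\ and independent of $(\state,\cor)$, best response reduces to period-by-period optimization; truthful reporting is optimal given any belief the agent may hold over $\belief(\cor)$ because $\allocr(\cdot|\cdot,\belief)$ is IC/IR conditional on $\belief$. Marginalizing the induced occupation measure over $(\mssg,\cor)$ recovers $\outcome$.

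For necessity, start from a repeated mechanism $\rmech$, a best-response strategy $\strat$, and a well-defined occupation measure $\occup_\strat$. For each realization $(\state,\cor)$ define the interim allocation rule $\csignal_{\state,\cor}:\mssgs\to\Delta(\Alloc)$ via $\csignal_{\state,\cor}(\alloc\mid\mssg)=\rmech(\alloc\mid\mssg,\state,\cor)$. Because $(\state,\cor)$ is drawn once and persists, the agent's long-run learning problem is to identify the equivalence class of realizations sharing the same $\csignal$. The key step, an adequate-learning lemma relegated to the appendix, asserts that a best response under limit-of-means preferences induces, conditional on each realized class $[\csignal]$, the same long-run occupation measure as that of an agent who knows $\csignal$ and holds the Bayesian posterior $\belief(\csignal)$ over $\States$ given membership in $[\csignal]$. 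Granted this, I define the two-stage mechanism by letting $\beliefr(\cdot|\state)$ be the pushforward of the $\state$-conditional marginal of $\cor$ under $\cor\mapsto\belief(\csignal_{\state,\cor})$, and letting $\allocr(\cdot|\type,\belief(\csignal))$ be the distribution over $\alloc$ induced by $\csignal$ composed with the agent's optimal messaging rule conditional on $(\type,\belief(\csignal))$. Bayes plausibility is immediate, IC/IR on $\supp(\prior\otimes\beliefr)$ follows from the optimality of this messaging rule, and the resulting outcome distribution coincides with $\outcome$.

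I expect the main obstacle to be the adequate-learning lemma. With limit-of-means preferences the cost of exploration vanishes, so a best response can be operationally loose in how it experiments; nevertheless one must pin down the long-run \emph{occupation measure}, not merely the payoff, and show it matches what the agent would obtain knowing $\csignal_{\state,\cor}$. My plan is to construct an auxiliary strategy that devotes a vanishing fraction of periods to cycling through all $\mssg\in\mssgs$ to estimate $\csignal_{\state,\cor}$ arbitrarily well, and subsequently plays an optimal report given $(\type,\belief(\csignal))$; a law-of-large-numbers argument shows its long-run payoff conditional on each $(\state,\cor)$ equals the first-best given $(\type,\belief(\csignal))$. Since $\strat$ is a best response, its long-run conditional payoff must weakly dominate this benchmark on every realization of positive probability, forcing equality. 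Converting payoff-equality into occupation-measure equality then proceeds by noting that any messaging distribution attaining the conditional first-best must concentrate on the maximizers of $\expect_{\belief(\csignal)}[\pay(\csignal(\mssg),\type,\state)]$, and by taking Cesàro limits of the type-conditional frequencies along the best-response path, using finiteness of $\mssgs$ and $\Alloc$ for compactness.
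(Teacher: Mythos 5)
Your overall structure is right (sufficiency by construction, necessity by decomposing the occupation measure), and the sufficiency direction is essentially sound once you spell out two implicit facts: the constructed $\rmech$ is constant on the cells $\{(\state,\cor):\belief(\cor)=\belief\}$, so the agent can never learn anything finer than $\belief$; and because Bayesian updating from $\prior\otimes\eta$ given any sequence of allocations is then automatically consistent cell by cell, the agent's interim belief conditional on being in cell $\belief$ equals $\belief$, so $\mathrm{IC}/\mathrm{IR}$ of $\allocr(\cdot|\cdot,\belief)$ implies truthtelling is myopically optimal after every history — which, combined with the fact that exploration is worthless, gives best response. The paper instead proves sufficiency by comparing the truthtelling payoff to the ``adequate learning'' benchmark; your direct argument is a legitimate (and arguably cleaner) alternative route here.

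The necessity direction is where you depart from the paper more substantially, and where there are real gaps. The paper decomposes the occupation measure using the agent's \emph{actual limiting beliefs} $\belief_\infty=\pr_\strat(\cdot\,|\,h^\infty)$, not the Bayesian posterior $\belief(\csignal)$ associated with the calibrated information structure. These need not coincide — the agent may best-respond without fully learning $\csignal$, and the paper explicitly flags this. The key technical device is to form two extended occupation measures, one recording the period-$t$ belief at the \emph{start} of period $t$ and one at the \emph{end}. By the martingale convergence of beliefs these have the same subsequential limits, but the first gives conditional independence of the type from the belief, and the second gives conditional independence of the allocation from $\estate$ given the belief — and those two independences are exactly what deliver the two-stage decomposition. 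IC/IR at the limiting beliefs then follow by a simple deviation argument: play $\strat$ through $T$ and best-respond to $\belief_T$ thereafter, and send $T\to\infty$. Crucially, this does \emph{not} require adequate learning. By anchoring the decomposition to $\belief(\csignal)$ instead, you make the adequate-learning lemma load-bearing for necessity, which is a harder route.

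Two concrete errors in your necessity sketch are worth isolating. First, you assert the adequate-learning lemma delivers \emph{occupation-measure} agreement with the informed-agent benchmark conditional on each cell; this is stronger than what can be proved and stronger than what the paper's Lemma on adequate learning asserts (which is only payoff agreement ex ante). Ties in the argmax mean the informed agent's occupation measure is not even uniquely pinned down. What you actually need — and what you gesture at next — is the weaker chain: the agent's ex ante payoff is bounded above by $\mathbb{E}_{\bsplit}[\pay^*(\belief)]$ pointwise in the decomposition; the adequate-learning (or, in the paper, the $T$-truncation deviation) argument gives a matching lower bound; equality forces the reporting kernel to concentrate on argmaxes $\bsplit$-a.e., which is IC, and IR follows since $\mssg_\emptyset$ is available. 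Second, the sentence ``its long-run conditional payoff must weakly dominate this benchmark on every realization of positive probability'' is not a consequence of best response: best response with limit-of-means preferences is an \emph{ex ante} requirement, and a best response can underperform the benchmark on some realizations if it overperforms on others. The correct argument is the one-sided inequality per cell plus the ex ante equality, not a per-realization domination claim.

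Finally, you should be explicit about a measurability/aggregation step your construction elides: the map $\csignal\mapsto\belief(\csignal)$ need not be injective, so defining $\allocr(\cdot|\type,\belief)$ requires averaging $\rho(\cdot|\type,\csignal)\csignal(\cdot|\cdot)$ over $\csignal$'s with the same posterior, using $\Pr(\csignal\,|\,\belief)$ (well-defined by sufficiency of the posterior). This preserves IC/IR by convexity, but it is a genuine step, not bookkeeping.
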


The proof of this and all results in this section can be found in \autoref{appendix:microf}.

\autoref{theorem:repeated} provides a microfoundation for calibrated mechanism design. Whenever the designer is concerned with agents learning from the outcome of the mechanism and cares only about the long-run outcome distribution, it is as if he is designing a two-stage mechanism. 

We now provide a proof sketch for \autoref{theorem:repeated}, which is also useful to understand the proof of the result in the next section. For simplicity, let $\estates=\States\times\fcor$ with elements $\estate$. Suppose repeated mechanism \mechanism\ implements \outcome, and let $\occup_\strat\in\Delta(\Argme)$ denote the induced occupation measure. As the analysis so far illustrates, tracking the joint distribution over allocations, types, states, \emph{and} beliefs is important to show that \outcome\ can be implemented via a two-stage mechanism. To this end, we extend the up to period $T$ occupation measures, $\occup_{\strat}^T\in\Delta(\Argme)$, to account for the frequency of beliefs through period $T$. In fact, we define two sequences of extended occupation measures over $\Argbe$: the first, $\eoccup_\strat^{T,1}$, calculates the frequency of a tuple \argmeb\ by counting the beliefs at the beginning of period $t$ and the second, $\eoccup_\strat^{T,2}$, by counting the beliefs at the end of period $t$. Whereas the martingale property of beliefs implies these two sequences have the same (subsequential) limits, they have different conditional independence properties, which we use to derive the representation of \outcome\ via a two-stage mechanism. Suppose for simplicity that $\bar\occup_\strat^{T,1}$ (and hence, $\bar\occup_\strat^{T,2}$), have limit $\bar\occup_\strat$, though this assumption is not needed for the proof.\footnote{By assumption, the marginal of $\bar\occup_\strat^T$ on \Argme\ converges to $\occup_\strat$. Moreover, we show that the marginal of $\bar\occup_\strat^T$ on \ebeliefs\ also converges (\autoref{prop:belief-occup}). However, this is not enough to ensure the convergence of $\bar\occup_\strat^T$.}  A consequence of the martingale property of beliefs is that only the long-run beliefs of the agent are in the support of $\bar\occup_\strat$.

The proof consists of three steps. In the first step, we show that $\bar\occup_\strat$ admits the following decomposition:
\begin{align*}
\bar\occup_\strat(\{\argme\}\times\tilde\Delta)=\int_{\tilde\Delta}\belief(\estate)\typed(\type)\rep(\mssg|\type,\belief)\allocrb(\alloc|\mssg,\belief)\bsplit(d\belief),
\end{align*}
where (i) $\bsplit\in\Delta(\ebeliefs)$ has mean $\prior\otimes\cord$, where recall $\cord$ is the measure on \fcor, and (ii) $\rep:\Types\times\ebeliefs\to\Delta(\mssgs)$ is a ``Markovian reporting strategy'', and (iii) $\allocrb$ is \emph{almost} the allocation rule in the two-stage mechanism, and hence the prime notation. Moreover, on the support of $\belief$, $\allocrb(\cdot|\cdot,\belief)$ coincides with $\rmech(\cdot|\cdot,\estate)$, implying that $\rmech(\cdot|\cdot,\estate)$ is constant in \estate\ on the support of \belief. This is the step which exploits the different conditional independence properties of $\bar\occup_\strat^{T,1}$ and $\bar\occup_\strat^{T,2}$. We use $\bar\occup_\strat^{T,1}$ to show the conditional independence of types and beliefs\textemdash all agent types in period $t$ have the same belief at the beginning of period $t$\textemdash and $\bar\occup_\strat^{T,2}$ to show the conditional independence of the allocation and the state\textemdash  the belief at the end of period $t$ contains all the information about the state contained in the allocation. 

In the second step, we show that $\rep:\Types\times\ebeliefs\to\Delta(\mssgs)$ is indeed a best response for the agent when her type is \type\ and her belief is \belief. In other words, the support of $\rep(\cdot|\type,\belief)$ is contained in 
\begin{align*}
\arg\max_{\mssg\in\mssgs}\sum_{\estate\in\estates}\belief(\estate)\sum_{\alloc\in\Alloc}\rmech(\alloc|\mssg,\estate)\pay\aarg,
\end{align*}
for beliefs on the support of $\bsplit$. Hence, we can use \rep\ and \rmech\ to define a direct mechanism $\allocr(\cdot|\cdot,\belief):\Types\to\Delta(\Alloc)$ that satisfies the agent's participation and incentive constraints when her belief is \belief. Together, Steps 1 and 2 allow us to obtain the representation of \outcome\ as in \autoref{eq:ts-impl}.\footnote{Whereas the above two-stage mechanism is described in terms of beliefs over $\States\times\fcor$, we show in the appendix how to derive from it a two-stage mechanism in terms of beliefs over $\States$.
}  

Whereas the above steps are enough to show that \outcome\ is implementable by \emph{some} individually rational and incentive compatible two-stage mechanism, they do not necessarily imply that the distribution over posteriors \bsplit\ is the one induced by the information structure calibrated to \rmech, \cexp. The last step of the proof shows that even if this is not the case, the agent \emph{adequately learns} the information contained in \cexp\ in the sense of \cite{aghion1991optimal}. Indeed, \autoref{lemma:adequate-learning} shows a strategy exists that approximately delivers the payoff from learning \cexp, so that the agent's payoff under \strat\ is at least the payoff she would obtain if she had learned \cexp. Because the payoff from learning \cexp\ is the maximal payoff the agent can possibly attain, we conclude that the payoff under \strat\ is the payoff the agent would attain when facing the calibrated information structure \cexp\ (and best responding to it).

\subsection{Dynamic Mechanisms}\label{sec:dynamic}
In this section, we consider the case in which the designer can offer the agent a \emph{dynamic} mechanism, that is, one that conditions the allocation in each period on the history of past allocations and reports. The analysis herein allows us to describe the limits implied by calibration on the set of implementable outcomes.

\paragraph{Dynamic mechanisms} A dynamic mechanism  $\dmech=(\dmecht)_{t\in\naturals}$ is a sequence of mappings that condition on the state, the history of participation decisions, type reports and allocations, and today's report, and output an allocation. Formally, expand the set of type reports and allocations by a non-participation message and the outside option, which we denote by  $\Types\Alloc_\emptyset=\Types\times\Alloc\cup\{(\emptyset,\oo)\}$.\footnote{This notation allows us to keep the definitions of the histories when the agent participates and does not participate symmetric, and saves us on including the agent's participation strategy in the histories.} For each $t\in\naturals$, define the mechanism in period $t$, $\dmecht:\States\times(\Types\Alloc_\emptyset)^{t-1}\times\Types\to\Delta(\Alloc)$.  Because the designer can flexibly design the mechanism in each period we no longer rely on the randomization device. 

A dynamic mechanism induces an extensive-form game for the agent, in which in each period, the agent decides whether to participate, and conditional on participation what type to report. Whenever the agent chooses not to participate, she obtains her outside option \oo. We denote by \pp\ the agent's participation strategy and by \strat\ the agent's reporting strategy.

\paragraph{Implementation} Our notion of implementation continues to be based on the occupation measure over the set of allocations, types, participation decisions, type reports, and states, induced by the distributions \prior\ and \typed, the mechanism \dmech, and the agent's participation and reporting strategy. However, as we show in \autoref{appendix:dyn-rp}, it is without loss to focus on mechanisms such that (i) participation with probability 1 and truthtelling is a best response for the agent, and (ii) the mechanism implements the outside option with probability 1 in all future periods following a non-participation decision by the agent.\footnote{That is, starting from a dynamic mechanism \dmech\ and a best response strategy $(\pp,\strat)$, one can construct an alternative mechanism $\dmech^\prime$ such that participation and truthtelling are a best response for the agent and preserves the distribution over $(\States\times\Types\times\Alloc)^\infty$ induced by \estrat\ and \dmech.} Thus, we focus on dynamic mechanisms \dmech\ such that (i) a best response exists, and (ii) the occupation measure over \Arg\ is well-defined. 

\paragraph{Incentives in dynamic mechanisms} Dynamic mechanisms allow the designer to condition the agent's allocation on the history of past participation decisions and reports (and allocations), and hence allow the designer to implement outcomes that satisfy weaker notions of truthtelling and participation, which we explain next.

Because the designer can condition the mechanism on the history of past reports, he can compare the frequency of type reports against the type distribution. So long as the agent is telling the truth, the frequency of reports will match the type distribution \typed\ over large blocks of time. In fact, any reporting strategy whose expected frequency of reports matches the type distribution will be indistinguishable from truthtelling. %
\begin{definition}[Undetectable deviations]\label{def:undetectable}
An undetectable deviation is a reporting strategy $\strat:\Types\to\Delta(\Types)$ such that for all $\typeb\in\Types$
\begin{align*}
\sum_{\type\in\Types}\typed(\type)\strat(\typeb|\type)=\typed(\typeb).
\end{align*}
\end{definition}
By tracking the empirical distribution of type reports, the designer can dissuade the agent from employing detectable deviations. Thus, in a dynamic mechanism, the designer should be concerned with only discouraging undetectable deviations. This leads to a weaker notion of incentive compatibility for allocation rules:
\begin{definition}[Unprofitable undetectable deviations]\label{def:rahman-ic}
The allocation rule $\allocr:\Types\times\Beliefs\to\Delta(\Alloc)$ lacks profitable undetectable deviations at belief $\belief\in\Beliefs$ if for all undetectable deviations $\strat$, 
\begin{align*}
\sum_{\type\in\Types}\typed(\type)\sum_{\alloc\in\Alloc}\allocr(\alloc|\type,\belief)\sum_{\state\in\States}\belief(\state)\pay\aarg\geq\sum_{\type\in\Types}\typed(\type)\sum_{\typeb\in\Types}\strat(\typeb|\type)\sum_{\alloc\in\Alloc}\allocr(\alloc|\typeb,\belief)\sum_{\state\in\States}\belief(\state)\pay\aarg.
\end{align*}
A two-stage mechanism \tsmech\ with allocation rule \allocr\ lacks profitable undetectable deviations if $\allocr(\cdot|\cdot,\belief)$ lacks profitable undetectable deviations for all beliefs in the support of the mechanism.
\end{definition}
To illustrate the difference between the lack of profitable undetectable deviations and incentive compatibility, consider the following example from \cite{ball2023quota}. Suppose the agent types are binary, $\{\type_1,\type_2\}$, and equally likely. The set of allocations, $q\in\{0,1\}$, describes whether the agent receives a good. Finally, suppose the agent's payoff is $\pay(q,\type)=q\type$ and $\type_1<\type_2$. Consider the mechanism that allocates the good to $\type_2$: While it is not incentive compatible, it lacks profitable undetectable deviations. The constraint that the deviation must be undetectable implies the gains from $\type_1$ obtaining the good come at the expense of $\type_2$ getting the good.

Consider now the agent's \emph{participation} incentives in the dynamic mechanism: once the agent rejects the mechanism once, the agent obtains her outside option in all continuation histories independent of her participation decision and her types. In other words, whereas the agent can always ensure her outside option by rejecting the mechanism in a given period, she is effectively quitting the mechanism forever for all her types. The following definition introduces the notion of individual rationality satisfied by the mechanism in the long run.
\begin{definition}[Ex ante individual rationality]\label{def:dynamic-ir}
The allocation rule $\allocr:\Types\times\Beliefs\to\Delta(\Alloc)$ is ex ante individually rational at belief $\belief\in\Beliefs$ if   
\begin{align*}
\sum_{\type\in\Types}\typed(\type)\sum_{\alloc\in\Alloc}\allocr(\alloc|\type,\belief)\sum_{\state\in\States}\belief(\state)\pay\aarg\geq\sum_{\type\in\Types}\typed(\type)\sum_{\state\in\States}\belief(\state)\pay(\oo,\type,\state).
\end{align*}
A two-stage mechanism \tsmech\ with allocation rule \allocr\ is ex ante individually rational if $\allocr(\cdot|\cdot,\belief)$ is ex ante individually rational for all beliefs in the support of the mechanism.
\end{definition}

We are now ready to state the main result of this section:
\begin{theorem}[Implementable Outcomes via Dynamic Mechanisms]\label{theorem:dynamic}
A dynamic mechanism exists that implements outcome $\outcome\in\Delta\left(\Arg\right)$ if and only if \outcome\ can be implemented by an ex ante individually rational two-stage mechanism which lacks profitable undetectable deviations. That is, if and only if for all $\aarg\in\Arg$
\begin{align}\label{eq:rahman-outcomes}
\outcome\aarg=\prior(\state)\typed(\type)\int_{\Beliefs}\allocr(\alloc|\type,\belief)\beliefr(d\belief|\state),
\end{align}
where $\beliefr:\States\to\Delta(\Beliefs)$ is Bayes plausible and $\allocr(\cdot|\cdot,\belief):\Types\to\Delta(\Alloc)$ lacks profitable undetectable deviations and is ex ante individually rational on the support of $\prior\otimes\beliefr$.
\end{theorem}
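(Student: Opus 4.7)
The plan is to mirror the two-direction argument behind \autoref{theorem:repeated}, but to track how the designer's ability to condition on history weakens the implementability constraints in precisely the two ways stated.

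For the only-if direction, given a dynamic mechanism $\dmech$ and best-response strategy $\estrat$, I would extend the occupation measures on $\Arg$ to measures on $\Argb$ that also record the agent's belief at the start and at the end of each period, and use the two-sequence martingale argument behind \autoref{theorem:repeated} to decompose the induced outcome as $\outcome\aarg = \prior(\state)\typed(\type)\int_\Beliefs \allocr(\alloc|\type,\belief)\beliefr(d\belief|\state)$ for a Bayes-plausible $\beliefr$ and a state-free $\allocr(\cdot|\cdot,\belief)$. The substantive step is then to show $\allocr$ satisfies the two weaker conditions at every $\belief^\star$ in the support of $\prior\otimes\beliefr$. For ex ante IR: were $\allocr(\cdot|\cdot,\belief^\star)$ to fail it, the agent could deviate in the original dynamic game by refusing to participate on every continuation history along which her belief has converged to $\belief^\star$, securing the outside-option payoff forever along those histories for a strictly higher long-run payoff and contradicting that $\estrat$ is a best response. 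For the lack of profitable undetectable deviations: any undetectable $\stratc:\Types\to\Delta(\Types)$ composed with the Markovian reporting strategy on such continuation histories produces the same long-run empirical distribution of reports as truth-telling, so the designer's history-contingent continuation must act on average as the same fixed allocation rule $\allocr(\cdot|\cdot,\belief^\star)$; a strict one-shot improvement under $\stratc$ would then translate into a strict improvement in the agent's limit-of-means payoff, again contradicting optimality of $\estrat$.

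For the if-direction, I would construct $\dmech$ in blocks. In an initial block whose length grows sublinearly with the horizon (so it has no long-run frequency), the designer uses state-dependent allocations to disclose a belief $\belief$ drawn from $\beliefr(\cdot|\state)$; finitely many observations suffice for the agent to learn $\belief$ as in the proof of \autoref{theorem:repeated}. Thereafter, in each period the designer elicits a report and, conditional on participation, draws the allocation from $\allocr(\cdot|\type,\belief)$, while maintaining the empirical frequency of reports over growing windows. At each verification epoch, the designer switches permanently to $\oo$ if (i) the agent ever refused to participate, or (ii) the empirical frequency of reports deviates from $\typed$ by more than a vanishing threshold. Under truth-telling and participation the law of large numbers makes the trigger fire with vanishing probability, so the agent's limit-of-means payoff is the one induced by $\allocr(\cdot|\cdot,\belief)$ averaged over $\belief$. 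Any detectable deviation or non-participation yields at most the outside-option payoff, which by ex ante IR is weakly dominated by truth-telling; undetectable deviations are unprofitable directly by the hypothesis on $\allocr$. The resulting occupation measure agrees with \eqref{eq:rahman-outcomes}.

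The main obstacle will be the rigorous execution of the undetectable-deviation step on the necessity side. Under arbitrary history-dependence of $\dmech$, turning a one-shot improvement under $\stratc$ into a genuine long-run improvement requires showing that the mechanism cannot covertly condition on finer features of the report sequence than its empirical frequency. This is essentially the content of \cite{rahman2024detecting}'s equivalence between unprofitable undetectable deviations and cyclical monotonicity, which here must be combined with limit-of-means preferences and with the type-belief decomposition delivered by the extended occupation measures. I expect the cleanest route is to first isolate a long-run belief path $\belief^\star$, show that the continuation is measurable with respect to the type-report frequencies along that path, and then execute a block-form perturbation of $\strat$ that realizes $\stratc$ while leaving those frequencies unchanged up to $o(1)$.
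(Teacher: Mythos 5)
Your overall structure mirrors the paper's (extended occupation measures with start-of-period and end-of-period beliefs for necessity; a block construction for sufficiency), but there are two substantive issues worth flagging.

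\emph{Necessity, undetectable deviations.} The ``main obstacle'' you identify\textemdash that the mechanism might ``covertly condition on finer features of the report sequence than its empirical frequency''\textemdash is not actually an obstacle, and the remark that \cite{rahman2024detecting}'s cyclical-monotonicity equivalence is ``essentially the content'' of this step points you in the wrong direction: that equivalence is \emph{not} used in the paper's proof of this theorem (it only enters the discussion of transferable utility). The right observation is simpler and stronger than a frequency statement. If $\stratb:\Types\times\Beliefs\to\Delta(\Types)$ is undetectable, then because types are i.i.d., the induced law over the designer's \emph{entire} report-allocation history $\hat H^t$ coincides, period by period, with the law under truth-telling: adding up the recursion for $\mathbb{P}_{\stratb}^{t+1}$ over the true type and invoking $\sum_{\type}\typed(\type)\stratb(\typeb|\type,\belief)=\typed(\typeb)$ gives exactly the truth-telling transition. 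So no matter how the mechanism conditions on the history\textemdash not just its frequency\textemdash it cannot distinguish $\stratb$ from truth-telling, and the extended occupation measure for $\stratb$ inherits the same $\Alloc\times\hat\Types\times\States\times\Beliefs$ marginal as the one under $\strat$. The reattribution of allocations to true types via the Radon--Nikodym factor $\typed(\type)\stratb(\typeb|\type,\belief)/\typed(\typeb)$ then delivers the one-shot payoff comparison directly. Your ex ante IR sketch is in the right spirit but needs the paper's care (a neighborhood of a bad belief whose boundary is $\bsplit$-null, a deviation that quits on that neighborhood after a late cutoff $L$, and a convergence estimate as $L\to\infty$).

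\emph{Sufficiency.} Here you genuinely depart from the paper: you propose a trigger/quota mechanism that punishes detectable empirical deviations, whereas the paper uses the adjustment-phase construction of \cite{margaria2018dynamic}: blocks alternating between a communication phase (agent reports drive allocations) and an adjustment phase in which the \emph{mechanism itself} generates fictitious reports so that the expected block frequency of reports matches $\typed$. The payoff of that choice is that the report-allocation occupation measure equals $\outcomeb$ \emph{for every agent strategy}\textemdash the agent cannot move the designer's report occupation off $\typed$ even in principle\textemdash so every strategy is automatically reduced to an undetectable deviation, and the verification that truth-telling is optimal collapses to the one-shot hypothesis. Your trigger construction can likely be made to work, but it must handle: (i) ``near-detectable'' strategies whose empirical frequency drifts back toward $\typed$ slowly enough to evade a vanishing threshold while still producing a non-vanishing transient bonus; (ii) the existence of limit occupation measures under such off-path strategies (needed to reduce to the undetectable-deviation hypothesis); and (iii) calibrating thresholds and verification windows so the trigger neither fires under truth-telling nor lets slow detectable drifts through. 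The adjustment construction sidesteps all of these, which is why the paper uses it.
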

\autoref{theorem:dynamic} characterizes the outcome distributions implementable by dynamic mechanisms as those implemented by two-stage mechanisms that satisfy the incentive constraints: unprofitability of undetectable deviations and ex ante individual rationality. Notably, both notions of incentive constraints apply in the aggregate over the type distribution, which reflects the transient nature of the agent's private information.

Comparing \autoref{theorem:repeated} and \autoref{theorem:dynamic}, we see that dynamic mechanisms allow the designer to weaken the incentive constraints of the agent, but do not allow him to engage in richer\textemdash i.e., type-dependent\textemdash disclosures. Despite dynamic mechanisms implying weaker incentive constraints, we can build on the results of \cite{rochet1987necessary} and \cite{rahman2024detecting} to show that in settings with transferable utility, where $\alloc=(q,t)$, dynamic and repeated mechanisms implement the same set of \emph{physical} allocations $q:\Types\times\States\to\reals$.\footnote{This is easily seen in the example after \autoref{def:rahman-ic}. Note the mechanism that allocates the good to the agent if and only if her type is $\type_2$ satisfies monotonicity. Hence, a transfer scheme exists that implements this allocation rule with transfers.} Indeed, \cite{rahman2024detecting} shows the lack of profitable undetectable deviations is equivalent to cyclical monotonicity in \cite{rochet1987necessary}. Thus, in settings with transferable utility, Theorems \ref{theorem:repeated} and \ref{theorem:dynamic} imply that dynamic mechanisms do not allow the designer to expand on the set of implementable distributions over $(q,\type,\state)$.

The proof of the only if direction is similar to that of \autoref{theorem:repeated}, in that we similarly extend the occupation measure to account for the agent's beliefs and show it satisfies the conditional independence properties implied by a two-stage mechanism. In a dynamic mechanism, however, the agent can ensure the payoff of some, but not all deviations. The latter property is what delivers that the two-stage mechanism must lack profitable undetectable deviations.

The proof of the if direction, instead, harnesses a construction in \cite{margaria2018dynamic}. The proof proceeds in two steps. In the first step, we analyze a fictitious model without state uncertainty in which a designer faces a privately informed agent, so that implementable outcomes are elements of $\Delta(\Alloc\times\Types)$. We show that if $\outcomeb(\alloc,\type)=\typed(\type)\allocrb(\alloc|\type)\in\Delta(\Alloc\times\Types)$ is such that \allocrb\ lacks profitable undetectable deviations and is ex ante individually rational, then a dynamic mechanism exists that implements  $\outcomeb$.\footnote{In a repeated principal-agent game with communication, \cite{meng2021value} shows that the principal can guarantee in the patient limit his complete information payoff subject to the constraint that her actions satisfy the cyclic monotonicity condition in \cite{rochet1987necessary}. We view the results as complementary: We focus on implementable outcome distributions, instead of payoffs, when the agent has limit of the means preferences, which makes our notion of implementation exact.} This is the step that relies on \cite{margaria2018dynamic}. We construct a dynamic mechanism, which can be split into blocks of random length. Each block consists of two phases: a reporting phase and an adjustment phase. In the reporting phase, the mechanism uses the agent's reports to determine the allocation. Instead, in the adjustment phase, the mechanism simulates type reports so that the frequency of type reports matches the type distribution (in expectation) over the length of the block, whenever this is not the case at the end of the reporting phase. These two steps ensure that the expected frequency of type reports and allocations matches $\outcomeb$. We then leverage that $\allocrb(\cdot|\type)$ lacks profitable undetectable deviations to show the agent cannot do better than by telling the truth. Hence, the induced frequency of types and allocations also matches  $\outcomeb$. Moreover, the construction ensures that after any history, truthtelling delivers a continuation payoff equal to the ex ante payoff. Because $\allocrb$ is ex ante individually rational, we conclude the participation constraints are satisfied.

The second step uses the above result and the representation of the outcome distribution via a two-stage mechanism to construct a dynamic mechanism that implements any outcome distribution that satisfies the properties in \autoref{theorem:dynamic}. Indeed, one can construct a dynamic mechanism which uses a finite number of steps to disclose information to the agent via the realized allocations,\footnote{A standard argument implies that if \outcome\ satisfies \autoref{eq:rahman-outcomes}, then a finite support $\beliefrb$ exists such that \outcome\ satisfies \autoref{eq:rahman-outcomes} with \beliefrb\ in place of \beliefr.} and then continues as in the above construction to implement the allocation rule $\allocr(\cdot|\cdot,\belief)$.
\section{Conclusions}\label{sec:conclusions}
Many economic institutions\textemdash online platforms, lenders, regulators\textemdash rely on mechanisms that remain fixed while agents interact with them repeatedly. When the mechanism's operation depends on a state known only to the designer, agents can learn this state from their outcomes, constraining what the mechanism can implement. We introduce calibrated mechanism design, a static solution concept that requires mechanisms to remain incentive compatible given the information they endogenously reveal about the designer's private state through repeated use. In private value environments, the calibration constraint pushes the designer toward full transparency, precluding Crémer-McLean-style schemes under transferable utility. In single agent-settings, calibrated mechanisms are equivalent to two-stage mechanisms. This equivalence yields a practical algorithm for finding optimal calibrated mechanisms, combining tools from information design and mechanism design. We provide a microfoundation by showing calibrated mechanisms
characterize exactly what is implementable when an infinitely patient agent repeatedly interacts with the same mechanism, and study the implications on implementable outcomes of allowing the designer to offer fully dynamic mechanisms.

The most important direction for future work is deepening the analysis of multi-agent settings. On the one hand, understanding when generalized two-stage mechanisms coincide with calibrated mechanisms would enable the study of multi-agent applications, while abstracting from the dynamics of experimentation. On the other hand, extending our microfoundation to the multi-agent case would further ground calibrated mechanism design. More broadly, our framework suggests that any institution whose repeated operation leaks information about its designer's knowledge faces a fundamental  tradeoff between conditioning the mechanism on this information and the information this leaks to participants, and calibrated mechanism design offers a disciplined way to analyze it.

{\singlespacing{
\bibliographystyle{ecta}
\bibliography{calibrated}}}

\appendix
\vspace{0.5cm}

\section*{Mathematical conventions}
Throughout the appendix, we take all sets to be Polish spaces, that is, completely metrizable, separable, topological spaces, and endow them with their Borel $\sigma$-algebra. We endow product spaces with their product $\sigma$-algebra. For a Polish space $X$, we let $\borel_X$ denote its Borel $\sigma$-algebra and $\Delta(X)$ the set of all Borel probability measures on $X$, endowed with the weak$^*$ topology. Thus, $\Delta(X)$ is also a Polish space \citep{aliprantis2006infinite}, and it is compact, whenever $X$ is compact \citep[Theorem 15.11 and Theorem 15.15]{aliprantis2006infinite}.

\paragraph{Notational conventions} If $X$ is a Polish space, $\tilde X$ denotes a measurable subset of $X$, i.e., an element of the Borel $\sigma$-algebra on $X$, and $C_b(X)$ denotes the set of continuous and bounded functions on $X$. Given a measure $\nu\in\Delta(\times_{i=1}^N Y_i)$, we denote by $\nu_{Y_jY_k\dots Y_l}$ the marginal of $\nu$ on $Y_jY_k\dots Y_l$. When one of the $Y_i=\Delta(X_i)$, we write $\Delta$ instead of $Y_i$ in the subscript, when it is unlikely to generate confusion. 

Throughout the appendix, we define different distributions that arise in our proofs. Because we endow product spaces with their product topology and their product Borel $\strat$-algebra, it is enough to define these new measures on the measurable rectangles and we follow this convention throughout.
\paragraph{Disintegration} We rely on the notion of disintegration in many of our proofs \citep[Chapter 10.6]{bogachev2007measure}. We define disintegration in the context of product sets $X\times Y$, as this is the one that shows up in the proof, but it is more general than this. Given a measure $\nu\in\Delta(X\times Y)$, $\lambda:X\times \borel_Y\to[0,1]$ is the disintegration of $\nu$ along $X$ if the following holds
\begin{enumerate}
\item For all $\tilde Y\in\borel_Y$, $x\mapsto\lambda_x(\tilde Y)$ is measurable,
\item For $\nu_X$-almost everywhere $x\in X$, $\tilde Y\mapsto\lambda_x(\tilde Y)$ is a probability measure, and
\item For every bounded measurable function $\testf:X\times Y\to\reals$, $$\int_{X\times Y}\testf(x,y)\nu(d(x,y))=\int_X\int_Y\testf(x,y)\lambda_x(dy)\nu_X(dx).$$
\end{enumerate}
\citet[Theorem 1.23]{kallenberg2017random} ensures that $\{\lambda_x:x\in X\}$ exists and is unique $\nu_X$-almost everywhere.

\section{Omitted proofs from \autoref{sec:model}}
\begin{proof}[Proof of \autoref{theorem:pv}] Suppose the agents' payoffs are state independent and in a slight abuse of notation let $\payi(\alloci,\typei)$ denote agent $i$'s utility. 

The calibrated mechanism design problem is
\begin{align*}
\max_{\mechanism:\Types\times\States\times[0,1]\to\Delta(\Alloc)}\sum_{\state\in\States}\prior(\state)\sum_{\type\in\Types}f(\type|\state)\int_0^1\ddp(\mechanism(\type,\state,\cor),\type,\state)\lebesgue(d\cor),
\end{align*}
subject to the following constraints holding for all $(\state,\cor)\in\States\times[0,1]$, $i\in\setplayers$, $\typei\in\Types$ and $\typebi\in\Types$:
\begin{align*}
&\sum_{\alloci\in\Alloci}\mathbb{E}_{\typedmi(\cdot|\state)}[\sum_{\allocmi\in\Allocmi}\mechanism(\typei,\typemi,\state,\cor)(\alloci,\allocmi)]\payi(\alloci,\typei)\geq\sum_{(\alloci\in\Alloci)}\mathbb{E}_{\typedmi(\cdot|\state)}[\sum_{\allocmi\in\Allocmi}\mechanism(\typebi,\typemi,\state,\cor)(\alloci,\allocmi)]\payi(\alloci,\typei)\\
&\sum_{\alloci\in\Alloci}\mathbb{E}_{\typedmi(\cdot|\state)}[\sum_{\allocmi\in\Allocmi}\mechanism(\typei,\typemi,\state,\cor)(\alloci,\allocmi)]\payi(\alloci,\typei)\geq\payi(\ooi,\typei).
\end{align*}
In other words, for each agent $i$, her interim allocation rule $\cexpi(\state,\cor)$ must be an element of $\csignals_{IC/IR,i}$, where the latter is the set of interim allocation rules $\csignalsi:\Typesi\to\Delta(\Alloci)$ that satisfy the following incentive compatibility and individual rationality constraints:
\begin{align*}
(\forall\typei,\typebi\in\Typesi)\sum_{\alloci\in\Alloci}  \csignali(\alloci|\typei) \payi(\alloci,\typei)&\geq \sum_{\alloci\in\Alloci} \csignali(\alloci|\typebi)\payi(\alloci,\typei)\\
(\forall\typei\in\Typesi)\sum_{\alloci\in\Alloci}  \csignali(\alloci|\typei) \payi(\alloci,\typei)&\geq \payi(\ooi,\typei).
\end{align*}
Because the individual rationality and incentive constraints must hold for each pair $(\state,\cor)$, the designer's problem is separable across variables for different $\state,\cor$: the sets of variables $\mechanism(\cdot,\state,\cor)$ appear in different sets of constraints and the objective function is additively separable across those variables. Consequently, the designer's problem can be solved as a collection of independent problems, one for each $\state,\cor$.
\end{proof}

\color{black}
\section{Omitted proofs from \autoref{sec:main}}\label{appendix:main}
In this section, we present the proofs of \autoref{theorem:cmd-two-stage} and \autoref{theorem:cmd-two-stage-multi}. We proceed as follows: We first prove \autoref{theorem:cmd-two-stage-multi}, as when $N=1$ its proof implies the ``if'' direction of \autoref{theorem:cmd-two-stage}. We then prove the ``only if'' direction of \autoref{theorem:cmd-two-stage}. 

\begin{proof}[Proof of \autoref{theorem:cmd-two-stage-multi}]
We focus on the case in which types and states are independently distributed, and explain how to extend the proof when they are not.

 Let $\outcome\in\Delta(\Arg)$ denote the outcome distribution implemented by an incentive compatible and individually rational calibrated mechanism. We show transition probabilities $\beta:\States\to\Delta(\Beliefs^{N})$, $\bar\allocr:\Types\times\States\times\Beliefs^{N}\to\Delta(\Alloc)$, and $\allocr_i:\Typesi\times\Beliefs\to\Delta(\Alloci)$ for $i\in\{1,\dots,N\}$ exist such that 
\begin{align}\label{eq:decomp-beg}
\outcome\aarg=\prior(\state)\typed(\type)\int_{\Beliefs^{N}}\bar\allocr(\alloc|\type,\state,\belief_1,\dots,\belief_N)\beta(d(\belief_1,\dots,\belief_N)|\state),
\end{align}
and for all $i\in\{1,\dots,N\}$, (i) $\allocr_i$ satisfies \autoref{itm:gts-iar} of \autoref{definition:ts-gral}, and (ii) on the support of $\prior\otimes\beta$, $\allocr_i(\cdot|\cdot,\belief_i)$ is incentive compatible and individually rational when agent $i$ holds belief $\belief_i$.

Let $\excdf_{\state,i}:[0,1]\to S_i^*$ denote the mapping $\cor\mapsto\excdf_i(\state,\cdot)$. For $\tilde S_i^*\in\Delta(\Alloci)^{\Typesi}$, define
\begin{align*}
\mathrm{Pr}_i(\{\state\}\times\tilde S_i^*)=\prior(\state)\lebesgue(\excdf_{\state,i}^{-1}(\tilde S_i^*))=\int_{\tilde S_i^*}\belief_i(\state|s_i^*)\bsplit_{\rmech,i}(ds_i^*),
\end{align*}
where the second equality follows from disintegration of $\Pr_i\in\Delta(\States\times S_i^*)$ along $S_i^*$, and corresponds to the definition of Bayes rule for agent $i$. Define the measurable mappings, $T_i:S_i^*\to\Beliefs$ and $T:S^*\to\Beliefs^N$ as follows: $T_i(s_i^*)=\belief_i(\cdot|s_i^*)$ and $T(s^*)=(T_1(s_1^*),\dots, T_N(s_N^*))$.

Define a joint distribution $Q\in\Delta(\Arg\times\Beliefs^N)$ as follows:
\begin{align*}
Q(\{\aarg\}\times\times_{i=1}^N\tilde{\Delta}_i)=\prior(\state)\typed(\type)\int_{\excdf_\state^{-1}(T^{-1}(\times\tilde{\Delta}_i))}\rmech(\alloc|\type,\state,\cor)\lebesgue(d\cor),
\end{align*}
where $\excdf_\state^{-1}(T^{-1}(\times\tilde{\Delta}_i))=\cap_{i=1}^N\{\cor:T_i(\excdf_{\state,i}(\cor))\in\tilde{\Delta}_i\}$.

We note the following properties of $Q$. First, consider its marginal over $\Types\times\States\times\Beliefs^{N}$,
\begin{align*}
Q_{\Types\States\Delta^N}(\{(\type,\state)\}\times\times_{i=1}^N\tilde{\Delta}_i)=\prior(\state)\typed(\type)\lebesgue(\cap_{i=1}^N\{\cor:T_i(\excdf_{\state,i}(\cor))\in\tilde{\Delta}_i\}),
\end{align*}
which implies that the disintegration of $Q_{\Types\States\Delta^N}$ along $\Types\times\States$, $\beta:\Types\times\States\to\Delta(\Beliefs^N)$ does not depend on \type. This automatically implies that $Q$ admits the following disintegration:
\begin{align}\label{eq:decomp-end}
Q(\{\aarg\}\times\times_{i=1}^N\tilde{\Delta}_i)=\prior(\state)\typed(\type)\int_{\times_{i=1}^N\tilde{\Delta}_i}\bar\allocr(\alloc|\type,\state,\belief_1,\dots,\belief_N)\beliefr(d(\belief_1,\dots,\belief_N)|\state),
\end{align}
which, in turn, delivers \autoref{eq:decomp-beg}.
 Moreover, note that the marginal of $\beta$ on the beliefs of agent $i$, $\beliefr_i:\States\to\Delta(\Beliefs)$, satisfies
\begin{align*}
\beliefr_i(\tilde{\Delta}_i|\state)=\lebesgue(\excdf_{\state,i}^{-1}(T_i^{-1}(\tilde{\Delta}_i))).
\end{align*}
Consider now the marginal on $\Alloci\times\Typesi\times\States\times\Beliefs$ of $Q$, $Q_{\Alloci\Typesi\States\Delta_i}$, which satisfies:
\begin{align}\label{eq:a-ind-1}
&Q_{\Alloci\Typesi\States\Delta_i}(\{\argi\}\times\tilde{\Delta}_i)=\sum_{\typemi\in\Typesmi}\sum_{\allocmi\in\Allocmi}Q(\{\aarg\}\times\tilde{\Delta}_i\times\Beliefs^{N-1})=\\
&=\prior(\state)\typed_i(\typei)\int_{\excdf_{\state,i}^{-1}(T_i^{-1}(\tilde{\Delta}_i))}\left(\sum_{\typemi\in\Typesmi}\typedmi(\typemi)\sum_{\allocmi\in\Allocmi}\rmech(\alloci,\allocmi|\typei,\typemi,\state,\cor)\right)\lebesgue(d\cor)\nonumber\\
&=\prior(\state)\typed_i(\typei)\int_{\excdf_{\state,i}^{-1}(T_i^{-1}(\tilde{\Delta}_i))}\excdf_{\state,i}(\cor)(\alloci|\typei)\lebesgue(d\cor)=\prior(\state)\typed_i(\typei)\int_{T_i^{-1}(\tilde{\Delta}_i)}s_i^*(\alloci|\typei)(\lebesgue\circ\excdf_{\state,i}^{-1})(ds_i^*).\nonumber
\end{align}
Lastly, $Q_{\Alloci\Typesi\States\Delta_i}$ admits the following representation via disintegration:
\begin{align}\label{eq:a-ind-2}
&Q_{\Alloci\Typesi\States\Delta_i}(\{(\alloci,\typei,\state)\}\times\tilde{\Delta}_i)=\prior(\state)\typedi(\typei)\int_{\tilde{\Delta}_i}\allocr_i(\alloci|\typei,\state,\beliefi)\beliefr_i(d\beliefi|\state)\nonumber\\&
=\prior(\state)\typedi(\typei)\int_{\tilde{\Delta}_i}\allocr_i(\alloci|\typei,\state,\beliefi)(\lebesgue\circ\excdf_{\state,i}^{-1}\circ T_i^{-1})(d\beliefi)
\end{align}
Together with the uniqueness of disintegration and the sufficiency property of beliefs, Equations \ref{eq:a-ind-1} and \ref{eq:a-ind-2} imply that $\allocr_i$ does not depend on \state. The incentive compatibility and individual rationality of $\allocr_i$ follows from that of the calibrated mechanism. 

Finally, consider the case in which \type\ and \state\ are not independent. Then, the experiment \beliefr\ in \autoref{eq:decomp-end} induces a joint distribution over the beliefs of $N$ fictitious agents whose prior over the state is given by \prior. Agent $i$'s updated beliefs when her type is \typei\ obtain from a transformation of \belief\ \citep{alonso2016bayesian,laclau2017public}.\footnote{Let $\prior(\cdot|\typei)\in\Beliefs$ denote the prior of the agent with type \typei\  and $\belief(\cdot|\signali)$ denote the updated belief of an agent with prior \prior\ upon observing signal \csignali. When the signal is \csignali, the agent with type \typei\ updates her beliefs to:
\begin{align*}
    \belief_i(\cdot|\typei,\csignali)=\frac{\prior(\cdot|\typei)\cdot\frac{\belief(\cdot|\csignali)}{\prior(\cdot)}}{\left\|\prior(\cdot|\typei)\cdot\frac{\belief(\cdot|\csignali)}{\prior(\cdot)}\right\|},
\end{align*}
where the $\cdot$ and $/$ operations are meant componentwise, and $\|\cdot\|$ is the $l^1$-norm.\label{ftn:belief-transform}} Thus, up to changing $\typed(\type)$ by $\typed(\type|\state)$, and interpreting the draw from the Blackwell experiment as the posterior of an agent with prior belief \prior, the result follows.
\end{proof}
\begin{proof}[Proof of \autoref{theorem:cmd-two-stage}]
~
\paragraph{``Only if'' direction} Similar to the proof of \autoref{theorem:cmd-two-stage-multi}, we focus on the case in which \type\ and \state\ are independent. Suppose $\outcome\in\Delta\left(\Arg\right)$ is implemented by an incentive compatible and individually rational two-stage mechanism. That is,
\begin{align}\label{eq:ts}
\outcome\aarg=\prior(\state)\typed(\type)\int_{\Beliefs}\allocr(\alloc|\type,\belief)\beliefr(d\belief|\state),
\end{align}
and \allocr\ is incentive compatible and individually rational on the support of $\prior\otimes\beta$. We construct an incentive compatible and individually rational calibrated mechanism that implements \outcome.

First, if \outcome\ satisfies \autoref{eq:ts}, \cite{rubin1958note} and Carath\'eodory's theorem \citep[Theorem 5.32]{aliprantis2006infinite} imply that a finite support $\beliefr^\prime:\States\to\Delta\left(\{\belief_1,\dots,\belief_K\}\right)$ exists such that  for all $\aarg\in\Arg$\footnote{Namely, \autoref{eq:ts} implies that for all $(\type,\state)$, $\outcome(\cdot|\type,\state)\in\mathrm{cl}\,\mathrm{co}\{\allocr(\cdot|\type,\belief):\belief\in\Beliefs\}$. \cite{rubin1958note} implies that our under assumptions $\mathrm{cl}\,\mathrm{co}\{\allocr(\cdot|\type,\belief):\belief\in\Beliefs\}=\mathrm{co}\{\allocr(\cdot|\type,\belief):\belief\in\Beliefs\}$, and the rest of the claim follows from Carath\'eodory's theorem.}
\begin{align*}
\outcome\aarg=\prior(\state)\typed(\type)\sum_{k=1}^K\allocr(\alloc|\type,\belief_k)\beliefr^\prime(\{\belief_k\}|\state).
\end{align*}
For each $\state\in\States$, partition $[0,1]=\cup_{k=1}^{K-2}[b_k^\state,b_{k+1}^\state)\cup[b_{K-1}^\state,1]$, where $b_1=0$, and for all $k\in\{1,\dots,K-2\}$, $b_{k+1}^\state=\sum_{l=1}^k\beliefr^\prime(\{\belief_l\}|\state)$. Define for $\cor\in[b_k^\state,b_{k+1}^\state)$
\begin{align*}
\rmech(\alloc|\type,\state,\cor)=\allocr(\alloc|\type,\belief_k).
\end{align*}
The calibrated information structure is $\cexp(\state,\cor)=\rmech(\cdot|\cdot,\state,\cor)=\allocr(\cdot|\type,\belief_k)$ for $\cor\in[b_k^\state,b_{k+1}^\state)$ if $m\leq K-2$ or $\cor\in[b_{K-1}^\state,1]$.

We now show that for all \type\ and all $s\in\mathrm{supp}\; \cexp$, the mechanism \rmech\ is incentive compatible and individually rational. Note that \cexp\ has finite support, and  let $s\in\mathrm{supp}\;\cexp$ and let $\belief(\cdot|s)$ denote the updated posterior. Then, $k$ exists such that the following holds:
\begin{align*}
\belief(\state|s)=\frac{\prior(\state)\lebesgue(\{\cor:\excdf(\state,\cor)=s\})}{\sum_{\stateb\in\States}\prior(\stateb)\lebesgue(\{\cor:\excdf(\stateb,\cor)=s\})}=\frac{\prior(\state)(b_{k+1}^\state-b_k^\state)}{\sum_{\stateb\in\States}\prior(\stateb)(b_{k+1}^{\stateb}-b_k^{\stateb})}=\belief_k(\state).
\end{align*}
Moreover, because $\rmech(\cdot|\cdot,\state,\cor)=\allocr(\cdot|\cdot,\belief_k)$, then it satisfies the agent's incentive compatibility and individual rationality constraints when she holds belief $\belief_k$.\footnote{To extend the result to the case in which the agent's type is correlated with the state, note the following. Knowing \prior\ updates to $\belief_k$ conditional on \signal\ is enough to pin down the agent's belief $\belief(\cdot|\type,\signal)$, with respect to which the agent's incentive compatibility and individual rationality constraints are defined (see \autoref{ftn:belief-transform}).} 
\end{proof}

\newpage
{\LARGE{\textbf{Supplementary Appendix}}}
\section{Omitted proofs from \autoref{sec:microf}}\label{appendix:microf}

\subsection{Repeated Mechanisms}\label{appendix:repeated}
In this section, we present the proof of \autoref{theorem:repeated}. To do so, we first complete the formal definition of the game induced by repeating mechanism $\rmech:\mssgs\times\States\times\fcor\to\Delta(\Alloc)$, by specifying the histories, strategy space, and the distribution over terminal histories induced by the agent's strategy and the mechanism. Having laid this groundwork, we describe the proof strategy, and then provide the formal details of the proof. Throughout this section, we use the shorthand $\estates=\States\times\fcor$, and denote its elements by \estate.

\paragraph{Histories and strategies} Histories through period $t\in\naturals$ are defined as $H^t\equiv(\Types\times\mssgs\times\Alloc)^{t-1}$. The set of infinite histories from the agent's point of view is $H^\infty$. The set of terminal histories is $\terminalsh\equiv\estates\times H^\infty$, where recall \fcor\ is finite and endowed with some measure  \cord. 

The agent's behavioral strategy is defined as a collection $\strat\equiv(\strat_t)_{t\in\naturals}$ such that for all $t\geq 1$
\[\strat_t:H^t\times\Types\to\Delta(\mssgs).\]
The tuple of distributions $(\prior,\cord,\typed)$ together with the mechanism $\rmech$ and the agent's strategy $\strat$ determine a joint distribution over $\terminalsh$ by the Ionescu-Tulcea theorem \citep[Theorem 10.7.3]{bogachev2007measure}. We provide more details on this probability distribution below. Denote by $\mathbb{P}_{(\prior,\cord,\typed,\rmech,\strat)}$ and $\mathbb{E}_{(\prior,\cord,\typed,\rmech,\strat)}$ the probability distribution over the terminal histories and the expectation with respect to this distribution, respectively. Whenever it is not likely to lead to confusion, we drop the dependence on $(\prior,\cord,\typed,\rmech,\strat)$, and whenever we want to emphasize the dependence on the agent's strategy we note the dependence on \strat.
\color{black}
\paragraph{The distribution over terminal histories \terminalsh} For future use, we review the construction of $\mathbb{P}_\strat$. For each $t$, the distributions  $(\prior,\cord,\typed)$ together with the mechanism $\rmech$ and the agent's strategy $\strat$  determine a distribution over $\estates\times H^t$, which we denote by $\mathbb{P}_\strat^t\in\Delta(\estates\times H^t)$. Note that for any subset $\tilde\terminals^t\subset\estates\times H^t$, 
\begin{align}\label{eq:p-t-t+1}
\mathbb{P}_\strat^t(\tilde\terminals^t)=\mathbb{P}_\strat^{t+1}(\tilde\terminals^t\times(\Types\times\mssgs\times\Alloc)).
\end{align}
Moreover, 
\begin{align}\label{eq:p-t-t+1-2}
\mathbb{P}_\strat^{t+1}(\estate,h^t,\type,\mssg,\alloc)=\mathbb{P}_\strat^t(\estate,h^t)\typed(\type)\strat_t(h^t,\type)(m)\rmech(\alloc|\mssg,\estate).
\end{align}
By the Ionescu-Tulcea theorem, the distribution $\mathbb{P}_\strat\in\Delta(\estates\times H^\infty)$ is the unique distribution that satisfies that for all $t\in\naturals$, $\tilde\terminals^t\subset\estates\times H^t$, 
\begin{align}\label{eq:p-sigma}
\mathbb{P}_\strat(\tilde\terminals^t\times\prod_{s=t+1}^\infty(\Types\times\mssgs\times\Alloc))=\mathbb{P}_\strat^t(\tilde\terminals^t).
\end{align}

\paragraph{Belief system} The agent's beliefs over \estates\ at the beginning of each $t$ are determined by the belief system, which in a slight abuse of notation we denote by $\belief_t:H^t\to\Delta(\estates)$. The belief system satisfies
\begin{align*}
\mathbb{P}_\strat^t(h^t)\belief_t(\estate|h^t)=\mathbb{P}_\strat^t(\estate,h^t).
\end{align*}
That is, whenever $h^t$ is such that $\mathbb{P}_\strat\left(\{\tilde{h}\in H^\infty:\tilde{h}^t=h^t\}\right)>0$,
\begin{align*}
\belief_t(\estate|h^t)=\frac{\mathbb{P}_\strat^t(\estate,h^t)}{\mathbb{P}_\strat^t(h^t)}=\mathbb{P}_\strat^t(\estate|h^t).
\end{align*}
Given $\mathbb{P}_\strat\in\Delta(\terminalsh)$, define $\belief_\infty(\estate|h^\infty)\equiv\mathbb{P}_\strat(\estate|h^\infty)$ to be the belief system conditional on the whole terminal history $h^\infty$.
\begin{remark}[Belief system and strategies as functions on $\terminalsh$] Whereas the beliefs and strategies are defined on the finite histories, it is sometimes convenient to write them as functions on $\terminalsh$ that are adapted to $H^t$. \end{remark}

\paragraph{A property of the belief system} We collect here a property of the belief system which we use in our proofs below.

\begin{lemma}[Martingale property under weak$*$ convergence]\label{lemma:beliefs-0}
    $\belief_t(h^\infty)\weakc\belief_\infty(h^\infty)$ $\mathbb{P}_\strat$-almost surely.
    \end{lemma}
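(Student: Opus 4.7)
The plan is to reduce the claim to a standard martingale convergence argument, using that $\estates = \States \times \fcor$ is finite. Since $\Delta(\estates)$ is a simplex in a finite-dimensional space, weak$^*$ convergence $\belief_t \weakc \belief_\infty$ is equivalent to the coordinatewise convergence $\belief_t(\estate \mid h^\infty) \to \belief_\infty(\estate \mid h^\infty)$ for each $\estate \in \estates$. Hence it suffices to prove this pointwise convergence $\mathbb{P}_\strat$-almost surely for every (of the finitely many) $\estate \in \estates$, and then take a countable intersection of the resulting full-measure sets.

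First I would view $\belief_t$ and $\belief_\infty$ as random variables on the probability space $(\terminalsh, \mathbb{P}_\strat)$ via the remark preceding the lemma. Let $\mathcal{F}_t \subset \borel_{\terminalsh}$ be the $\sigma$-algebra generated by the projection $\terminalsh \to H^t$, and set $\mathcal{F}_\infty = \sigma\bigl(\bigcup_t \mathcal{F}_t\bigr)$. Then, for each fixed $\estate \in \estates$, the definition of the belief system together with \autoref{eq:p-t-t+1} and \autoref{eq:p-sigma} yields
\begin{equation*}
    \belief_t(\estate \mid h^\infty) \;=\; \mathbb{P}_\strat\bigl(\{\estate\} \,\big|\, \mathcal{F}_t\bigr)
    \;=\; \mathbb{E}_\strat\bigl[\,\mathbbm{1}_{\{\estate\}} \,\big|\, \mathcal{F}_t\bigr],
    \qquad \mathbb{P}_\strat\text{-a.s.}
\end{equation*}
This identification makes $\bigl(\belief_t(\estate \mid \cdot)\bigr)_{t \geq 1}$ a uniformly bounded (values in $[0,1]$) martingale with respect to the filtration $\{\mathcal{F}_t\}$, by the tower property of conditional expectation.

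The main (and essentially only) work is then an appeal to L\'evy's upward convergence theorem: for a bounded random variable $X$ and a filtration $\{\mathcal{F}_t\}$, one has $\mathbb{E}[X \mid \mathcal{F}_t] \to \mathbb{E}[X \mid \mathcal{F}_\infty]$ both almost surely and in $L^1$. Applied to $X = \mathbbm{1}_{\{\estate\}}$, this gives
\begin{equation*}
    \belief_t(\estate \mid h^\infty) \;\xrightarrow[t\to\infty]{}\; \mathbb{E}_\strat\bigl[\,\mathbbm{1}_{\{\estate\}} \,\big|\, \mathcal{F}_\infty\bigr] \;=\; \belief_\infty(\estate \mid h^\infty),
\end{equation*}
$\mathbb{P}_\strat$-almost surely, where the last identity uses that $\belief_\infty(\estate \mid h^\infty) = \mathbb{P}_\strat(\estate \mid h^\infty)$ by definition and that the infinite history generates $\mathcal{F}_\infty$.

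Finally I would intersect the finitely many $\mathbb{P}_\strat$-null exceptional sets (one per $\estate \in \estates$) to obtain a single full-measure event on which $\belief_t(\estate \mid h^\infty) \to \belief_\infty(\estate \mid h^\infty)$ simultaneously for all $\estate$. On this event, since $\estates$ is finite, coordinatewise convergence of the probability vectors is equivalent to $\belief_t \weakc \belief_\infty$, completing the proof. I do not anticipate a real obstacle here; the only subtlety is to be explicit that the martingale is adapted to $\{\mathcal{F}_t\}$ (which requires viewing $\belief_t$ as the $\mathcal{F}_t$-measurable version of the conditional probability) and that finiteness of $\estates$ lets us pass from pointwise convergence of masses to weak$^*$ convergence of measures.
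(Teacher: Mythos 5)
Your proof is correct, and it is actually cleaner than the paper's own argument. Both rest on martingale convergence, but the technical execution differs. You exploit the finiteness of $\estates$ (both $\States$ and $\fcor$ are finite in this section) to reduce weak$^*$ convergence of $\belief_t$ to coordinatewise convergence of the finitely many masses $\belief_t(\estate\mid\cdot)$, apply L\'evy's upward theorem directly to each coordinate martingale, and intersect finitely many null sets. The paper instead treats $\ebeliefs$ as a space of measures on a general Polish space: it selects a countable dense subset $\{g_k\}$ of $C_b(\estates)$, applies Doob's martingale convergence to each real-valued martingale $M_t^k = \int g_k\,d\belief_t$, intersects countably many null sets, and then needs a compactness argument to extract a weak$^*$-convergent subsequence of $(\belief_t)$ and show that all subsequential limits agree before identifying the limit with the conditional law. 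Your approach subsumes this in the finite case: the coordinate indicators span $C(\estates)=\reals^{|\estates|}$, so testing against them is already sufficient, and L\'evy's theorem identifies the limit directly, making the subsequence/uniqueness step superfluous. What the paper's version buys is generality to an infinite state or randomization space, at the cost of extra machinery unneeded here. One small thing to be careful of: you should note (as you do) that the $\sigma$-algebra generated by the projection onto $H^\infty$ coincides with $\mathcal{F}_\infty=\sigma(\cup_t\mathcal{F}_t)$, so that the L\'evy limit $\mathbb{E}_\strat[\mathbbm{1}_{\{\estate\}}\mid\mathcal{F}_\infty]$ really is (a version of) $\belief_\infty(\estate\mid h^\infty)=\mathbb{P}_\strat(\estate\mid h^\infty)$; this identification is the only non-mechanical step.
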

    This and the proof of other technical results are in \autoref{appendix:aux}.

\subsubsection{Proof of \autoref{theorem:repeated} (necessity)} We are now ready to present the proof of \autoref{theorem:repeated}, starting by the ``only if'' direction. Let $\outcome\in\Delta(\Arg)$ denote the outcome distribution implemented by repeated mechanism \rmech\ under best response \strat, and let  $\occup_\strat$ denote the associated occupation measure, the definition of which we reproduce below for ease of reference:
\begin{align}\label{eq:occup}
\occup_\strat\argme=\lim_{T\to\infty}\frac{1}{T}\mathbb{E}_\strat\left[\sum_{t=1}^T\mathbbm{1}\left[(\alloc_t,\type_t,\mssg_t,\estate^\prime)=\argme\right]\right]=\lim_{T\to\infty}\occup_\strat^T\argme,
\end{align}
where recall limits are in the weak* sense. We show that \outcome\ can be implemented by an incentive compatible and individually rational two-stage mechanism. 

To this end, we consider two sequences of \emph{extended} occupation measures on \Argbe:
\begin{align}\label{eq:eoccup-r-1}
\eoccup_\strat^{T,1}(\{\argme\}\times\tilde\Delta)=\frac{1}{T}\mathbb{E}_\strat\left[\sum_{t=1}^T\mathbbm{1}[(\alloc_t,\type_t,\mssg_t,\estate^\prime)=\argme]\mathbbm{1}[\belief_t\in\tilde\Delta]\right],\\
\eoccup_\strat^{T,2}(\{\argme\}\times\tilde\Delta)=\frac{1}{T}\mathbb{E}_\strat\left[\sum_{t=1}^T\mathbbm{1}[(\alloc_t,\type_t,\mssg_t,\estate^\prime)=\argme]\mathbbm{1}[\belief_{t+1}\in\tilde\Delta]\right].\label{eq:eoccup-r-2}
\end{align}
We note the following. First, \autoref{eq:eoccup-r-1} counts the beliefs at the beginning of period $t$, while \autoref{eq:eoccup-r-2} counts the beliefs at the end of period $t$ (after the realization of \type, \mssg, and \alloc.) \autoref{eq:eoccup-r-1} is key to obtain the (limit) independence of the belief and type distributions, while \autoref{eq:eoccup-r-2} allows us to obtain the (limit) independence of the allocation and the state, conditional on the induced belief. Second, $\occup_\strat^T$ is the marginal of both $\eoccup_\strat^{T,1}$ and $\eoccup_\strat^{T,2}$. Third, by \autoref{lemma:beliefs-0}, $\belief_t\weakc\belief_\infty$, and hence both $\eoccup_{\strat}^{T,1}$ and $\eoccup_{\strat}^{T,2}$ have the same set of subsequential limits, which we record for future reference below (see \autoref{appendix:aux} for the proof):
\begin{lemma}\label{prop:extended-occup}
The occupation measures $\eoccup_{\strat}^{T,1}$ and $\eoccup_{\strat}^{T,2}$ have the same set of subsequential limits.
\end{lemma}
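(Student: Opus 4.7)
The plan is to show that $\eoccup_\strat^{T,1}-\eoccup_\strat^{T,2}$ converges weakly to the zero measure, from which it is immediate that the two sequences share the same set of subsequential limits. Both sequences take values in $\Delta(\Argbe)$, which is compact in the weak$^*$ topology because $\Argme$ is finite and $\ebeliefs$ is compact; thus Prokhorov's theorem guarantees that subsequential limits exist along every subsequence of either sequence.

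To establish the weak convergence of the difference to zero, I will test against an arbitrary $\testf\in C_b(\Argbe)$. Subtracting Equations \ref{eq:eoccup-r-1} and \ref{eq:eoccup-r-2} yields
\begin{align*}
\int \testf\,d\eoccup_\strat^{T,1} - \int \testf\,d\eoccup_\strat^{T,2} = \frac{1}{T}\sum_{t=1}^T \mathbb{E}_\strat\!\left[D_t\right],
\end{align*}
where $D_t\equiv\testf(\alloc_t,\type_t,\mssg_t,\estate,\belief_t)-\testf(\alloc_t,\type_t,\mssg_t,\estate,\belief_{t+1})$. Because $\Argme$ carries the discrete topology, continuity of $\testf$ on $\Argbe$ reduces to continuity of $\belief\mapsto\testf(\alloc,\type,\mssg,\estate,\belief)$ for each fixed tuple $\argme$. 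Applying \autoref{lemma:beliefs-0} to both $(\belief_t)_{t\ge 1}$ and its time shift $(\belief_{t+1})_{t\ge 1}$, I obtain $\belief_t\weakc\belief_\infty$ and $\belief_{t+1}\weakc\belief_\infty$ $\mathbb{P}_\strat$-almost surely; hence $D_t\to 0$ $\mathbb{P}_\strat$-a.s. Since $|D_t|\leq 2\|\testf\|_\infty$, bounded convergence gives $\mathbb{E}_\strat[D_t]\to 0$, and Cesàro's lemma then delivers $\int\testf\,d(\eoccup_\strat^{T,1}-\eoccup_\strat^{T,2})\to 0$.

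To conclude: if $\eoccup_\strat^{T_k,1}\weakc\eoccup$ along some subsequence, the preceding display implies $\int\testf\,d\eoccup_\strat^{T_k,2}\to\int\testf\,d\eoccup$ for every $\testf\in C_b(\Argbe)$, so $\eoccup_\strat^{T_k,2}\weakc\eoccup$ as well; the reverse inclusion is symmetric. I anticipate no serious obstacle: the only delicate point is verifying that \autoref{lemma:beliefs-0} applies to the shifted sequence $(\belief_{t+1})_t$, which is immediate because the lemma asserts convergence of the entire belief process.
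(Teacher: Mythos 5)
Your proof is correct and follows essentially the same route as the paper: fix a test function $\testf\in C_b(\Argbe)$, write the difference of the two integrals as the Ces\`aro average $\frac{1}{T}\sum_{t=1}^T\mathbb{E}_\strat[D_t]$, deduce $D_t\to0$ almost surely from \autoref{lemma:beliefs-0} and the finiteness of $\Argme$, apply bounded convergence, and conclude that subsequential limits coincide by symmetry. You are slightly more explicit than the paper on the point that continuity on $\Argbe$ reduces to (uniform, by compactness of $\ebeliefs$ and finiteness of $\Argme$) continuity in the belief coordinate alone, but the decomposition and the role of the martingale convergence lemma are identical.
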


The proof of necessity of \autoref{theorem:repeated} proceeds in five steps. First, we show that the marginal of $\eoccup_\strat^{T,1}$ on \ebeliefs, which we denote by $\bsplit_\strat^T$ weak*-converges to $\mathbb{P}_\strat\circ\belief_\infty^{-1}$. We denote this limit measure by $\bsplit_\strat$. By \autoref{prop:extended-occup}, $\bsplit_\strat$ is also the (limit) marginal of $\eoccup_\strat^{T,2}$ on \ebeliefs.

Second, we show that up to a subsequence $\eoccup_\strat^{T,1},\eoccup_\strat^{T,2}\weakc\eoccup_\strat$. Furthermore, transition probabilities $\bsplit_\strat\in\Delta(\ebeliefs),\rep:\Types\times\ebeliefs\to\Delta(\mssgs),\allocrb:\mssgs\times\ebeliefs\to\Delta(\Alloc)$ exist such that
\begin{align}\label{eq:occup-decomp}
\occup_\strat\argme=\int_{\ebeliefs}\belief(\estate)\typed(\type)\rep(\mssg|\type,\belief)\allocrb(\alloc|\mssg,\belief)\bsplit_\strat(d\belief).
\end{align}
Hence, the agent's payoff when faced with mechanism \rmech\ and playing strategy \strat\ can be written as:
\begin{align}\label{eq:payoff-r}
\mathbb{E}_{\eoccup_{\strat}}\left[\pay\aarg\right]=
\int_{\ebeliefs} \sum_{\type\in\Types}\typed(\type)\sum_{\mssg\in\mssgs}\rep(m|\type,\belief)\mathbb{E}_{\estate\sim\belief}\left[\sum_{\alloc\in\Alloc}\allocrb(\alloc|\belief,\mssg)\pay\aarg\right]\bsplit_\strat(d\belief).
\end{align}
Third, we show that for all $\type\in\Types$
\begin{align}\label{eq:max-r}
\mathbb{E}_{\bsplit_\strat} \left\{\sum_{\mssg\in\mssgs}\rep(m|\type,\belief)\mathbb{E}_{\estate\sim\belief}\left[\sum_{\alloc\in\Alloc}\allocrb(\alloc|\belief,\mssg)\pay\aarg\right]-\max_{\mssg\in\mssgs}\mathbb{E}_{\estate\sim\belief}\left[\sum_{\alloc\in\Alloc}\allocrb(\alloc|\belief,\mssg)\pay\aarg\right]\right\}=0.
\end{align}
Equations \ref{eq:payoff-r} and \ref{eq:max-r} allow us to identify the incentive compatible and individually rational allocation rule of the two-stage mechanism that implements \outcome.

Fourth, whereas the previous steps identify a two-stage mechanism expressed in terms of posterior beliefs over \estates, we show how to obtain a two-stage mechanism expressed in terms of posterior beliefs over \States. Finally, we show that the agent's payoff in \autoref{eq:payoff-r} coincides with the payoff she would get when best responding to the information structure calibrated to \rmech.

\paragraph{Step 1} Having defined the extended occupation measure in \autoref{eq:eoccup-r-1}, we present here a property we use in our proof. Let $\bsplit_\strat^T$ denote the marginal of $\eoccup_\strat^{T,1}$ on \ebeliefs. That is, for any measurable subset $\tilde\Delta\subset\ebeliefs$, define
\begin{align}\label{eq:belief-occup-T}
\bsplit_\strat^T(\tilde\Delta)=\frac{1}{T}\mathbb{E}_\strat\left[\sum_{t=1}^T\mathbbm{1}[\belief_t\in \tilde\Delta]\right].
\end{align}
In \autoref{appendix:aux}, we prove the following:
\begin{lemma}\label{prop:belief-occup}
The sequence of measures $(\bsplit_\strat^T)_{T\in\naturals}$ defined by \autoref{eq:belief-occup-T} converges in the weak* sense to the push-forward measure $\bsplit_\strat\equiv\mathbb{P}_\strat\circ\belief_\infty^{-1}$, where $\belief_\infty(h^\infty)=\mathbb{P}_\strat(\cdot|h^\infty)$. 
\end{lemma}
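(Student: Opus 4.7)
The plan is to unpack the definition of weak$^*$ convergence, reduce to showing convergence of integrals against arbitrary continuous bounded test functions on $\ebeliefs$, and then combine the almost-sure martingale convergence of $\belief_t$ from \autoref{lemma:beliefs-0} with the bounded convergence theorem and a Ces\`aro averaging argument.

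Concretely, fix $\testf\in C_b(\ebeliefs)$. By the definition of $\bsplit_\strat^T$ in \autoref{eq:belief-occup-T} and linearity of expectation,
\begin{align*}
\int_{\ebeliefs}\testf(\belief)\,\bsplit_\strat^T(d\belief)=\frac{1}{T}\mathbb{E}_\strat\left[\sum_{t=1}^T\testf(\belief_t)\right]=\frac{1}{T}\sum_{t=1}^T\mathbb{E}_\strat[\testf(\belief_t)],
\end{align*}
where I am using that each $\belief_t$ is a measurable map from $\terminalsh$ into $\ebeliefs$ and $\testf$ is measurable and bounded, so Fubini applies. Next, \autoref{lemma:beliefs-0} gives $\belief_t\weakc\belief_\infty$, $\mathbb{P}_\strat$-almost surely. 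Since $\estates$ is finite, $\ebeliefs$ is a compact subset of a finite-dimensional Euclidean space and weak$^*$ convergence there coincides with ordinary convergence in this metric, so the continuity of $\testf$ yields $\testf(\belief_t)\to\testf(\belief_\infty)$ $\mathbb{P}_\strat$-a.s. Boundedness of $\testf$ then licenses the bounded convergence theorem, giving $\mathbb{E}_\strat[\testf(\belief_t)]\to\mathbb{E}_\strat[\testf(\belief_\infty)]$ as $t\to\infty$.

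The final step is purely elementary: Ces\`aro means of a convergent real sequence share its limit, so
\begin{align*}
\frac{1}{T}\sum_{t=1}^T\mathbb{E}_\strat[\testf(\belief_t)]\;\longrightarrow\;\mathbb{E}_\strat[\testf(\belief_\infty)]=\int_{\ebeliefs}\testf(\belief)\,(\mathbb{P}_\strat\circ\belief_\infty^{-1})(d\belief)=\int_{\ebeliefs}\testf\,d\bsplit_\strat,
\end{align*}
by the change-of-variables formula. Since $\testf\in C_b(\ebeliefs)$ was arbitrary, $\bsplit_\strat^T\weakc\bsplit_\strat$, as claimed.

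I do not anticipate a substantial obstacle: the martingale input is already isolated in \autoref{lemma:beliefs-0}, and finiteness of $\estates$ collapses all the topological subtleties (the weak$^*$ topology on $\ebeliefs$ is metrizable and the space is compact). The only thing worth being careful about is applying bounded convergence to the right object — we need $\testf(\belief_t)$ to converge $\mathbb{P}_\strat$-a.s. on the common probability space $(\terminalsh,\mathbb{P}_\strat)$, which is exactly what \autoref{lemma:beliefs-0} provides once $\belief_t$ and $\belief_\infty$ are viewed as functions on $\terminalsh$ (as flagged in the remark preceding the lemma).
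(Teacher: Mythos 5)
Your proof is correct and follows essentially the same route as the paper's: apply \autoref{lemma:beliefs-0} to get $\mathbb{E}_\strat[\testf(\belief_t)]\to\mathbb{E}_\strat[\testf(\belief_\infty)]$ via (bounded/dominated) convergence, then conclude via Ces\`aro averaging and the change-of-variables formula for the push-forward. The only cosmetic difference is that you note explicitly that finiteness of $\estates$ makes the weak$^*$ topology on $\ebeliefs$ coincide with the Euclidean one, which the paper leaves implicit.
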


\paragraph{Step 2} To show that \autoref{eq:occup-decomp} holds, we show the following properties of  $\eoccup_\strat^{T,1}$ and $\eoccup_\strat^{T,2}$. On the one hand, $\eoccup_\strat^{T,1}$ satisfies that for all $\testf\in C_b(\Argbe)$,
\begin{align}\label{eq:dis-11}
\int_{\Argbe}\testf\argmeb d\eoccup_\strat^{T,1}=\int_{\Types\times\mssgs\times\ebeliefs}\mathbb{E}_\belief\left[\mathbb{E}_{\rmech(\cdot|\mssg,\estate)}\left[\testf\argmeb\right]\right]d\eoccup_{\strat,\Types\mssgs\Delta}^{T,1},
\end{align}
and for all $\testfb\in C_b(\Types\times\Beliefs)$,
\begin{align}\label{eq:dis-12}
\int_{\Types\times\ebeliefs}\testfb(\type,\belief)d\eoccup_{\strat,\Types\Delta}^{T,1}=\int_{\ebeliefs}\int_\Types\typed(\type)q(\type,\belief)d\bsplit_\strat^T.
\end{align}
On the other hand, $\eoccup_\strat^{T,2}$ satisfies that for all $\testf\in C_b(\Argbe)$,
\begin{align}\label{eq:dis-2}
\int_{\Argbe}\testf\argmeb d\eoccup_\strat^{T,2}=\int_{\Types\times\mssgs\times\Alloc\times\ebeliefs}\mathbb{E}_{\estate\sim\belief}\left[\testf\argmeb\right]d\eoccup_{\strat,\Types\mssgs\Alloc\Delta}^{T,2}.
\end{align}
In the expressions above, the subscripts on $\eoccup_\strat^{T,k}$ next to \strat\ are the spaces over which we take the marginals, and $\Delta$ is shorthand notation for \ebeliefs. Because $\Delta(\Argbe)$ is compact \citep[Theorem 15.11]{aliprantis2006infinite}, $\eoccup_{\strat}^{T,1}$ has a convergent subsequence $(\eoccup_{\strat}^{T_n,1})_{n\in\naturals}$, which by \autoref{prop:extended-occup} is also a convergent subsequence of $\eoccup_{\strat}^{T,2}$. Let $\eoccup_\strat$ denote the weak$^*$ limit along $T_n$. The continuity of the projection implies that  $\occup_\strat$ is the marginal of $\eoccup_\strat$ on \Argme, and $\bsplit_\strat\equiv\mathbb{P}_\strat\circ\belief_\infty^{-1}$ is the marginal on \Beliefs. Equations  \ref{eq:dis-12} and \ref{eq:dis-2} together imply that $\eoccup_\strat$ admits the decomposition in the right hand side of \autoref{eq:occup-decomp}, and the result follows.

To show \autoref{eq:dis-11} holds, use that $\eoccup_\strat^{T,1}$ has finite support to write it as follows:
\begin{align}\label{eq:belief-fact}
\eoccup_\strat^{T,1}\argmeb&=\frac{1}{T}\sum_{t=1}^T\sum_{h^t\in H^t}\mathbb{P}_\strat^t(\estate,h^t)\typed(\type)\strat_t(h^t,\type)(\mssg)\rmech(\alloc|\mssg,\estate)\mathbbm{1}[\belief_t=\belief]\nonumber\\
&=\frac{1}{T}\sum_{t=1}^T\sum_{h^t\in H^t}\mathbb{P}_\strat^t(h^t)\belief(\estate)\typed(\type)\strat_t(h^t,\type)(\mssg)\rmech(\alloc|\mssg,\estate)\mathbbm{1}[\belief_t=\belief],
\end{align}
where the second equality uses that $\belief_t(h^t)=\mathbb{P}_\strat^t(\cdot|h^t)$.

\autoref{eq:belief-fact} implies the following holds for every bounded continuous function  $\testf\in C_b(\Argbe)$:
\begin{align*}
\mathbb{E}_{\eoccup_\strat^{T,1}}\left[\testf\argmeb\right]=\mathbb{E}_{\eoccup_{\strat,\Types\mssgs\Delta}^{T,1}}\left[\mathbb{E}_{\belief}\left[\mathbb{E}_{\rmech(\cdot|\mssg,\estate)}\left[\testf\argmeb
\right]\right]\right],
\end{align*}
This completes the proof that \autoref{eq:dis-11} holds. Letting $V_\testf(\type,\mssgs,\belief)=\mathbb{E}_{\belief}\left[\mathbb{E}_{\rmech(\cdot|\mssg,\estate)}\left[\testf\argmeb\right]\right]$, we have that 
\begin{align*}
\int_{\Argbe} \testf d\eoccup_\strat^{T,1}=\int_{\Types\times\mssgs\times\ebeliefs}V_\testf d\eoccup_{\strat,\Types\mssgs\Delta}^{T,1}\Leftrightarrow\int_{\Argbe}(\testf-V_\testf)d\eoccup_\strat^{T,1}=0.
\end{align*}
Because $\testf-V_g\in C_b(\Argbe)$ and $\eoccup_\strat^{T_n,1}\weakc\eoccup_\strat$, we conclude that 
\begin{align}\label{eq:fact-1}
\mathbb{E}_{\eoccup_\strat}\left[\testf\argmeb\right]=\int_{\Types\times\mssgs\times\ebeliefs}\mathbb{E}_{\belief}\left[\mathbb{E}_{\rmech(\cdot|\mssg,\estate)}\left[\testf\argmeb\right]\right]d\eoccup_{\strat,\Types\mssgs\Delta}.
\end{align}
To show that \autoref{eq:dis-12} holds, note that the marginal of  $\eoccup_\strat^{T,1}$ on $\Types\times\ebeliefs$ equals $\bsplit_\strat^T\otimes\typed$. Indeed, fix any continuous function $\testfb\in C_b(\Types\times\ebeliefs)$ and note that for all $T$
\begin{align*}
\mathbb{E}_{\eoccup_{\strat,\Types\Delta}^{T,1}}\left[\testfb(\type,\belief)\right]=\mathbb{E}_{\bsplit_\strat^T}\left[\sum_{\type\in\Types}\typed(\type)\testfb(\type,\belief)\right].
\end{align*}
Letting $V_\testfb(\belief)=\sum_{\type\in\Types}\typed(\type)\testfb(\type,\belief)$, we have that for all $T$
\begin{align*}
\int_{\ebeliefs\times\Types}\left(\testfb(\type,\belief)-V_\testfb(\belief)\right)d\eoccup_{\strat,\Types\Delta}^{T,1}=0.
\end{align*}
Because $\testfb-V_\testfb\in C_b(\Types\times\ebeliefs)$ and $\eoccup_\strat^{T_n,1}\weakc\eoccup_\strat$, we conclude that 
\begin{align}\label{eq:fact-2}
\int_{\Types\times\ebeliefs}\testfb(\type,\belief)d\eoccup_{\strat,\Types\Delta}=\int_{\ebeliefs}\int_\Types\typed(\type)\testfb(\type,\belief)d\bsplit_\strat.
\end{align}
Lastly, to show that \autoref{eq:dis-2} holds, note that we can write $\eoccup_\strat^{T,2}$ as follows (once again, we use that for finite $T$, it has finite support):
\begin{align*}
&\eoccup_\strat^{T,2}\argmeb=\frac{1}{T}\sum_{t=1}^T\sum_{h^t\in H^t}\mathbb{P}_\strat^t(h^t)\mathbb{P}_\strat^t(\estate|h^t)\typed(\type)\strat_t(h^t,\type)(\mssg)\rmech(\alloc|\mssg,\estate)\mathbbm{1}[\belief_{t+1}(h^t,\type,\mssg,\alloc)=\belief]=\\
&=\frac{1}{T}\sum_{t=1}^T\sum_{h^t\in H^t:\belief_{t+1}(h^t,\type,\mssg,\alloc)=\belief}\mathbb{P}_\strat^{t+1}(\estate,h^t,\type,\mssg,\alloc)=\frac{1}{T}\sum_{t=1}^T\sum_{h^t\in H^t:\belief_{t+1}(h^t,\type,\mssg,\alloc)=\belief}\belief(\estate)\mathbb{P}_\strat^{t+1}(h^t,\type,\mssg,\alloc)\\
&=\belief(\estate)\frac{1}{T}\sum_{t=1}^T\sum_{h^t\in H^t:\belief_{t+1}(h^t,\type,\mssg,\alloc)=\belief}\mathbb{P}_\strat^{t+1}(h^t,\type,\mssg,\alloc)=\belief(\estate)\eoccup_{\strat,\Alloc\Types\mssgs\Delta}^{T,2}(\alloc,\type,\mssg,\belief),
\end{align*}
where the last expression follows from noting that the term multiplying $\belief(\state)$ in the first expression in the third line is $\sum_{\estate}\occup_\strat^{T,2}\argmeb$.

Then, for every $\testf\in C_b(\Argbe)$, we have that
\begin{align*}
\mathbb{E}_{\eoccup_\strat^{T,2}}\left[\testf\argmeb\right]=\mathbb{E}_{\eoccup_{\strat,\Alloc\Types\mssgs\ebeliefs}^{T,2}}\left[\mathbb{E}_{\estate\sim\belief}\left[\testf\argmeb\right]\right],
\end{align*}
which completes the proof that \autoref{eq:dis-2} holds. Letting 
$V_\testf(\alloc,\type,\mssg,\belief)=\mathbb{E}_{\estate\sim\belief}\left[\testf\argmeb\right]$, we have that
\begin{align*}
\int\left(\testf-V_\testf\right)d\eoccup_\strat^{T,2}=0.
\end{align*}
Because $\testf-V_\testf\in C_b(\Argbe)$ and $\eoccup_\strat^{T_n,2}\weakc\eoccup_\strat$, we conclude that
\begin{align}\label{eq:fact-3}
\mathbb{E}_{\eoccup_\strat}\left[\testf\argmeb\right]=\int_{\Alloc\times\Types\times\mssgs\times\ebeliefs}\mathbb{E}_{\estate\sim\belief}\left[\testf\argmeb\right]\eoccup_{\strat,\Alloc\Types\mssgs\Delta}(d(\alloc,\type,\mssg,\belief)).
\end{align}

Equations \ref{eq:fact-2} and \ref{eq:fact-3} imply $\eoccup_\strat$ admits the following disintegration:
\begin{align}\label{eq:concl-1}
\eoccup_\strat(\{(\alloc,\type,\mssg,\estate)\}\times\tilde\Delta)=\int_{\tilde\Delta}\belief(\estate)\typed(\type)\allocrb(\alloc|\type,\mssg,\belief)\rep(\mssg|\type,\belief)\bsplit_\strat(d\belief),
\end{align}
where we disintegrated $\eoccup_{\strat,\Alloc\Types\mssgs\Delta}$ first along $\Types\times\ebeliefs$\textemdash and used \autoref{eq:fact-2} to obtain the independence of \Types\ and \ebeliefs\textemdash and then further disintegrated the distribution of $\Alloc\times\mssgs$ conditional on $\Theta\times\ebeliefs$. Now, \autoref{eq:fact-1} implies that the following also holds
\begin{align}\label{eq:concl-2}
\eoccup_\strat(\{(\alloc,\type,\mssg,\estate)\}\times\tilde\Delta)=\int_{\tilde\Delta}\belief(\estate)\typed(\type)\mechanism(\alloc|\mssg,\estate)\rep(\mssg|\type,\belief)\bsplit_\strat(d\belief),
\end{align}
where once again we use the uniqueness of disintegration. Because Equations \ref{eq:concl-1} and \ref{eq:concl-2} hold for any tuple $(\alloc,\type,\mssg,\estate)$ and measurable subset $\tilde\Delta$ of $\ebeliefs$, we conclude that (i) $\allocrb(\alloc|\type,\mssg,\belief)$ does not depend on \type\ $\bsplit_\strat$-almost everywhere, and (ii) $\rmech(\cdot|\mssg,\estate)$ is constant on $\estate$ in the support of $\belief$  $\bsplit_\strat$-almost everywhere. This concludes the proof of Step 2.

\paragraph{Step 3} We now argue that the agent achieves 
\begin{align}\label{eq:rmech-max}
\pay^*(\belief)\equiv\sum_{\type\in\Types}\typed(\type)\max_{\mssg\in\mssgs} \sum_{\estate\in\estates}\belief(\estate)\sum_{\alloc\in\Alloc}\rmech(\alloc|\mssg,\estate)\pay\aarg=\sum_{\type\in\Types}\typed(\type)\max_{\mssg\in\mssgs} \sum_{\estate\in\estates}\belief(\estate)\sum_{\alloc\in\Alloc}\allocrb(\alloc|\mssg,\belief)\pay\aarg,
\end{align}
on the support of $\bsplit_\strat$, where the second equality follows from Step 2.  Toward a contradiction, suppose this is not the case; that is,
\begin{align*}
\mathbb{E}_{\bsplit_\strat}\left[\sum_{\type\in\Types}\typed(\type)\sum_{\mssg\in\mssgs}\rep(\mssg|\type,\belief)\sum_{\estate\in\estates}\belief(\estate)\sum_{\alloc}\rmech(\alloc|\mssg,\estate)\pay\aarg\right]<\mathbb{E}_{\bsplit_\strat}\left[\sum_{\type\in\Types}\pay^*(\belief)\right]=U^*.
\end{align*}
We show that the agent can achieve a payoff arbitrarily close to $U^*$ by playing according to $\strat$ until some finite $T$ and then best-responding to her beliefs at time $T$ in every period thereafter; a contradiction.

Consider a strategy \stratb\ which until some period $T$ plays according to \strat\ and after period $T$ best responds to $\belief_T(h^T)\in\ebeliefs$. Because  payoffs accumulated on a finite number of periods are irrelevant to long-run payoffs, this strategy results in a payoff:
\begin{align*}
    &\sum_{h^T\in H^T}\mathbb{P}_{\strat}^T(h^T)\sum_{\type\in\Types}\typed(\type)\max_{\mssg\in\mssgs}\left[\sum_{\estate\in\estates}\belief_T(\estate)\sum_{\alloc\in\Alloc}\rmech(\alloc|\mssg,\estate)\pay\aarg\right]=    \sum_{h^T\in H^T}\mathbb{P}_{\strat}^T(h^T)\pay^*(\belief_T(h^T))\\
    &=\mathbb{E}_{\mathbb{P}_\strat^T\circ\belief_T^{-1}}\left[\pay^*(\belief)\right]=\mathbb{E}_{\mathbb{P}_\strat\circ\belief_T^{-1}}\left[\pay^*(\belief)\right],
\end{align*}
where the last equality follows as $\belief_T$ is adapted to the histories through $T$. Similar arguments to \autoref{prop:belief-occup} imply that $\mathbb{P}_\strat\circ\belief_T^{-1}\weakc\mathbb{P}_\strat\circ\belief_\infty^{-1}\equiv\bsplit_\strat$. Noting that $u^*:\ebeliefs\to\reals$ is continuous and bounded (as it is the maximum of linear functions in beliefs), we obtain that as $T\to\infty$,
\begin{align*}
\mathbb{E}_{\mathbb{P}_\strat\circ\belief_T^{-1}}\left[u^*(\belief)\right]\to\mathbb{E}_{\bsplit_\strat}\left[\pay^*(\belief)\right].
\end{align*}
It follows that for every $\delta>0$, we can find $T$ large enough so that $|\mathbb{E}_{\mathbb{P}_\strat\circ\belief_T^{-1}}\left[\pay^*(\belief)\right]- U^*|<\delta$, contradicting the optimality of \strat.

We conclude that $\allocr:\Types\times\ebeliefs\to\Delta(\Alloc)$ defined as follows:
\begin{align*}
\allocr(\alloc|\type,\belief)=\sum_{\mssg\in\mssgs}\rep(\mssg|\type,\belief)\allocrb(\alloc|\mssg,\belief),
\end{align*}
is incentive compatible and individually rational $\bsplit_\strat$-almost everywhere. 

\paragraph{Step 4} We now show how to derive a two-stage mechanism $\beliefr^*:\States\to\Delta(\Beliefs)$ and an allocation rule $\allocr^*:\Types\times\Beliefs\to\Delta(\Alloc)$ that implement \outcome. First, note that the agent's payoff when her type is \type\ and the induced belief is $\belief\in\ebeliefs$, can be written as
\begin{align*}
\sum_{\estate\in\estates}\belief(\estate)\sum_{\alloc\in\Alloc}\allocr(\alloc|\type,\belief)\pay\aarg=\sum_{\state\in\States}\belief_\States(\state)\sum_{\alloc\in\Alloc}\allocr(\alloc|\type,\belief)\pay\aarg,
\end{align*} 
where $\belief_\States$ is the marginal of \belief\ on \States\ and the equality follows because the realization of \cor\ is payoff-irrelevant. By Step 3,  $\allocr(\cdot|\type,\belief)$ is individually rational and incentive compatible when the agent holds belief $\belief_\States$. 

Furthermore, for each $\aarg\in\Arg$, we have 
\begin{align*}
\outcome\aarg=\sum_{\cor\in\fcor}\int_{\ebeliefs}\belief(\state,\cor)\typed(\type)\allocr(\alloc|\type,\belief)\bsplit_\strat(d\belief)=\int_{\ebeliefs}\belief_\States(\state)\typed(\type)\allocr(\alloc|\type,\belief)\bsplit_\strat(d\belief).
\end{align*}
For each $\type\in\Types$, consider the joint distribution $Q_\type\in\Delta(\Alloc\times\Beliefs)$ defined as follows:
\begin{align*}
Q_\type(\{\alloc\}\times\tilde\Delta)=\int_{\ebeliefs}\mathbbm{1}[\belief_{\States}\in\tilde\Delta]\allocr(\alloc|\type,\belief)\bsplit_\strat(d\belief)=\int_{\tilde\Delta} \allocr^*(\alloc|\type,\belief_\States)\bsplit^*(d\belief_\States),
\end{align*}
where the third equality follows from disintegration (note $\allocr^*(\cdot|\cdot,\belief_\States)=\mathbb{E}[\allocr(\cdot|\cdot,\tilde\belief)|\tilde\belief_\States=\belief_\States]$). By the first argument in Step 4, $\allocr^*(\cdot|\cdot,\belief_\States)$ is individually rational and incentive compatible when the agent holds $\belief_\States$. We obtain that
\begin{align*}
\outcome\aarg=\int_{\Beliefs}\belief_\States(\state)\typed(\type)\allocr^*(\alloc|\type,\belief_\States)\bsplit^*(d\belief_\States).
\end{align*}
Defining for all $\state\in\States$ and measurable subsets $\tilde\Delta\in\Beliefs$, 
\begin{align*}
\beliefr^*(\tilde\Delta|\state)=\int_{\tilde\Delta}\frac{\belief(\state)}{\prior(\state)}\bsplit^*(d\belief),
\end{align*}
Steps 2-4 together imply that \outcome\ can be implemented by the incentive compatible and individually rational two-stage mechanism $(\beliefr^*,\allocr^*)$.
\paragraph{Step 5: The agent adequately learns} Finally, we argue that the agent earns the same payoff as if she had access to the information structure calibrated to \rmech, \cexp.
\begin{lemma}\label{lemma:adequate-learning}
Let $\bsplit_\rmech$ denote the belief distribution induced by the calibrated information structure \cexp. Then, the agent's payoff under \strat\ equals
\begin{align}\label{eq:target}
U(\bsplit_\rmech)\equiv\mathbb{E}_{\bsplit_\rmech}\left[\sum_{\type\in\Types}\typed(\type)\max_{m\in\mssgs}\sum_{\estate\in\estates}\belief(\estate)\sum_{\alloc\in\Alloc}\rmech(\alloc|\mssg,\estate)\pay\aarg\right].
\end{align}
\end{lemma}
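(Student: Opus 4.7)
The plan is to show that the agent's payoff under $\strat$, which by Steps 2--3 above equals $\mathbb{E}_{\bsplit_\strat}[V(\belief)]$ with $V(\belief)\equiv\sum_\type\typed(\type)\max_\mssg\sum_\estate\belief(\estate)\sum_\alloc\rmech(\alloc|\mssg,\estate)\pay\aarg$, coincides with $U(\bsplit_\rmech)=\mathbb{E}_{\bsplit_\rmech}[V(\belief)]$. Since $V$ is the maximum of finitely many linear functionals in $\belief$, it is convex, and the argument reduces to sandwiching $\mathbb{E}_{\bsplit_\strat}[V]$ between an upper bound from Blackwell dominance and a lower bound from an explicit experimentation strategy.

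For the upper bound, note that under any strategy $\strat'$ the agent's only period-$t$ observable is $\alloc_t\sim\rmech(\cdot|\mssg_t,\estate)$. Conditional on $\estate$, the joint law of $(\type_t,\mssg_t,\alloc_t)_t$ depends on $\estate$ only through the signal $\csignal\equiv\rmech(\cdot|\cdot,\estate)\in\csignals$. Consequently, the limit belief $\belief_\infty$ is measurable with respect to $\csignal$ together with noise independent of $\estate$, so the distribution $\bsplit_{\strat'}$ of $\belief_\infty$ is a mean-preserving contraction of $\bsplit_\rmech$. Convexity of $V$ then gives $\mathbb{E}_{\bsplit_{\strat'}}[V]\leq\mathbb{E}_{\bsplit_\rmech}[V]=U(\bsplit_\rmech)$; applied to $\strat$, this yields $\mathbb{E}_{\bsplit_\strat}[V]\leq U(\bsplit_\rmech)$.

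For the lower bound, I construct a sequence of strategies $\{\strat_N\}$ whose payoffs approach $U(\bsplit_\rmech)$; since $\strat$ is a best response, $\mathbb{E}_{\bsplit_\strat}[V]\geq u(\strat_N)$ for every $N$, which suffices for the reverse inequality. Under $\strat_N$, the agent spends the first $N|\mssgs|$ periods playing each $\mssg\in\mssgs$ exactly $N$ times (regardless of type), then forms empirical frequencies $\hat\csignal(\alloc|\mssg)$, computes the Bayesian posterior $\hat\belief$ over $\estates$ against $\prior\otimes\cord$, and for every $t>N|\mssgs|$ plays a message $\mssg^*(\type_t,\hat\belief)$ attaining $V$ (with any measurable tie-breaking). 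Because preferences are limit-of-means, the bounded contribution of the first $N|\mssgs|$ periods vanishes. Finiteness of $\estates$ makes the image of $\rmech$ a finite subset of $\csignals$, so the strong law of large numbers gives $\hat\csignal\to\csignal$ almost surely conditional on $\estate$, and hence $\hat\belief\to\belief(\cdot|\csignal)$. At beliefs with a unique argmax the best response is locally constant, so $\mssg^*(\type,\hat\belief)$ eventually agrees with $\mssg^*(\type,\belief(\cdot|\csignal))$; the ambiguous set of beliefs is $\bsplit_\rmech$-null. Bounded convergence then yields $u(\strat_N)\to U(\bsplit_\rmech)$.

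The main obstacle lies in executing the lower bound rigorously. First, the limit-of-means payoff of $\strat_N$ must be shown to be well defined, that is, $\liminf=\limsup$ in \autoref{eq:br}; this follows because after the finite experimentation phase the agent's rule is a measurable function of $(\type_t,\hat\belief)$ only, so $\mathbb{E}[U_T]$ has a limit by a straightforward bounded convergence argument. Second, the argmax correspondence $\hat\belief\mapsto\mssg^*$ must admit a measurable selection that is eventually constant along almost every sample path; finiteness of $\mssgs$, $\estates$, and the image of $\rmech$ in $\csignals$ reduces this to a standard measurable-selection argument.
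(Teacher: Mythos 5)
Your lower-bound construction (play each message $N$ times, update to the empirical posterior, then play myopically forever) is exactly the paper's strategy $\stratb_N$, and the SLLN/weak-convergence step matches the paper's Step showing $\mathbb{P}_{\stratb_N}\circ\belief_{N|\mssgs|}^{-1}\weakc\bsplit_\rmech$. Your explicit upper bound via Blackwell dominance is a genuine improvement in rigor: the paper relegates that direction to a footnote that merely asserts it is ``intuitive that $U(\bsplit_\rmech)$ is the most the agent can make,'' whereas you supply the garbling argument (observations depend on $\estate$ only through $\csignal=\rmech(\cdot\,|\,\cdot,\estate)$, so $\bsplit_{\strat}$ is a mean-preserving contraction of $\bsplit_\rmech$ and convexity of $V$ delivers the inequality). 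That argument is correct and worth stating.

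There is, however, a real gap in your lower-bound convergence step. You claim the set of beliefs with a non-unique argmax is $\bsplit_\rmech$-null. But $\bsplit_\rmech$ is a \emph{finitely supported} measure\textemdash the calibrated information structure on $\States\times\fcor$ has finite image, so $\bsplit_\rmech$ has at most $|\States||\fcor|$ atoms. If any of those atoms happens to be a tie belief (nothing in the model rules this out), your null-set argument fails and $\mssg^*(\type,\hat\belief)$ need not stabilize along almost every sample path, so bounded convergence of the realized payoff does not follow. The paper avoids this entirely by invoking the law of iterated expectations \emph{before} passing to the limit: once the agent plays the myopic best response to her posterior $\hat\belief$, her continuation payoff is exactly $\pay^*(\hat\belief)$ (the inner maximum averaged over $\type$), so $U(\stratb_N)=\mathbb{E}\bigl[\pay^*(\hat\belief)\bigr]$. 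Since $\pay^*$ is a maximum of finitely many affine functionals and hence continuous and bounded on $\Beliefs$, weak convergence of the law of $\hat\belief$ to $\bsplit_\rmech$ directly gives $U(\stratb_N)\to U(\bsplit_\rmech)$, with no need to reason about whether the maximizer is eventually constant. Replacing your argmax-stabilization step with this LIE-plus-continuity argument closes the gap and makes the lower bound as tight as the upper.
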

The proof of this is standard, and hence we defer it to \autoref{appendix:aux}.

\subsubsection{Proof of \autoref{theorem:repeated} (sufficiency)}
Suppose $\outcome\in\Delta\left(\Arg\right)$ is implemented by an incentive compatible and individually rational two-stage mechanism. That is,
\begin{align}\label{eq:ts-bis}
\outcome\aarg=\prior(\state)\typed(\type)\int_{\Beliefs}\allocr(\alloc|\type,\belief)\beliefr(d\belief|\state),
\end{align}
and \allocr\ is incentive compatible and individually rational on the support of $\prior\otimes\beta$. As in the proof of \autoref{theorem:cmd-two-stage}, a finite support $\beliefr^\prime:\States\to\Delta\left(\{\belief_1,\dots,\belief_K\}\right)$ exists such that $(\beliefr^\prime,\allocr)$ implement \outcome. As in \cite{green2022two}, the experiment \beliefrb\ can be generated by a finite information structure $\excdf:\States\times\fcor\to\Beliefs$, where (i) \fcor\ is finite, (ii) \fcor\ is independent of \States, and (iii) $\belief=\excdf(\state,\cor)$. 

Construct a mechanism $\rmech:\Types\times\States\times\fcor\to\Delta(\Alloc)$ such that $\rmech(\cdot|\type,\state,\cor)=\allocr(\cdot|\type,\excdf(\state,\cor))$. (Note that \excdf\ is information structure calibrated to \rmech, but expressed in beliefs.) Consider now the extensive form game induced by such a mechanism.\footnote{We could expand the mechanism by allowing the agent to have a message which triggers the outside option, but this is not necessary as \allocr\ is individually rational.} 

If the agent truthfully reports her type, then the occupation measure induces outcome distribution \outcome. Hence, under truthtelling, the agent's payoff is:
\begin{align}\label{eq:truth-pay}
&U(\strat_{\texttt{truth}})=\sum_{\aarg\in\Arg}\outcome\aarg\pay\aarg=\mathbb{E}_{\bsplit_\rmech}\left[\sum_{\type\in\Types}\typed(\type)\sum_{\alloc\in\Alloc}\allocr(\alloc|\type,\belief)\sum_{\state\in\States}\belief(\state)\pay\aarg\right]\nonumber\\
&=\mathbb{E}_{\bsplit_\rmech}\left[\sum_{\type\in\Types}\typed(\type)\max\left\{\max_{\typeb\in\Types}\sum_{\alloc\in\Alloc}\allocr(\alloc|\typeb,\belief)\sum_{\state\in\States}\belief(\state)\pay\aarg,\sum_{\state\in\States}\belief(\state)\pay(\oo,\type,\state)\right\}\right]=\\
&=\mathbb{E}_{\fcor}\left[\sum_{\type\in\Types}\typed(\type)\max\left\{\max_{\typeb\in\Types}\sum_{\state\in\States}\belief(\state)\sum_{\alloc\in\Alloc}\rmech(\alloc|\typeb,\state,\cor)\pay\aarg,\sum_{\state\in\States}\belief(\state)\pay(\oo,\type,\state)\right\}\right],\nonumber
\end{align}
where (i) $\bsplit_\rmech$ is the belief distribution induced by the information structure \excdf, and (ii) the first equality is by definition of the occupation measure, the second is the definition that \outcome\ is implemented by the two-stage mechanism, the third follows from incentive compatibility and individual rationality of \allocr, and the fourth is definitional.

Moreover, the payoff in the last line of \autoref{eq:truth-pay} is the payoff the agent obtains by using the ``learning'' strategy in \autoref{lemma:adequate-learning}, which first extracts all the mechanism can teach her about the state and then uses that information to optimize over her participation and reporting strategies. It follows that truthtelling (and participation) are optimal and \outcome\ is implemented by repeated mechanism \rmech.

\subsection{Dynamic Mechanisms}\label{appendix:dyn-mech}
In this section, we present the proof of \autoref{theorem:dynamic}. To do so, we first complete the formal definition of the game, by specifying the histories, strategy space, and the distribution over terminal histories induced by the agent's strategy and the mechanism. Having laid this groundwork, we describe the proof strategy, and then provide the formal details of the proof.

\paragraph{Mechanisms, histories, and strategies} A dynamic mechanism $(\dmech_t)_{t\in\naturals}$ is a sequence of mappings that condition on the state, the agent's report history, the allocation history, and today's report and output an allocation. By the revelation principle, it is without loss of generality to restrict attention to mechanisms that solicit type reports. 

As in the main text, we expand the set of type reports and allocations by the non-participation decision and the outside option, which we denote by $\Types\Alloc_\emptyset\equiv\Alloc\times\Types\cup\{(\emptyset,\oo)\}$. Then, $\hat{H}^t=(\Types\Alloc_\emptyset)^{t-1}$  denotes the histories of reports (inclusive of the non-participation decision) and allocations at the beginning of time $t\in\naturals$, and let $\hat\terminals^{t}=\States\times\hat{H}^t$. Similarly, let $\hat{H}^\infty=\times_{t\in\naturals}\left(\Types\Alloc_\emptyset\right)$ denote the set of all possible report-allocation outcome paths, and let $\hat\terminals^\infty=\States\times\hat{H}^\infty$.  A dynamic mechanism is then a collection of mappings $(\dmecht)_{t\in\naturals}$ such that $\dmecht:\hat{\terminals}^t\times\Types\to\Delta(\Alloc)$.

To define the agent's strategy, let $H^t=\Types^{t-1}\times\hat{H}^{t-1}$, where the coordinates denote the sequence of realized types, reports (inclusive of participation decisions), and allocations through period $t-1$. A behavioral strategy is a mapping $(\pp_t,\strat_{t}):H^t\times\Types\to[0,1]\times\Delta(\Types)$. 

\paragraph{The distribution over terminal histories \terminalsh} To obtain the complete description of the paths on the tree we need to append $\States$ to  $H^t$; hence the paths through period $t-1$ are $\States\times H^t\equiv\terminals^t$.  The distributions over states, agent's types, the agent's strategy, and the mechanism induce a distribution over the terminal histories $\terminalsh\equiv\States\times H^\infty$, which we denote by $\pr_{\estrat}\in\Delta(\States\times H^\infty)$. We denote by $\mathbb{E}_{\estrat}$ the expectation under this measure. The distribution $\pr_{\estrat}\in\Delta(\States\times H^\infty)$ is the unique distribution that satisfies that for all $t\in\naturals$, $\tilde\terminals^t\subset\States\times\fcor\times H^t$, 
\begin{align*}
\mathbb{P}_{\estrat}(\tilde\terminals^t\times\prod_{s=t+1}^\infty(\Types\times\Types\Alloc_\emptyset)=\mathbb{P}_\strat^t(\tilde\terminals^t),
\end{align*}
where the distributions $(\pr_{\estrat}^t)_{t\in\naturals}$ satisfy (under participation and truthtelling)
\begin{align*}
\mathbb{P}_{\estrat}^{t+1}(\state,h^t,\type,\typeb,\alloc)=\pr_{\estrat}^t(\state,h^t)\typed(\type)\mathbbm{1}[\typeb=\type]\dmecht(\alloc|\state,\hat{h}^t,\typeb).
\end{align*}

\paragraph{Implementation} We focus on incentive-compatible mechanisms $\dmech$ for which (i) a best response, \estrat, exists, and (ii) the occupation measure $\occup_\strat\in\Delta(\Arg)$ exists, where
\begin{align}\label{eq:dyn-occup}
\occup_{\estrat}\aarg&=\lim_{T\to\infty}\frac{1}{T}\mathbb{E}_{\estrat}\left[\sum_{t=1}^T\mathbbm{1}\left[(\alloc_t,\type_t,\stateb)=\aarg\right]\right],
\end{align}
where the limit is in the weak$^*$ sense. In contrast to \autoref{appendix:repeated}, we do not keep track of the agent's type reports in the occupation measure, only the agent's types. Under \estrat\ only truthtelling histories have positive probability.

\subsubsection{Proof of \autoref{theorem:dynamic} (necessity)} 
Let $\outcome\in\Delta(\Arg)$ denote the outcome distribution implemented by an incentive compatible dynamic mechanism \dmech, and let  $\occup_{\estrat}$ denote the occupation measure under the agent's truthtelling strategy.  Below, we show that $\occup_{\estrat}$, and hence \outcome, can be implemented by a two-stage mechanism which lacks profitable undetectable deviations and is ex ante individually rational.

Analogously to the proof of \autoref{theorem:repeated}, we define two sequences of extended occupation measures on \Argb\ defined as follows. Letting $\tilde\Delta$ denote a measurable subset of \Beliefs, define
\begin{align}
\eoccup_{\estrat}^{T,1}(\{\aarg\}\times\tilde\Delta)&=\frac{1}{T}\sum_{t=1}^T\sum_{h^t\in H^t}\pr_{\estrat}^t(\state,h^t)\typed(\type)\strat_t(h^t,\type)(\type)\dmech(\state,\hat{h}^t,\type)(a)\mathbbm{1}[\belief_t(h^t)\in \tilde\Delta],\label{eq:occup-1}\\
\eoccup_{\estrat}^{T,2}(\{\aarg\}\times\tilde\Delta)&=\frac{1}{T}\sum_{t=1}^T\sum_{h^t\in H^t}\pr_{\estrat}^t(\state,h^t)\typed(\type)\strat_t(h^t,\type)(\type)\dmech(\state,\hat{h}^t,\type)(a)\mathbbm{1}[\belief_{t+1}(h^t,\type,\type,\alloc)\in \tilde\Delta].\label{eq:occup-2}
\end{align}
The proof proceeds similarly to that in \autoref{appendix:repeated}. First, we show that the occupation measure $\occup_{\estrat}\in\Delta(\Arg)$ admits the following decomposition
\begin{align}\label{eq:occup-limit}
\occup_{\estrat}\aarg=\int_{\Beliefs}\typed(\type)\belief(\state)\allocr(\alloc|\type,\belief)\bsplit_{\estrat}(d\belief),
\end{align}
where $\bsplit_{\estrat}$ is the distribution over terminal beliefs (cf. \autoref{prop:belief-occup}) and the transition probability $\allocr:\Types\times\Beliefs\to\Delta(\Alloc)$ is our candidate allocation rule. Consequently, the agent's equilibrium payoff can be written as follows:
\begin{align}\label{eq:payoff-dynamic}
\sum_{\aarg\in\Arg}\occup_{\estrat}\aarg\pay\aarg=\int_{\Beliefs}\left[\sum_{\type\in\Types}\typed(\type)\sum_{\state\in\States}\belief(\state)\sum_{\alloc\in\Alloc}\allocr(\alloc|\type,\belief)\pay\aarg\right]\bsplit_{\estrat}(d\belief).
\end{align}
Second, we show that the allocation rule lacks profitable undetectable deviations and is ex ante individually rational.

\paragraph{The occupation measure satisfies \autoref{eq:occup-limit}} To prove that \autoref{eq:occup-limit} holds, we first show that  for all $\testf\in C_b(\Argb)$ and all $T\in\naturals$,
\begin{align}\label{eq:step-1}
\int_{\Argb}\testf\argb d\eoccup_{\estrat}^{T,2}=\int_{\Alloc\times\Types\times\Beliefs}\mathbb{E}_\belief [\testf\argb] d\eoccup_{\estrat,\Alloc\Types\Beliefs}^{T,2}
\end{align}
and for all $\testfb\in C_b(\Types\times\Beliefs)$ and all $T\in\naturals$,
\begin{align}\label{eq:step-2}
\int_{\Types\times\Beliefs}\testfb(\type,\belief)d\eoccup_{\estrat,\Types\Delta}^{T,1}=\int_{\Beliefs}\sum_{\type\in\Types}\typed(\type)q(\type,\belief)d\eoccup_{\estrat,\Delta}^{T,1},
\end{align}
where the subscripts on \eoccup\ next to \estrat\ are the spaces over which we take the marginals, and $\Delta$ is shorthand notation for \Beliefs. We skip the proof of this step as it basically repeats the proof of the analogous step in \autoref{appendix:repeated}.

Because $\Delta(\Argb)$ is compact \citep[Theorem 15.11]{aliprantis2006infinite}, $\eoccup_{\estrat}^{T,1}$ has a convergent subsequence $(\eoccup_{\estrat}^{T_n,1})_{n\in\naturals}$, which by \autoref{prop:extended-occup} is also a convergent subsequence of $\eoccup_{\estrat}^{T,2}$. Let $\eoccup_{\estrat}$ denote the weak$^*$ limit along $T_n$. The continuity of the projection implies that  $\occup_{\estrat}$ is the marginal of $\eoccup_{\estrat}$ on \Arg, and $\bsplit_{\estrat}\equiv\mathbb{P}_\strat\circ\belief_\infty^{-1}$ is the marginal on \Beliefs. Moreover,  \autoref{eq:step-1} and \autoref{eq:step-2} together imply that $\occup_{\estrat}$ admits the decomposition on the right hand side of \autoref{eq:occup-limit}, and the result follows.

\paragraph{The allocation rule lacks profitable undetectable deviations} We now show the allocation rule \allocr\  admits no profitable undetectable deviations. An undetectable deviation is a transition probability \stratb\ from $\Types\times\Beliefs$ to \Posteriors\ such that for all $\belief\in\Beliefs$ and $\typeb\in\Types$ 
\begin{align}\label{eq:undetectable}
\sum_{\type\in\Types}\typed(\type)\stratb(\typeb|\type,\belief)=\typed(\typeb). 
\end{align}
Consider a deviation by the agent to $(\pp,\stratb)$ instead of \estrat. That is, when his type is \type\ and belief is \belief, the agent chooses type \typeb\ with probability $\stratb(\typeb|\type,\belief)$. In what follows, we index the induced distributions over histories only by \strat\ and \stratb\ as we are only changing the agent's reporting strategy. In particular, denote by $\mathbb{P}_{\stratb}$ the induced probability distribution over  terminal histories when the agent uses $(\pp,\stratb)$ instead of \estrat. 

 We first claim that for every $t$ the marginal of $\mathbb{P}_{\stratb}^t$ over $\States\times\hat H^t$ coincides with that of $\mathbb{P}_\strat^t$. Recall that for every $t$ we have that
\begin{align*}
\mathbb{P}_{\stratb}^{t+1}(\state,h^t,\type,\typeb,\alloc)=\mathbb{P}_{\stratb}^t(\state,h^t)\typed(\type) \stratb(\typeb|\belief_t(h^t),\type)\dmecht(\alloc|\state,\hat{h}^t,\typeb).
\end{align*}
Adding up over \type\ on both sides and using \autoref{eq:undetectable}, we get:
\begin{align*}
\sum_{\type\in\Types}\mathbb{P}_{\stratb}^{t+1}(\state,h^t,\type,\typeb,\alloc)=\mathbb{P}_{\stratb}^t(\state,h^t)\typed(\typeb)\dmecht(\alloc|\state,\hat{h}^t,\typeb).
\end{align*}
Now, note that $h^t=(\hat h^t,\tilde{\type}^{t-1})$ for some sequence $\tilde{\type}^{t-1}\in\Types^{t-1}$. If we add up on both sides over all such sequences we get
\begin{align*}
\sum_{\type\in\Types,\tilde{\type}^{t-1}\in\Types^{t-1}}\mathbb{P}_{\stratb}^{t+1}(\state,\hat{h}^t,\tilde{\type}^{t-1},\type,\typeb,\alloc)=\sum_{\tilde{\type}^{t-1}\in\Types^{t-1}}\mathbb{P}_{\stratb}^t(\state,\hat{h}^t,\tilde{\type}^{t-1})\typed(\typeb)\dmecht(\alloc|\state,\hat{h}^t,\typeb).
\end{align*}
Note that if the distribution over $\States\times\hat{H}^t$ induced by \stratb\ up to period $t$ is the same as that induced by \strat, we get that the right-hand side equals:
\begin{align*}
\mathbb{P}_{\strat,\hat{\terminals}^t}^t(\state,\hat{h}^t)\typed(\typeb)\dmecht(\alloc|\state,\hat{h}^t,\typeb),
\end{align*}
and hence $\mathbb{P}_{\stratb,\hat{\terminals}^{t+1}}^{t+1}(\state,\hat{h}^t,\typeb,\alloc)=\mathbb{P}_{\strat,\hat{\terminals}^t}^t(\state,\hat{h}^t)\typed(\typeb)\dmecht(\alloc|\state,\hat{h}^t,\typeb)=\mathbb{P}_{\strat,\hat{\terminals}^{t+1}}^{t+1}(\state,\hat{h}^t,\typeb,\alloc)$. By definition of $\mathbb{P}_{\stratb}$, we conclude that 
$
\mathbb{P}_{\stratb,\hat{\terminals}^\infty}=\mathbb{P}_{\strat,\hat{\terminals}^\infty}
$. Hence, the joint distribution over states, reports, and allocations is the same under \strat\ and \stratb.

Let $\eoccup_{\stratb}^{T,1},\eoccup_{\stratb}^{T,2}\in\Delta(\Alloc\times\Types\times\hat\Types\times\States\times\Beliefs)$ denote the analogue of the occupation measures in Equations \ref{eq:occup-1} and \ref{eq:occup-2} corresponding to \stratb, extended to account for the agent's reports. Below, the notation $\hat\Types$ signifies those are the agent's reports.  In what follows, recalling that the belief system depends only on the reported history and not the type history is useful. \autoref{eq:undetectable} implies that for all measurable subsets $\tilde\Delta$ of \Beliefs,
\begin{align}\label{eq:occup-11}
&\sum_{\type\in\Types}\eoccup_{\stratb}^{T,1}(\{(\alloc,\type,\typeb,\state)\}\times\tilde\Delta)=\frac{1}{T}\sum_{t=1}^T\sum_{h^t\in H^t}\mathbb{P}_{\stratb}^t(\state,h^t)\typed(\typeb)\dmecht(\state,\hat{h}^t,\typeb)(a)\mathbbm{1}[\belief_t(h^t)\in \tilde\Delta]\nonumber\\
&=\frac{1}{T}\sum_{t=1}^T\sum_{\hat{h}^t\in \hat{H}^t}\mathbb{P}_{\stratb}^t(\state,\hat{h}^t)\typed(\typeb)\dmecht(\state,\hat{h}^t,\typeb)(\alloc)\mathbbm{1}[\belief_t(\hat{h}^t)\in \tilde\Delta]\\
&=\frac{1}{T}\sum_{t=1}^T\sum_{\hat{h}^t\in \hat{H}^t}\mathbb{P}_{\strat}^t(\state,\hat{h}^t)\typed(\typeb)\dmecht(\state,\hat{h}^t,\typeb)(\alloc)\mathbbm{1}[\belief_t(\hat{h}^t)\in \tilde\Delta]=\eoccup_{\strat}^{T,1}(\{(\alloc,\typeb,\state)\}\times\tilde\Delta).\nonumber
\end{align}
The first equality uses the definition of undetectability, the second uses that all the terms depend only on the reported history, the third uses that \strat\ and \stratb\ induce the same distribution over states, reports, and allocations, and the last is the definition of the occupation measure induced by \strat. In words, the marginal of $\eoccup_{\stratb}^{T,1}$ over allocations, reports, states, and beliefs, $\eoccup_{\stratb,\Alloc\hat\Types\States\Delta}^{T,1}$ coincides with $\eoccup_{\strat}^{T,1}$. 

We now show that $\eoccup_{\stratb}^{T,1}$ and $\eoccup_{\stratb}^{T,2}$ have a convergent subsequence with limit $\eoccup_{\stratb}\in\Delta(\Alloc\times\Types\times\hat\Types\times\States\times\Beliefs)$ that admits the following decomposition: \begin{align*}
\mathbb{E}_{\eoccup_{\stratb}}\left[\pay\aarg\right]=\int_{\Beliefs}\left[\sum_{\type\in\Types}\typed(\type)\sum_{\typeb\in\Types}\stratb(\type,\belief)(\typeb)\sum_{\alloc}\allocr(\alloc|\typeb,\belief)\pay(\alloc,\type,\belief)\right]d\bsplit_{\estrat},
\end{align*}
where $\pay(\alloc,\type,\belief)$ is the linear extension of $\pay(\alloc,\type,\cdot)$. 

We proceed as follows: First, we show that for each $T$, under $\occup_{\stratb}^{T,1}$, the allocation is independent of the true type conditional on the period-$t$ belief and the reported type. Indeed, 
\begin{align*}
\sum_{\state\in\States}\occup_{\stratb}^{T,1}(\alloc,\type,\typeb,\state,\belief)&=\frac{1}{T}\sum_{t=1}^T\sum_{h^t\in H^t}\mathbb{P}_{\stratb}(h^t)\left(\sum_{\state\in\States}\mathbb{P}_{\stratb}(\state|h^t)\dmech_t(\state,\hat h^t,\typeb)(\alloc)\right)\typed(\type)\stratb(\belief_t(h^t),\type)(\typeb)\mathbbm{1}[\belief_t(h^t)=\belief]\\
&=\frac{\typed(\type)\stratb(\type,\belief)(\typeb)}{\typed(\typeb)}\left(\frac{1}{T}\sum_{t=1}^T\sum_{\hat{h}^t:\belief_t(\hat{h}^t)=\belief}\mathbb{P}_{\stratb}^t(\hat{h}^t)\sum_{\state\in\States}\mathbb{P}_{\stratb}(\state|\hat h^t)\typed(\typeb)\dmech_t(\hat{h}^t,\typeb)(\alloc)\right)\\
&=\frac{\typed(\type)\stratb(\type,\belief)(\typeb)}{\typed(\typeb)}\occup_{\stratb,\Alloc\hat\Types\Delta}^{T,1}(\alloc,\typeb,\belief)=\frac{\typed(\type)\stratb(\type,\belief)(\typeb)}{\typed(\typeb)}\occup_{\strat,\Alloc\Types\Delta}^{T,1}(\alloc,\typeb,\belief),
\end{align*}
where the third and fourth equalities use \autoref{eq:occup-1}.  Moreover, the same analysis as that under \strat\ implies the agent's true type is independent of the belief.

Therefore, $\eoccup_{\stratb,\Alloc\Types\hat\Types\Delta}^{T,1}$ admits decomposition:
\begin{align}\label{eq:undetectable-decomposition}
\eoccup_{\stratb,\Alloc\Types\hat\Types\Delta}^{T,1}(\alloc,\type,\typeb,\belief)=\frac{\typed(\type)\stratb(\typeb|\type,\belief)}{\typed(\typeb)}\eoccup_{\stratb,\Alloc\hat\Types\Delta}^{T,1}(\alloc,\typeb,\belief)=\frac{\typed(\type)\stratb(\typeb|\type,\belief)}{\typed(\typeb)}\eoccup_{\strat,\Alloc\Types\Delta}^{T,1}(\alloc,\typeb,\belief),
\end{align}
where the second equality follows from \autoref{eq:occup-11}.

Second, by the same arguments as in \autoref{appendix:repeated},   $\eoccup_{\stratb}^{T,2}(\alloc,\type,\typeb,\state,\belief)$ admits decomposition $\belief(\state)\eoccup_{\stratb,\Alloc\Types\hat\Types\Delta}^{T,2}(\alloc,\type,\typeb,\belief)$ for each $T$.

Third, convergent subsequences $\occup_{\stratb}^{T_{n_m},1}$ and $\occup_{\stratb}^{T_{n_m},2}$ exist with limit $\eoccup_{\stratb}$ (cf. \autoref{prop:extended-occup}).\footnote{\autoref{prop:extended-occup} implies that $\eoccup_{\stratb}^{T,1}$ and $\eoccup_{\stratb}^{T,2}$ have the same set of subsequential limits. Indeed, let $\testf$ denote any continuous bounded function on $\States\times\Beliefs\times\Types\times\Types\times\Alloc$. Let $D_T(\testf)=\mathbb{E}_{\eoccup_{\stratb}^{T,2}}\left[g\right]-\mathbb{E}_{\eoccup_{\stratb}^{T,1}}\left[g\right]=\frac{1}{T}\sum_{t=1}^T\mathbb{E}_{\stratb}\left[g(\alloc_t,\type_t,\typeb_t,\state,\belief_{t+1})-g(\alloc_t,\type_t,\typeb_t,\state,\belief_t)\right]$. The argument in \autoref{prop:extended-occup} implies that $D_T(g)\to 0$ as $T\to\infty$ (this does not rely on the existence of a limit, just the convergence of beliefs and the continuity of $g$). Now, let $T_n$ be such that $\eoccup_{\stratb}^{T_n,1}\weakc\tilde\occup$. Note that
\begin{align*}
\int gd\eoccup_{\stratb}^{T_n,2}=\int gd\eoccup_{\stratb}^{T_n,1}+D_{T_n}(g)\to \int gd\tilde\occup+0,
\end{align*}
so a subsequential limit of $\eoccup_{\stratb}^{T,1}$ is a subsequential limit of $\eoccup_{\stratb}^{T,2}$. Switching the role of 1 and 2, we obtain the opposite set inclusion.} We note two things. On the one hand, because our previous arguments show that the set of measures admitting the above decompositions is closed, the limit $\eoccup_{\stratb}$ admits the decomposition. That is,
\begin{align*}
\int_{\Alloc\times\Types\times\hat\Types\times\States\times\Beliefs} \pay\aarg\eoccup_{\stratb}(d(\alloc,\type,\typeb,\state,\belief))=\int_{\Alloc\times\hat\Types\times\Beliefs}\sum_{\type\in\Types}\frac{\typed(\type)\stratb(\typeb|\type,\belief)}{\typed(\typeb)}\left(\sum_{\state\in\States}\belief(\state) \pay\aarg\right)d\eoccup_{\stratb,\Alloc\hat\Types\Delta}.
\end{align*}
On the other hand, because $T_{n_m}$ is a subsequence of $T_n$ \emph{and} $\eoccup_{\stratb,\Alloc\hat\Types\States\Delta}^{T,1}=\eoccup_{\strat,\Alloc\Types\States\Delta}^{T,1}$ \emph{and} $\eoccup_{\strat,\Alloc\Types\States\Delta}^{T_n,1}\weakc\eoccup_\strat$, we can conclude that $\eoccup_{\stratb,\Alloc\hat\Types\Delta}=\eoccup_{\strat,\Alloc\Types\Delta}$ and admits the same decomposition as $\bar\occup_\strat$. We conclude that 
\begin{align*}
\mathbb{E}_{\eoccup_{\stratb}}\left[ \pay\aarg\right]=\int_{\Beliefs}\left[\sum_{\type,\typeb\in\Types}\typed(\type)\stratb(\typeb|\type,\belief)\sum_{\alloc}\allocr(\alloc|\typeb,\belief)\left(\sum_{\state\in\States}\belief(\state) \pay\aarg\right)\right]d\bsplit_{\estrat}.
\end{align*}

Consequently, 
\begin{align*}
\lim\sup_{T\to\infty}\mathbb{E}_{\stratb}\left[U_T\right]\geq \lim_{m\to\infty}\mathbb{E}_{\stratb}\left[U_{T_{n_m}}\right]&=\mathbb{E}_{\eoccup_{\stratb}}\left[\pay\aarg\right]\\
&=\mathbb{E}_{\bsplit_{\estrat}}\left[\sum_{\type,\typeb,\alloc}\typed(\type)\stratb(\type,\belief)(\typeb)\allocr(\alloc|\typeb,\belief)\pay(\alloc,\type,\belief)\right],
\end{align*}
where $\pay(\alloc,\type,\belief)$ is the linear extension of $\pay(\alloc,\type,\cdot)$. Because \strat\ is a best response, we have that 
\begin{align*}
\mathbb{E}_{\bsplit_{\estrat}}\left[\sum_{\type,\alloc}\typed(\type)\allocr(\alloc|\type,\belief)\pay(\alloc,\type,\belief)\right]\geq\mathbb{E}_{\bsplit_{\estrat}}\left[\sum_{\type,\typeb,\alloc}\typed(\type)\stratb(\type,\belief)(\typeb)\allocr(\alloc|\typeb,\belief)\pay(\alloc,\type,\belief)\right],
\end{align*}
which implies the two-stage mechanism lacks profitable undetectable deviations.
\paragraph{The allocation rule is ex ante individually rational} Define 
\begin{align*}
U_{\mathrm{net}}(\belief)=\sum_{\type\in\Types}\typed(\type)\sum_{\state\in\States}\belief(\state)\left[\sum_{\alloc\in\Alloc}\allocr(\alloc|\type,\belief)\pay\aarg-\pay(\oo,\type,\state)\right],
\end{align*}
to be the agent's (ex ante) payoff net of the outside option at belief \belief. Ex ante individual rationality of \allocr\ is equivalent to $U_{\mathrm{net}}(\belief)\geq0$ for all $\belief$ in the support of $\bsplit_{\estrat}$.

Toward a contradiction, assume that $U_{\mathrm{net}}(\belief)<0$ with positive probability under $\bsplit_{\estrat}$. By \autoref{prop:ball} in \autoref{appendix:aux}, a set $B\subset\Beliefs$ open relative to \Beliefs\ exists such that\footnote{A set $X\subseteq\Delta(\States)$ is open relative to \Beliefs\ if an open set $Y\subseteq\reals^{|\States|}$ exists such that $X=Y\cap\Beliefs$. The boundary relative to $\Delta(\States)$ is analogously defined via  open sets relative to \Beliefs.}
\begin{align*}
\int_BU_{\mathrm{net}}(\belief)\bsplit_{\estrat}(d\belief)<0.
\end{align*}
Moreover, we can pick $B$ such that $\bsplit_{\estrat}(\partial B)=0$, where $\partial B $ denotes the boundary of $B$ relative to $\Delta(\States)$.\footnote{\autoref{prop:ball} provides an interval of radii $r\in(0,r_0)$ such that $\int_{B(\hat\belief,r)}U_{\mathrm{net}}(\belief)\bsplit_{\estrat}(d\belief)<0$. Note that only countable many such $r$ can have $\bsplit_{\estrat}(\partial B(\hat\belief,r))>0$ (the boundaries for different radii are disjoint), so we can always pick $r$ such that $\bsplit_{\estrat}(\partial B(\hat\belief,r))=0$ and preserve the negative sign.} Lastly, let $\delta>0$ be such that 
\begin{align}\label{eq:step-0}
\int_BU_{\mathrm{net}}(\belief)\bsplit_{\estrat}(d\belief)\leq-2\delta.
\end{align}

Let $(\belief_t(h^t))_{t\in\naturals,h^t\in H^t}$ denote the belief process under \estrat. For $L\in\naturals$, define a strategy $(p^L,\strat^L)$ as follows:
\begin{enumerate}
\item $(p_t^L(h^t,\cdot),\strat_t^L(h^t,\cdot))=(p_t(h^t,\cdot),\strat_t(h^t,\cdot))$ if either $t<L$ OR ($t\geq L$ and $\belief_L(h^L)\notin B$), where $h^L$ precedes $h^t$,
\item Otherwise, $(p_t^L(h^t,\cdot),\strat_t^L(h^t,\cdot))=(0,\strat_t(h^t,\cdot))$ (note that when the agent quits the strategy can be specified arbitrarily.)
\end{enumerate}

Note the agent's average payoff through period $T$ under $(p^L,\strat^L)$ can be written as follows:
\begin{align*}
\mathbb{E}_{(p^L,\strat^L)}\left[U_T\right]=\mathbb{E}_{\estrat}\left[U_T\right]-\mathbb{E}_{\estrat}\left[\frac{1}{T}\sum_{t=1}^T\left(\pay(\alloc_t,\type_t,\state)-\pay(\oo,\type_t,\state)\right)\mathbbm{1}[t\geq L \text{ and } \belief_L\in B]\right].
\end{align*}

We show that for sufficiently large $L$, $(p^L,\strat^L)$ is a profitable deviation. For $T\geq L$, write
\begin{align}\label{eq:nn}
&\mathbbm{E}_{\estrat}\left[\frac{1}{T}\sum_{t=L}^T\left(\pay(\alloc_t,\type_t,\state)-\pay(\oo,\type_t,\state)\right)\mathbbm{1}[\belief_L\in B]\right]-\int_BU_{\mathrm{net}}(\belief)\bsplit_{\estrat}(d\belief)=\\
&=\mathbbm{E}_{\estrat}\left[\frac{1}{T}\sum_{t=1}^T\left(\pay(\alloc_t,\type_t,\state)-\pay(\oo,\type_t,\state)\right)\mathbbm{1}[\belief_t\in B]\right]-\int_BU_{\mathrm{net}}(\belief)\bsplit_{\estrat}(d\belief)\nonumber\\
&-\mathbbm{E}_{\estrat}\left[\frac{1}{T}\sum_{t=1}^{L-1}\left(\pay(\alloc_t,\type_t,\state)-\pay(\oo,\type_t,\state)\right)\mathbbm{1}[\belief_t\in B]\right]\nonumber\\
&+\mathbbm{E}_{\estrat}\left[\frac{1}{T}\sum_{t=L}^T\left(\pay(\alloc_t,\type_t,\state)-\pay(\oo,\type_t,\state)\right)\left(\mathbbm{1}[\belief_L\in B]-\mathbbm{1}[\belief_t\in B]\right)\right].\nonumber
\end{align}

Let $K=\max_{\type,\state,\alloc}|\left(\pay(\alloc,\type,\state)-\pay(\oo,\type,\state)\right)|$, and note that we can bound the term in the last line of \autoref{eq:nn} as follows:
\begin{align*}
\mid\mathbbm{E}_{\estrat}\left[\frac{1}{T}\sum_{t=L}^T\left(\pay(\alloc_t,\type_t,\state)-\pay(\oo,\type_t,\state)\right)\left(\mathbbm{1}[\belief_L\in B]-\mathbbm{1}[\belief_t\in B]\right)\right]\mid\leq K\mathbbm{E}_{\estrat}\left[\frac{1}{T}\sum_{t=L}^T|\mathbbm{1}[\belief_L\in B]-\mathbbm{1}[\belief_t\in B]|.\right]
\end{align*}
Because $\belief_t\weakc \belief_\infty$ $\pr_{\estrat} \text{-a.s.}$ (\autoref{lemma:beliefs-0}) and $\bsplit_{\estrat}(\partial B)=0$, we conclude:\footnote{The property that $\bsplit_{\estrat}(\partial B)=0$ ensures that $\mathbbm{1}[\belief_t(h^\infty)\in B]$ is eventually constant almost surely. Let $E=\{h^\infty:\belief_\infty(h^\infty)\notin \partial B\}$. On $E$, either $\belief_\infty$ is in the  interior of $B$ (relative to $\Delta(\States)$) or in the  interior of $B^\complement$  (relative to $\Delta(\States)$), that is, an $\epsilon>0$ exists such that $(B(\belief_\infty,\epsilon)\cap\Delta(\States))\subset B$ or $B^\complement$. In either case, for each $h^\infty$, there exists $N(h^\infty)$ such that for all $t\geq N(h^\infty)$, $\belief_t(h^\infty)\in (B(\belief_\infty,\epsilon)\cap\Delta(\States))$ and hence $\mathbbm{1}[\belief_t(h^\infty)\in B]$ is eventually constant. When $\bsplit_{\estrat}(\partial B)=0$, we have that $E$ has probability 1 under $\pr_{\estrat}$.}
\begin{align*}
\lim_{L\to\infty}\sup_{t\geq L}|\mathbbm{1}[\belief_L\in B]-\mathbbm{1}[\belief_t\in B]|=0\;\; \pr_{\estrat} \text{-a.s.}
\end{align*}
Then, 
\begin{align*}
\mathbbm{E}_{\estrat}\left[\frac{1}{T}\sum_{t=L}^T|\mathbbm{1}[\belief_L\in B]-\mathbbm{1}[\belief_t\in B]|\right]\leq \mathbb{E}_{\estrat}\left[\sup_{t\geq L}|\mathbbm{1}[\belief_L\in B]-\mathbbm{1}[\belief_t\in B]|\right]
\end{align*}
and choose $\bar L$ large enough so that for all $L\geq \bar L$, we have that:
\begin{align}\label{eq:step-2-end}
 \mathbb{E}_{\estrat}\left[\sup_{t\geq L}|\mathbbm{1}[\belief_L\in B]-\mathbbm{1}[\belief_t\in B]|\right]\leq \delta/K.
\end{align}
Consider now the term in the third line of \autoref{eq:nn} and note that it is bounded in absolute value by $K(L-1)/T$, which tends to $0$ as $T\to \infty$. Similarly, the term in the second line of \autoref{eq:nn} vanishes as $T\to \infty$.\footnote{Indeed, $\mathbbm{E}_{\estrat}\left[\frac{1}{T}\sum_{t=1}^T\left(\pay(\alloc_t,\type_t,\state)-\pay(\oo,\type_t,\state)\right)\mathbbm{1}[\belief_t\in B]\right]=\mathbbm{E}_{\eoccup_{\estrat}^{T,1}}\left[\left(\pay(\alloc,\type,\state)-\pay(\oo,\type,\state)\right)\mathbbm{1}[\belief\in B]\right]$ and our previous analysis implies it converges to $\int_BU_{\mathrm{net}}(\belief)\bsplit_{\estrat}(d\belief)$.} Thus, for $L\geq \bar L$, we can find $\overline T$ such that for all $T\geq \overline T$\footnote{Choose $\overline T$ so that for all $T\geq \overline T$:
\begin{align*}
&|\mathbbm{E}_{\estrat}\left[\frac{1}{T}\sum_{t=1}^{T}\left(\pay(\alloc_t,\type_t,\state)-\pay(\oo,\type_t,\state)\right)\mathbbm{1}[\belief_t\in B]\right]-\int_BU_{\mathrm{net}}(\belief)\bsplit_{\estrat}(d\belief)|\\
&+|\mathbbm{E}\left[-\frac{1}{T}\sum_{t=1}^{L-1}\left(\pay(\alloc_t,\type_t,\state)-\pay(\oo,\type_t,\state)\right)\mathbbm{1}[\belief_t\in B]\right]|\leq \delta/2.
\end{align*}} 
\begin{align}\label{eq:yy}
\mid \mathbbm{E}_{\estrat}\left[\frac{1}{T}\sum_{t=L}^T\left(\pay(\alloc_t,\type_t,\state)-\pay(\oo,\type_t,\state)\right)\mathbbm{1}[\belief_L\in B]\right]-\int_BU_{\mathrm{net}}(\belief)\bsplit_{\estrat}(d\belief)\mid \leq \frac{3}{2}\delta,
\end{align}
and hence 
\begin{align}
\mathbbm{E}_{\estrat}\left[\frac{1}{T}\sum_{t=L}^T\left(\pay(\alloc_t,\type_t,\state)-\pay(\oo,\type_t,\state)\right)\mathbbm{1}[\belief_L\in B]\right]\leq-\frac{1}{2}\delta.
\end{align}
We conclude that
\begin{align*}
\lim\sup_{T\to\infty}\mathbb{E}_{(p^L,\strat^L)}\left[U_T\right]\geq \lim_{T\to\infty}\mathbb{E}_{\estrat}[U_T]+\frac{1}{2}\delta,
\end{align*}
a contradiction.

\subsubsection{Proof of \autoref{theorem:dynamic} (sufficiency)} 
We now show that all outcome distributions $\outcome\in\Delta(\Arg)$ that admit the decomposition in \autoref{theorem:dynamic} can be implemented via a dynamic mechanism. To this end, let $\bsplit$ and $\allocr:\Types\times\Beliefs\to\Delta(\Alloc)$ denote the belief distribution and the ex ante individually rational allocation rule without profitable undetectable deviations corresponding to \outcome.  That is,
\begin{align}\label{eq:dynamic-suff}
\outcome\aarg=\int_{\Beliefs}\belief(\state)\typed(\type)\allocr(\alloc|\type,\belief)\bsplit(d\belief).
\end{align}

 The proof proceeds as follows:
\begin{enumerate}
\item\label{itm:dyn-imp} We first consider a fictitious setting in which there is no state uncertainty and we are given an allocation rule $\allocrb:\Types\to\Delta(\Alloc)$ that is ex ante individually rational and lacks profitable undetectable deviations for some utility function $\payb:\Alloc\times\Types\to\reals$. \autoref{prop:rahman-2}  shows that a dynamic mechanism exists that implements $\allocrb$. 

\item\label{itm:fin-sup} We then show that if \outcome\ satisfies \autoref{eq:dynamic-suff}, then a finite support belief distribution \bsplitb\ exists such that \outcome\ and \allocr\ satisfies \autoref{eq:dynamic-suff} with \bsplitb\ instead of \bsplit.

\item\label{itm:end} Lastly, we use this result to construct a dynamic game that implements \outcome.

\end{enumerate}

\paragraph{Step \ref{itm:dyn-imp}} For this step, we consider a fictitious setting in which there is no state uncertainty and the designer faces a privately informed agent with payoffs $\payb:\Alloc\times\Types\to\reals$, where $\type\sim\typed\in\Delta(\Types)$.\footnote{Anticipating our construction in \autoref{itm:end}, for each belief $\belief$ in the support of \bsplit, the allocation rule $\allocrb(\cdot|\cdot,\belief)$ has no profitable undetectable deviations relative to payoff function $\payb(\alloc,\type)=\sum_{\state\in\States}\belief(\state)\pay\aarg$.}

Suppose we are given an allocation rule $\allocrb:\Types\to\Delta(\Alloc)$ that admits no profitable undetectable deviations relative to \payb\ as in \autoref{def:rahman-ic} and is individually rational as in \autoref{def:dynamic-ir}.
We have the following result:
\begin{proposition}\label{prop:rahman-2}
Let $\outcomeb=\typed(\type)\allocrb(\alloc|\type)\in\Delta(\Alloc\times\Types)$ such that $\allocrb$ lacks profitable undetectable deviations and is ex ante individually rational. Then, a dynamic mechanism exists that implements $\outcomeb$.
\end{proposition}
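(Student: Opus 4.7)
The plan is to construct a dynamic mechanism via random-length blocks, in the spirit of \cite{margaria2018dynamic}. Fix an increasing sequence $T_n\to\infty$ of block-length parameters. Block $n$ consists of a \emph{reporting phase} of $T_n$ periods, in which the mechanism elicits a report $\typeb_t$ and allocates from $\allocrb(\cdot|\typeb_t)$, followed by an \emph{adjustment phase} of stochastic length $L_n$, in which the mechanism simulates reports $\hat\type_t$ (disregarding the agent's current message) and allocates from $\allocrb(\cdot|\hat\type_t)$. The simulated reports are chosen so that the empirical distribution of the combined $T_n+L_n$ reports in the block equals $\typed$ exactly; at that point block $n+1$ begins.

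First I would verify implementation. Under truthful participation, the empirical reporting-phase distribution converges to $\typed$ almost surely, so $L_n/T_n \to 0$, and the long-run frequency of $(\alloc_t,\type_t)$ is determined by reporting-phase periods, where it is exactly $\typed(\type)\allocrb(\alloc|\type) = \outcomeb(\alloc,\type)$, as required.

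Next I would show best-response optimality of truthful play. Because adjustment-phase allocations are independent of $\type_t$, any stationary reporting strategy $\strat:\Types \to \Delta(\Types)$ yields an asymptotic per-period payoff of $\sum_{\type,\typeb,\alloc}\typed(\type)\strat(\typeb|\type)\allocrb(\alloc|\typeb)\payb(\alloc,\type)$ from the reporting phase plus an adjustment term that is either vanishing or itself the payoff of a specific deviation. If $\strat$ is undetectable in the sense of \autoref{def:undetectable}, then $L_n/T_n \to 0$ and the absence of profitable undetectable deviations (\autoref{def:rahman-ic}) rules out improvement over truthtelling. If $\strat$ is detectable, $L_n$ grows linearly in $T_n$, diluting the agent's influence so that the limiting payoff reduces to the payoff of the uniform deviation $\strat(\typeb|\type) = \typed(\typeb)$, which is itself undetectable, so truthtelling weakly dominates again. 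Non-stationary strategies reduce to the stationary case via time-averaging across blocks. Participation after every history follows because each block restarts the strategic situation afresh: the continuation payoff under truthful participation equals the ex ante payoff of $\allocrb$ under $\typed$, which by \autoref{def:dynamic-ir} weakly exceeds the outside option.

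The main obstacle will be the precise specification of the adjustment phase, which must drive the within-block empirical distribution to $\typed$ exactly (handling integer-valued counts via auxiliary randomization), satisfy $L_n/T_n \to 0$ almost surely under truthtelling while growing linearly under detectable deviations, and not leak strategically exploitable information to the agent. A construction in which the simulated counts are a deterministic monotone function of the shortfall $\typed - \hat\typed_n$, with ties broken by an auxiliary uniform draw, resolves the first and third desiderata; the law of large numbers then controls $L_n$ to yield the second.
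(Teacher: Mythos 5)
Your construction (blocks with a reporting phase followed by an adjustment phase, in the spirit of \cite{margaria2018dynamic}) matches the paper's, but two details diverge and the second is a genuine gap.

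First, a minor technical wrinkle: you target the \emph{exact} within-block empirical report frequency, which for generic $\typed$ is infeasible with integer counts; your proposed fix via auxiliary randomization is not worked out. The paper sidesteps this by targeting the \emph{expected} within-block report frequency, generating adjustment-phase reports i.i.d.\ from a distribution $f^a_n$ chosen so that $\mathbb E[\freq_n^2 \mid \freq_n^1] = \typed$, with a bounded adjustment length $N_n \le L_n\,\overline\rho$ for a fixed constant $\overline\rho$.

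Second, and more seriously, your optimality argument for detectable deviations does not close. You claim that because the adjustment phase length grows linearly, the agent's influence is ``diluted'' and the limiting payoff ``reduces to the payoff of the uniform deviation.'' But the adjustment length is bounded by a \emph{constant} multiple of the reporting length (not a diverging multiple), so the limiting per-period payoff is a strict convex combination of the deviating strategy's reporting-phase payoff and the adjustment-phase payoff. If the deviating strategy is strictly profitable in the reporting phase, this convex combination may still strictly exceed $\Utruth$, so your argument does not establish domination. The paper avoids this case split entirely: it proves that under \emph{any} reporting strategy the occupation measure over $(\alloc,\typeh)$ pairs equals $\outcomeb$, i.e.\ $\occup_\strat(\alloc,\typeh)=\typed(\typeh)\allocrb(\alloc|\typeh)$, so the joint occupation over $(\type,\typeh)$ decomposes as $\typed(\type)\occup_\strat(\typeh|\type)$ where $\occup_\strat(\typeh|\type)$ is automatically an undetectable deviation. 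The payoff of \emph{any} strategy is then $\sum_{\type,\typeh,\alloc}\typed(\type)\occup_\strat(\typeh|\type)\allocrb(\alloc|\typeh)\payb(\alloc,\type)\le\Utruth$ directly by \autoref{def:rahman-ic}, with non-stationary strategies handled by extracting a weak$^*$-convergent subsequence of the finite-$T$ occupation measures (compactness of $\Delta(\Alloc\times\Types\times\Types)$). Your time-averaging reduction of non-stationary to stationary strategies is asserted but not argued, and your treatment of individual rationality (continuation payoff equals ex ante payoff at each block start) is correct and matches the paper.
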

\begin{proof}[Proof of \autoref{prop:rahman-2}]
The proof is constructive. We build on the analysis of  \cite{margaria2018dynamic} and present a dynamic mechanism that alternates between communication and adjustment phases. In all phases, the mechanism selects allocations using reports \typeb\ according to \allocrb. In a communication phase, the reports are those sent by the agent. In an adjustment phase, the agent's reports are disregarded; instead, the mechanism simulates reports to guarantee that the occupation measure over reports coincides with \typed\ and these simulated reports are used to determine the allocation. The mechanism ensures that under any agent's strategy, the occupation measure over reports and allocations exists and equals $\outcomeb$; thus, any strategy corresponds to an undetectable deviation. The length of communication phases grows in time. Thus,  under truthtelling the relative length of adjustment phases vanishes in time, and the expected occupation measure over types and allocations exists and equals $\outcomeb$. Because $\allocrb$ lacks profitable undetectable deviations, it follows that truthtelling is optimal for the agent. Because $\allocrb$ is ex ante individually rational, it follows that the participation constraints are satisfied.

Formally, the mechanism consists of sequential blocks, each block starting with a communication phase followed by an adjustment phase. The lengths of communication phases are fixed at $L_1,L_2,\dots$ such that $L_n\to\infty$ and $L_n/\sum_{k\leq n} L_k\to0$, e.g., $L_n=n$. The length of  adjustment phase $N_n$ depends on the agent's reports in the communication phase in block $n$.  Denote by $T_{n}$ the first period of  block $n$, which is the first period of the corresponding communication phase. The first period of the corresponding adjustment phase  is $T_{n}+L_n+1$. Denote by $\freq_n^1$ the average report frequencies in this block at the beginning of the adjustment stage:
\begin{align}
\freq_n^1(\typeh)\triangleq\frac{1}{L_n}\sum_{t=T_n}^{T_n+L_n-1} 1(\typeh_t=\typeh).
\end{align}
If $\freq_n^1=\typed$, then the adjustment phase is empty, and the mechanism proceeds to the next block. Otherwise, in the adjustment phase, the mechanism generates reports over $N_n$ periods to guarantee that at the end of the adjustment phase the \emph{expected} frequency of reports in this block equals  \typed\, that is,
\begin{align}\label{eq:adjustment_target}
\mathbb{E}[\freq_n^2|\freq_n^1]=\typed
\end{align}
where 
\begin{align}
\freq_n^2(\typeh)\triangleq\frac{1}{L_n+N_n}\sum_{t=T_n}^{T_n+L_n+N_n-1} 1(\typeh_t=\typeh).
\end{align}

To do so, denote by $\eta\triangleq\min_\type \typed(\type)$ and observe that $f\in\Delta(\Types)$ can be surrounded by a ball of radius $\eta$ within the simplex $\Delta(\Types)$.  The adjustment phase lasts for $N_n$ periods where:\footnote{Any $||.||_p$ would work, but larger $p$ results in weakly shorter adjustment phases.}
\begin{align}
    N_n=\left\lceil L_n\frac{||\freq_n^1-\typed||_\infty}{\eta}\right\rceil,
\end{align}
and in each period of the adjustment phase the mechanism generates the reports i.i.d. according to $\tilde f_n^a$:
\begin{align}
    f^a_n=f-(\freq_n^1-f)\frac{L_n}{N_n}.
\end{align}
The construction ensures that $f^a_n\in\Delta(\Types)$, because $||f^a_n-f||_\infty\leq\eta$, and that (\ref{eq:adjustment_target}) holds, because
\begin{align*}
   \mathbb{E}[\freq_n^2|\freq_n^1]=\frac{1}{L_n+N_n}(L_n \freq_n^1+N_n f^a_n)=f.
\end{align*}

This in turn guarantees that the long-run   distribution of reports  (generated jointly by the agent and the mechanism) exists and equals \typed\ irrespectively of the agent's strategy. Intuitively, the fact that each block becomes negligible relative to past history over time ensures the agent's reports in each block have less and less effect on the long run frequency of reports, whereas the adjustment phase ensures that the frequency of reports converges to \typed. Formally, for any history and $T$ denote by $\nlast(T)$ the number of the block to which $T$ belongs and by $\Tlast(T)$ the first period of that block. Observe that for any agent's strategy:
\begin{align}
    N_n\leq L_n \left(\max_{f'} \frac{||f'-f||_\infty}{\eta}+1\right)\triangleq L_n \overline\rho. 
\end{align}
Therefore, 
\begin{align}\label{eq:adjustment-bound}
    \frac{|T-\Tlast(T)|}{\Tlast(T)}\leq \frac{L_{\nlast(T)}(1+\overline\rho)}{\sum_{k<\nlast(T)} L_k}\xrightarrow[T\to\infty]{a.s.}0,
\end{align}
where the limit result holds because $\nlast(T)\xrightarrow[T\to\infty]{a.s.}\infty$ and $L_n/\sum_{k\leq n} L_k\xrightarrow[n\to\infty]{}0$.

Then, for any agent's strategy, for any $\typeh\in\Types$,
\begin{align}
\lim_{T\to\infty}\frac{1}{T}\sum_{t=1}^T \Pr(\typeh_t=\typeh)&=\lim_{T\to\infty} \mathbb{E} \left[\frac{\typed(\typeh)\Tlast(T)+\flast(T)(T-\Tlast(T))}{\Tlast(T)+T-\Tlast(T)}\right]\\
    &=\lim_{T\to\infty} \mathbb{E} \left[\frac{\typed(\typeh)+\flast(T)(T-\Tlast(T)/\Tlast(T)}{1+(T-\Tlast(T))/\Tlast(T)}\right]=\typed(\typeh),\nonumber
\end{align}
where $\flast(T)\in\Delta(\Types)$ is the report frequency in the last block up to period $T$, and the last line follows from \autoref{eq:adjustment-bound}.

Since the mechanism chooses allocations in all periods according to $\allocrb$, it follows that for any agent's strategy \stratb\ the induced occupation measure over allocations and type reports satisfies:
\begin{align}\label{eq:deviation-decomposition}
\lim_{T\to\infty}\frac{1}{T}\mathbb{E}_{\stratb}\left[\sum_{t=1}^T\mathbbm{1}[(\alloc_t,\typeh_t)=(\alloc,\typeh)]\right]=\typed(\typeh)\allocrb(\alloc|\typeh)=\outcomeb(\alloc,\typeh).
\end{align}
In other words, for any reporting strategy the occupation measure over allocations and reports exists.

We now show that under truthtelling the occupation measure over types and allocations exists and equals $\typed(\type)\allocrb(\alloc|\type)=\outcome(\alloc,\type)$. To this end, assume that the agent always reports her true type. For any $T$, denote by $\Ltotal(T)$ the total number of periods spent in  communication phases before $T$ and by $\Ntotal(T)$ the total number of periods spent in adjustment phases before $T$. Observe that by the strong law of large numbers, because $L_n\to\infty$,
\begin{align*}
    \frac{N_n}{L_n}\leq \frac{||\freq^1_n-f||_\infty}{\eta}+\frac{1}{L_n}\xrightarrow[n\to\infty]{a.s.}0.
\end{align*}
Therefore,
\begin{align*}
    \frac{\Ntotal(T)}{\Ntotal(T)+\Ltotal(T)}\xrightarrow[T\to\infty]{a.s.}0,
\end{align*}
because whenever  $N_n/L_n\to 0$, $\lim_{T\to\infty} N^{\mathrm{total}(T)}/(N^{\mathrm{total}}(T)+L^{\mathrm{total}}(T))=\lim_{n\to\infty} N_n/(L_n+N_n)=0.$

It follows that
\begin{align*}
&\lim_{T\to\infty}\frac{1}{T}\sum_{t=1}^T \Pr((\type_t,\typeh_t,\alloc_t)=(\type,\typeh,\alloc))\\
&=\lim_{T\to\infty}\mathbb{E}\left[\frac{\Ltotal(T)1(\type=\typeh)\typed(\typeh)\allocrb(a|\typeh)+\Ntotal(T) \typed^\mathrm{adj}(T)(\type,\typeh,\alloc)}{\Ntotal(T)+\Ltotal(T)}\right]\\
&=1(\type=\typeh)\typed(\typeh)\allocrb(\alloc|\typeh),
\end{align*}
where $f^\mathrm{adj}(T)\in\Delta(\Types\times\Types\times A)$ is the average frequency of types, reports, and allocations in the adjustment phases before $T$. Therefore, under truthtelling, the occupation measure over allocations and types equals
\begin{align}
\outcomeb(\alloc,\type)=    \typed(\type)\allocrb(a|\type).
\end{align}
Hence, the agent's payoff  in the dynamic mechanism under truthtelling is:
\begin{align}
    \Utruth=\sum_{(\alloc,\type)} \typed(\type)\allocrb(\alloc|\type)\payb(\alloc,\type).
\end{align}
It remains to show that the agent cannot achieve more than $\Utruth$ under any other strategy. To this end, fix and alternative strategy $\strat$, and denote by $U(\strat)=\limsup_{T\to\infty}U_T(\strat)$ where:
\begin{align*}
   U_T(\strat)=\frac{1}{T}\sum_{t=1}^T
\sum_{\alloc,\type}\Pr((\alloc_t,\type_t)=(\alloc,\type))\payb(\alloc,\type).
\end{align*}
Consider any convergent subsequence $(U_{T_n})_{n=1}^\infty$ along times $\{T_n\}_{n=1}^\infty$. Because $\Delta(\Alloc\times\Types\times\Types)$ is compact \citep[Theorem 15.11]{aliprantis2006infinite}, a convergent (sub)subsequence at times $\{T_k\}_{k=1}^\infty\subseteq \{T_n\}_{n=1}^\infty$ exists along which the occupation measure induced by \strat\ 
\begin{align*}
\occup_\strat^{T_k}&\weakc\occup_\strat,
\end{align*}
for some $\occup_\strat\in\Delta(\Alloc\times\Types\times\Types)$, which by (\ref{eq:deviation-decomposition}) satisfies $\occup_\strat(\alloc,\typeh)=\typed(\typeh)\allocrb(\alloc|\typeh)$. It follows that for some undetectable deviation $\occup_\strat(\typeh|\type)$:
\begin{align*}
    \lim_{n\to\infty} U_{T_n}=\lim_{k\to\infty} U_{T_k}=\sum_{\type,\typeh,a} \typed(\type)\occup_\strat(\typeh|\type)\allocrb(\alloc|\typeh)\payb(\alloc,\type)\leq \Utruth,
\end{align*}
where the inequality follows because  $\allocrb(a|\typeh)$ lacks profitable undetectable deviations. Because this inequality holds for any convergent subsequence $(U_{T_n})_{n=1}^\infty$,
\begin{align*}
   U(\strat)=\limsup_{T\to\infty}U_T(\strat)\leq\Utruth.
\end{align*}
Finally, observe that the construction ensures that after every history, truthtelling from there on delivers the continuation payoff $\Utruth$. Since $\allocrb$ is ex ante individually rational, $\Utruth\geq\sum_{\type} \typed(\type)\payb(\oo,\type)$, and thus the participation constraints are satisfied. This concludes the proof.
\end{proof}

\paragraph{Step \ref{itm:fin-sup}} Consider now the outcome distribution $\outcome\in\Delta(\Arg)$ satisfying \autoref{eq:dynamic-suff}. As we argue in the proof of \autoref{theorem:cmd-two-stage}, 
 a finite $K\leq|\Alloc||\Types||\States|$, $\{\belief_1,\dots,\belief_K\}\in\Beliefs$, and $\bsplitb\in\Delta(\Beliefs)$ exists such that
\begin{align*}
\outcome\aarg=\typed(\type)\sum_{k=1}^K\bsplitb(\belief_k)\belief_k(\state)\allocr(\alloc|\type,\belief_k).
\end{align*}

\paragraph{Step \ref{itm:end}} We now use steps \ref{itm:dyn-imp} and \ref{itm:fin-sup} to complete the proof of \autoref{theorem:dynamic}, so in what follows we use the finite support representation of \outcome\ in the previous step. By Bayes plausibility, a dynamic mechanism can generate the belief split $\bsplitb$ in $T$ periods with $T\leq \lceil\log_{|\Alloc|}(|\States||\Types||\Alloc|)\rceil$, by treating each sequence of allocations of length $T$ as a message. This can be achieved by making the mechanism constant on the agent's type reports during the first $T$ periods. Since each $\allocr(\cdot|\cdot,\belief_k)$ for $k\in\{1,\dots,K\}$ lacks profitable undetectable deviations and is individually rational,  \autoref{prop:rahman-2} implies that a dynamic mechanism exists that implements \outcome\ by first generating the belief split \bsplitb\ and then implementing $\allocr(\cdot|\cdot,\belief_k)$ in the corresponding continuation play. 

\section{Proof of auxiliary results}\label{appendix:aux}
\subsection{Revelation principle for calibrated mechanism design}\label{appendix:model}
In the main text, we restricted attention to incentive compatible and individually rational calibrated mechanisms. We show in this appendix that this restriction is without loss of generality by considering mechanisms with arbitrary message spaces and participation and reporting decisions by the agents that constitute an equilibrium of the game induced by the mechanism and its calibrated information structure.

\paragraph{Mechanisms} Let $2^{[N]}\setminus\emptyset$ denote the nonempty subsets of agents. Then, we can define a mechanism as a collection $\left\{\left(\mssgs_J,\mechanism_J\right):J\in 2^{[N]}\setminus\emptyset\right\}$, where
\begin{align*}
    \mechanism_J:\mssgs_J\times\States\times[0,1]\to\Delta(\Alloc_J),
\end{align*}
is the mechanism when agents in $J$ participate, where $\mssgs_J=\times_{i\in J}\mssgs_i$ and $\Alloc_J=\times_{i\in J}\Alloci$.

\paragraph{Information Structure} Let $\asignals_i=\Delta(\Alloci)^{\mssgs_i}$ denote the collection of menus of lotteries with labels $\mssgs_i$, and let $\asignals=\times_{i\in[N]}\asignals_i$. An information structure is $(\excdf,\asignals)$, where  $\excdf:\States\times[0,1]\to\asignals$. 

\paragraph{Participation and reporting strategies} It is notationally convenient to allow each agent to have her own randomization device $\cori\sim U[0,1]$ and write agents' strategies as mappings  $(p_i,\strat_i):\Typesi\times\asignals_i\times[0,1]\to\{0,1\}\times \mssgs_i$, where $p_i$ denotes agent $i$'s participation decision, and $\strat_i$ her reporting strategy, conditional on participating. To distinguish the agents' randomization from that of the original mechanism, we reserve $\cor_0$ for the realization of the mechanism's randomization device.

 Given $(p_i,\strat_i)_{i\in[N]}$ and a mechanism $(\mechanism,\mssgs)$, fix a profile $(\type,\asignal,\overline{\cor})\equiv(\typei,\asignal,\cori)_{i\in[N]}$. This determines a set of agents that participate, 
\[J(\type,\asignal,\overline{\cor})=\{j\in[N]:p_j(\type_j,\asignal_j,\cor_j)=1\},\]
and let $J_{-i}(\type,\asignal,\overline{\cor})$ denote the projection of $J(\type,\asignal,\overline{\cor})$ on $J\setminus\{i\}$. Note that $J_{-i}$ only depends on $(\typemi,\asignal_{-i},\overline{\cor}_{-i})$. Lastly, write $\rmech_{J_{-i}(\typemi,\asignal_{-i},\cor_{-i})\cup\{i\}}(\mssg_i, \strat_{J_{-i}(\typemi,\asignal_{-i},\cor_{-i})},\state,\cor_0)\in\Delta(\Alloc_{J_{-i}(\typemi,\asignal_{-i},\cor_{-i})\cup\{i\}})$ for
\begin{align*}
\sum_{\mssg_{J_{-i}(\typemi,\asignal_{-i},\cor_{-i})}}\left(\prod_{j\in J_{-i}(\typemi,\asignal_{-i},\cor_{-i})}\strat_j(\type_j,\asignal_j)(m_j)\right)\mechanism_{J_{-i}(\typemi,\asignal_{-i},\cor_{-i})\cup\{i\}}(m_i,\mssg_{J_{-i}(\typemi,\asignal_{-i},\cor_{-i})},\state,\cor_0)
\end{align*}

\paragraph{Calibrated information structures} Given $(p_i,\strat_i)_{i\in[N]}$ and a mechanism $(\mechanism,\mssgs)$, the information structure $\left(\excdf, \asignals\right)$ is calibrated with the mechanism and the agents' strategies if whenever $\pi(\state,\cor_0)=(\asignal_1,\dots,\asignal_N)$, then for all $i,\mssg_i$
\[\asignal_i(\cdot|\mssg_i)=\mathbb{E}_{\tilde{\type}_{-i}\sim \typedmi(\cdot|\state),\epsilon_{-i}}\left[\sum_{\allocmi\in\Allocmi}\mechanism_{J_{-i}(\typemi,\asignal_{-i},\cor_{-i})\cup\{i\}}(m_i,\strat_{J_{-i}(\typemi,\asignal_{-i},\cor_{-i})},\state,\epsilon_0)(\cdot,\allocmi)\right].\]
Below, to keep the presentation simple, we focus on the case in which the calibrated information structure has finite support.

\paragraph{Equilibrium} Given $(p_i,\strat_i)_{i\in[N]}$, a mechanism $(\mechanism,\mssgs)$ and an information structure $\left(\excdf, \asignals\right)$ calibrated with the mechanism and the agents' strategies, $(p_i,\strat_i)_{i\in[N]}$ is an equilibrium if for all $i\in[N]$, all $\typei\in\Typesi$, all $\asignal_{i}\in\asignals_i$, and $\cor_i\in[0,1]$, the following hold:
\begin{align*}
&\strat_i(\typei,\asignal_i,\cor_i)\in\arg\max_{\mssg_i\in\mssgs_i}\sum_{\alloci\in\Alloci}\asignal_i(\alloci|\mssg_i)\mathbb{E}_{\state\sim\belief_i(\cdot|\typei,\asignal_i)}\left[\payi(\alloci,\typei,\state)\right],\\
&p_i(\typei,\asignal_i,\cor_i)\in\arg\max_{p\in\{0,1\}}p\sum_{\alloci\in\Alloci}\asignal_i(\alloci|\strat_i(\typei,\asignal_i,\cor_i))\mathbb{E}_{\state\sim\belief_i(\cdot|\typei,\asignal_i)}\left[\payi(\alloci,\typei,\state)\right]+(1-p)\mathbb{E}_{\state\sim\belief_i(\cdot|\typei,\asignal_i)}\left[\payi(\alloc_{i\emptyset},\typei,\state)\right],
\end{align*}
where $\belief_i(\typei,\asignal_i)\in\Beliefs$ denotes agent $i$'s updated beliefs about the state when her type is \typei\ conditional on receiving signal $\asignal_i$.

\paragraph{Revelation Principle} Fix $(p_i,\strat_i)_{i\in[N]}$, a mechanism $(\mechanism,\mssgs)$ and an information structure $(\excdf, \asignals)$ calibrated with the mechanism such that  $(p_i,\strat_i)_{i\in[N]}$ is an equilibrium. We construct a direct mechanism $(\mechanism^*,\Types)$ and a calibrated information structure $\left(\excdf^*,\csignals\right)$ calibrated with the mechanism under truthtelling and full participation such that truthtelling and full participation is an equilibrium.

First, note that we can extend each $\mechanism_J(\cdot)\in \Delta(\Alloc_J)$ to a mechanism $\overline{\mechanism}_J(\cdot)\in\Delta(\Alloc)$ as follows: for all $\mssg\in \mssgs_J,\state\in\States,\cor_0\in[0,1],$ and $\alloc_J\in\Alloc_J$, 
\[\overline{\mechanism}_{J}(\mssg_J,\state,\cor_0)(\alloc)=\mechanism_J(\mssg_J,\state,\cor_0)(\alloc_J)\times\delta_{\alloc_{-J,\emptyset}}.\]
Define a ``pseudo''-mechanism as follows:
\begin{align*}
    \hat\mechanism_N\left(\type,\state,\cor_0,\bar\cor\right)=\overline{\mechanism}_{J(\type,\excdf(\state,\cor_0),\bar\cor)}\left(\strat_{J(\type,\asignal,\overline{\cor})},\state,\cor_0\right).
\end{align*}
where $\strat_{J(\type,\asignal,\bar\cor)}$ is the message vector generated by the strategies. Define the full participation mechanism $\mechanism_N^*:\Types\times\States\times[0,1]\mapsto\Delta(\Alloc)$ to be
\begin{align*}
    \mechanism_N^*(\type,\state,\cor_0)(\alloc)=\int_{[0,1]^N}\hat\mechanism_N(\type,\state,\cor_0,\bar\cor)(\alloc)\lebesgue^N(d\bar\cor)
\end{align*}
Let $\csignals_i=\Delta(\Alloci)^{\Typesi}$ and define $\excdf^*(\state,\cor_0)=(\csignal_1,\dots,\csignal_N)\in\times_{i\in[N]}\csignals_i$, where
\begin{align*}
    \csignal_i(\cdot|\hat\typei)=\mathbb{E}_{\typemi\sim \typedmi(\cdot|\state)}\left[\sum_{a_{-i}}\mechanism_N^*\left(\hat\typei,\typemi,\state,\cor_0\right)(\cdot,a_{-i})\right].
\end{align*}
By definition, the information structure is calibrated relative to full participation and truthful reporting.

We now show that full participation and truthful reporting is a best response to others participating and truthfully reporting into the mechanism. To this end, consider agent $i$'s payoff from submitting report $\typeb_i$ when observing $\csignal_i$. Denoting by $\Sigma(\state,\csignal_i)$ the set of $\cor_0$ such that $\excdf_i^*=\csignal_i$, this payoff is given by:\footnote{In the expressions that follow, recall the full participation mechanism $\mechanism_N^*$ already averages over the agents' own randomization devices.}
{\small{
\begin{align*}
    &\sum_{\state\in\States}\frac{\prior(\state)f_i(\typei|\state)}{Pr(\csignal_i|\typei)}\sum_{\typemi}f_{-i}(\typemi|\state)\int_{\Sigma(\state,\csignal_i)}\sum_{\alloc}\mechanism_N^*(\typeb_i,\typemi,\state,\cor_0)(\alloci,\allocmi)\lebesgue(d\cor_0) \payi(\alloci,\typei,\state)=\\
    &\sum_{\alloci\in\Alloci}\sum_{\state\in\States}\frac{\prior(\state)f_i(\typei|\state)}{Pr(\csignal_i|\typei)}\payi(\alloci,\typei,\state)\sum_{\typemi}f_{-i}(\typemi|\state)\int_{\Sigma(\state,\csignal_i)}\sum_{\allocmi}\mechanism_N^*(\typeb_i,\typemi,\state,\cor_0)(\alloci,\allocmi)\lebesgue(d\cor_0)\\
    &=\int_{0}^1\left[\sum_{\alloci}\sum_{\state}\frac{\prior(\state)f_i(\typei|\state)}{Pr(\csignal_i|\typei)}\payi(\alloci,\typei,\state)\int_{\Sigma(\state,\csignal_i)}\left(\star\right)\lebesgue(d\cor_0)\right]\lebesgue(d\cor_i),
\end{align*}}}
\normalsize
where
\begin{align*}
    &\star=\mathbb{E}_{\typemi|\state,\bar\cor_{-i}}\left[\sum_{\allocmi}\hat\mechanism_N(\typeb_i,\typemi,\cor_0,\bar\cor_{-i})(a_i,a_{-i})\right]\\
    &=\mathbb{E}_{\typemi|\state,\bar\cor_{-i}}\left[\sum_{\allocmi}\bar\mechanism_{J(\typeb_i,\typemi,\asignal(\state,\cor_0),\bar{\cor})}(\strat_{J(\typebi,\typemi,\asignal(\state,\cor_0),\overline{\cor})},\state,\cor_0)(a_i,a_{-i})\right]\\
    &=\mathbbm{1}[p_i(\typeb_i,\asignal_i(\state,\cor_0),\cor_i)=1]\asignal_{i}(\alloci|\strat_i(\typeb_i,\asignal_i,\cor_i))+(1-\mathbbm{1}[p_i(\typeb_i,\asignal_i(\state,\cor_0),\cor_i)=1])\delta_{a_{i,\emptyset}}(\alloci).
\end{align*}
Because agent $i$ of type $\typei$ could have imitated type $\typeb_i$, reporting $\typei$ dominates. By the same logic, when the agent reports $\typei$, she obtains at least the payoff from participating in the mechanism.
\subsection{Optimal Calibrated Auction}\label{appendix:app}
\begin{proof}[Proof of \autoref{prop:no-gap-auction}]
The pointwise solution to the Myersonian problem allocates the good to agents in $N^*(\type,\state)=\arg \max_{i\in \setplayers\cup \{0\}} [\ddp_i(\type,\state)+\virtual_i(\typei)\state_i+\state_{0i}$] where $i=0$ corresponds to an outside option with $\ddp_0\equiv\virtual_0\equiv\state_{00}\equiv0$. The conditions of the proposition ensure that for all $i\in N$ and $j\in \setplayers\cup \{0\}$,
\begin{align*}
   \frac{d}{d \typei}\left( \ddp_{i}(\type,\state)+\virtual_i(\typei,\typecdfi)\state_i+\state_{0i}\right)\geq  \frac{d}{d \typei} \left(\ddp_{j}(\type,\state)+\virtual_j(\type_j,\typecdf_j)\state_j+\state_{0j}\right).
\end{align*}
Thus,  an optimal selection $q^*(\type,\state)$ exists such that for each $i$, $\typemi$, and $\state$, $q^*(\typei,\typemi,\state)$ is non-decreasing in $\typei$ (e.g., one that uniformly randomizes over $N^*(\type,\state)$).

Denote  by $Q_{\text{\texttt{full}}}$ the set of allocation rules implementable under full state disclosure. These are the rules such that for all $i$ and $\state$, $\expect_{\typecdfmi} [q_i(\typei,\typemi,\state)]$ is non-decreasing in $\typei$. It follows that $q^*\in Q_{\text{\texttt{full}}}$, and hence $q^*\in\Qmy$. Thus, $q^*$ solves the Myersonian problem and can also be implemented by fully disclosing the state to the agents and conducting an optimal mechanism state-by-state. By revenue equivalence, the expected revenue of such implementation is the same as under no disclosure, and thus the designer obtains payoff $\DP_\myerson$.
\end{proof}

\subsection{Technical results from \autoref{appendix:microf}}\label{appendix:microf-aux}
\begin{proof}[Proof of \autoref{lemma:beliefs-0}]
The set of continuous bounded functions on \estates\ is separable and hence it has a countable dense subset $\{g_k\}_{k\in\naturals}\subset C_b(\estates)$. It is immediate to see that $\belief_n\weakc\belief$ if and only if for all $k\in\naturals$ $\int g_kd\belief_n\to\int g_kd\belief$.

For each $k\in\naturals$ define a real-valued, bounded, martingale on $(\terminalsh, \borel_{\terminalsh},\mathbb{P}_\strat)$ as follows:
\begin{align*}
M_t^k(\estate,h^\infty)=\int_{\estates}g_k(\stateb,\cor)d\belief_t(\estate,h^\infty)(\stateb,\cor).
\end{align*}
Doob's martingale convergence theorem implies that $M_t^k(\estate,h^\infty)=\mathbb{E}[g_k|h^t]\to M_\infty^k(\estate,h^\infty)=\mathbb{E}[g_k|h^\infty]$ $\mathbb{P}_\strat$-a.s. Let $E_k$ denote the subset of \terminalsh\ where convergence happens, and note that $\mathbb{P}_\strat(E_k)=1$.

Let $E=\cap_k E_k$ and note that $\mathbb{P}_\strat(E)=1$. Then, on $E$, we have that for all $k\in\naturals$,
\begin{align}\label{eq:wp1}
\int_{\estates}g_k(\stateb,\cor)d\belief_t(\estate,h^\infty)(\stateb,\cor)\to M_\infty^k(\estate,h^\infty).
\end{align}

Fix now a terminal history $(\estate,h^\infty)$. Because \ebeliefs\ is compact \citep[Theorem 15.11]{aliprantis2006infinite}, the sequence $(\belief_t(\estate,h^\infty))_{t\in\naturals}$ has a convergent subsequence $\belief_{t_j}(\estate,h^\infty)\weakc \tilde{\belief}$. Passing the limit along $t_j$ in \autoref{eq:wp1} we have that for all $k\in\naturals$
\begin{align*}
\int_{\estates}g_k(\stateb,\cor)d\belief_t(\estate,h^\infty)(\stateb,\cor)\to\int_{\estates}g_k(\stateb,\cor)d\tilde{\belief}.
\end{align*}
Because the set $\{g_k:k\in\naturals\}$ determines the convergent subsequences, any subsequential limits must be equal. Hence $\belief_t(\cdot)$ converges on $E$ and call this limit $\tilde\belief_\infty$. Hence on $E$ we have that $\belief_t(h^\infty)\weakc\tilde\belief_\infty(h^\infty)$.

Now, for each $k$, $\int g_kd\belief_\infty=\mathbb{E}[g_k|h^\infty]$ almost surely. Hence, $\tilde\belief_\infty$ is a version of the law of \estates\ conditional on $h^\infty$. This is $\mathbb{P}_\strat(\cdot|h^\infty)$, completing the proof.
\end{proof}%
\begin{proof}[Proof of \autoref{prop:extended-occup}]
Fix a continuous and bounded function $\testf\in C_b(\Argb)$ and define for each terminal history $(\state,h^\infty)$
\begin{align*}
\Delta_t(\state,h^\infty)=\testf(\alloc_t(h^\infty),\type_t(h^\infty),\state,\belief_t(h^\infty))-\testf(\alloc_t(h^\infty),\type_t(h^\infty),\state,\belief_{t+1}(h^\infty)).
\end{align*}
%
%
As in \autoref{lemma:beliefs-0}, let $E$ denote the probability-1 subset of \terminalsh\ on which  $\belief_t\weakc\belief_\infty$.\footnote{To be sure, the proof of \autoref{lemma:beliefs-0} is written in the context of repeated mechanisms but it extends verbatim to dynamic mechanisms with simple notational adjustments.} Then, on $E$, 
$
\Delta_t(\state,h^\infty)\to0
$, 
and hence,
\begin{align*}
\mathbb{E}_\strat\left[\Delta_t(\state,h^\infty)\right]\to0,
\end{align*}
as $t\to\infty$. 
Now, for every $T$,
\begin{align*}
D_T(\testf)\equiv\mathbb{E}_{\eoccup_\strat^{T,1}}\left[\testf\right]-\mathbb{E}_{\eoccup_\strat^{T,2}}\left[\testf\right]=\frac{1}{T}\sum_{t=1}^T\mathbb{E}_{\strat}\left[\Delta_t\right],
\end{align*}
and hence the left-hand side goes to $0$ as $T\to\infty$. 

Now, let $T_n$ be such that $\eoccup_{\strat}^{T_n,2}\weakc\eoccup$ for some $\eoccup\in\Delta(\Argb)$. Note that
\begin{align*}
\mathbb{E}_{\eoccup_{\strat}^{T_n,1}}\left[\testf\right]=\mathbb{E}_{\eoccup_{\strat}^{T_n,2}}\left[\testf\right]+D_{T_n}\left[\testf\right]\to\mathbb{E}_{\eoccup}\left[\testf\right]+0,
\end{align*}
so a subsequential limit of $\eoccup_{\strat}^{T_n,2}$ is a subsequential limit of $\eoccup_{\strat}^{T_n,1}$. Switching the role of 1 and 2, we obtain the opposite set inclusion and the result follows.
\end{proof}
\begin{proof}[Proof of \autoref{prop:belief-occup}]
Fix a continuous function $g$ on \ebeliefs. Then, we want to show that
\begin{align*}
\mathbb{E}_{\bsplit_T}\left[g\right]\to\mathbb{E}_{\mathbb{P}_\strat\circ\belief_\infty^{-1}}\left[g\right].
\end{align*}
By \autoref{lemma:beliefs-0},  $\belief_t\weakc\belief_\infty$ $\mathbb{P}_\strat$-almost surely and $g$ is continuous, we have that $\mathbb{E}_\strat[g(\belief_t)]\to \mathbb{E}_\strat\left[g(\belief_\infty)\right]$ by dominated convergence theorem.\footnote{To be sure,
\begin{align*}
\mathbb{E}_\strat\left[g(\belief_t)\right]=\int_{H^\infty}g(\belief_t(h^\infty))\pr_\strat(d h^\infty).
    \end{align*}}
Because eventually constant sequences have Ces\`aro limits, we have that
\[\frac{1}{T}\sum_{t=1}^T\mathbb{E}_{\strat}\left[g(\belief_t)\right]\to \mathbb{E}_{\strat}\left[g(\belief_\infty)\right].\]
And now we are basically done, because
\begin{align*}
    \mathbb{E}_{\bsplit_T}[g]=\int_{H^\infty}\frac{1}{T}\sum_{t=1}^Tg(\belief_t(h^\infty))\pr_\strat(d h^\infty)=\frac{1}{T}\sum_{t=1}^T\mathbb{E}_{\strat}\left[g(\belief_t)\right]\to \mathbb{E}_\strat[g(\belief_\infty)]=\int g d(\mathbb{P}_\strat\circ\belief_\infty^{-1}).
\end{align*}
In other words, the occupation measure on beliefs induced by the strategy (and the prior, the type distribution, and the mechanism) is the push-forward measure $\left(\mathbb{P}_\strat\circ\belief_\infty^{-1}\right)$. In particular, that $\pr_\strat$ is a measure implies that $\left(\mathbb{P}_\strat\circ\belief_\infty^{-1}\right)$ is a measure itself \citep[Chapter 3.6]{bogachev2007measure}.
\end{proof}

\begin{proof}[Proof of \autoref{lemma:adequate-learning}] We now show the agent can ensure the payoff $U(\bsplit_\rmech)$ in \autoref{eq:target}, which corresponds to the agent's maximum payoff under the calibrated information structure \cexp. Recall that this information structure is the one that corresponds to the 
partition of \estates, $\mathcal{P}$, defined as follows: $\estate,\estateb$ in the same cell $P$ of $\mathcal{P}$ if for all $(\alloc,\mssg)$, $\rmech(\alloc|\mssg,\estate)=\rmech(\alloc|\mssg,\estateb)$. Conditional on cell $P$, the associated posterior is $\belief(|P)\in\ebeliefs$. Let $\bsplit_\rmech\in\Delta(\ebeliefs)$ denote the induced belief distribution (with mean $\prior\otimes\cord$).

To prove the result, we consider the strategy $\stratb_N$ parameterized by a number $N$  and defined as follows. In the exploration phase of $\stratb_N$,  the agent plays each message \mssg\ for $N$ rounds. Let $\belief_{N|\mssgs|}(h^{N|\mssgs|})$ denote the agent's beliefs as a function of the realized sequence of allocations implied by $h^{N|\mssgs|}$. For each history that succeeds $h^{N|\mssgs|}$, the agent of type \type\ plays the message \mssg\ that solves
\[\max_{\mssg\in\mssgs}\sum_{\estate}\belief_{N|\mssgs|}(h^{N|\mssgs|})(\estate)\sum_{\alloc\in\Alloc}\rmech(\alloc|\mssg,\estate)u\aarg.\]
It is immediate to verify that the agent's (limit) average payoff under $\stratb_N$ is given by:
\begin{align}\label{eq:adeq-strat}
U(\stratb_N)&=\mathbb{E}_{\stratb_N}\left[\sum_{\type\in\Types}\typed(\type)\max_{\mssg\in\mssgs}\sum_{\estate}\belief_{N|\mssgs|}(h^{N|\mssgs|})(\estate)\sum_{\alloc\in\Alloc}\rmech(\alloc|\mssg,\estate)u\aarg\right]=
\nonumber\\
&
=
\sum_{h^{N|\mssgs|}\in H^{N|\mssgs|}}\mathbb{P}_{\stratb_N}^{N|\mssgs|}(h^{N|\mssgs|})\pay^*(\belief_{N|\mssgs|}(h^{N|\mssgs|})),
\end{align} 
where $\pay^*$ is as in \autoref{eq:rmech-max}.

Below, we show that the distribution of beliefs under $\stratb_N$, $\mathbb{P}_{\stratb_N}\circ\belief_{N|\mssgs|}^{-1}\weakc\bsplit_\rmech$. Consequently, as $\pay^*$ is continuous and bounded on \ebeliefs, for any $\delta>0$, we can choose $N_\delta$ so that for all $N\geq N_\delta$,
\begin{align*}
|U(\stratb_N)-U(\bsplit_\rmech)|\leq\delta.
\end{align*}
Consequently, the agent's payoff under \strat\ must be $U(\bsplit_\rmech)$ because by definition for all $\delta>0$\footnote{The argument shows that the agent's equilibrium payoff is at least $U(\bsplit_\rmech)$. However, it is immediate that $U(\bsplit_\rmech)$ is the most the agent can make in the game as $\bsplit_\rmech$ extracts all information from the mechanism.}
\begin{align*}
\liminf_{T\to\infty}\mathbb{E}_{\strat}\left[U_T\right]&\geq\limsup_{T\to\infty} \mathbb{E}_{\stratb_{N_\delta}}\left[U_T\right]=U(\stratb_{N_\delta})\geq U(\bsplit_\rmech)-\delta,
\intertext{ and hence, }
\liminf_{T\to\infty}\mathbb{E}_{\strat}\left[U_T\right]&\geq\lim_{\delta\to 0} U(\bsplit_\rmech)-\delta=U(\bsplit_\rmech).
\end{align*}
which completes the proof.

We now complete the missing step:

\paragraph{The law of $\belief_{N|\mssgs|}$ converges to $\bsplit_\rmech$} It is useful to write the bottom line of \autoref{eq:adeq-strat} as follows:
\begin{align*}
\sum_{\estate\in\estates}(\prior\otimes\cord)(\estate)\mathbb{E}_{\mathbb{P}_{\stratb_N}^{N|\mssgs|}(\cdot|\estate)}\left[\pay^*(\belief_{N|\mssgs|})\right].
\end{align*}
We show that the conditional law of $\belief_{N|\mssgs|}$, $\mathbb{P}_{\stratb_N}(\cdot|\estate)$, converges to the Dirac measure on $\belief(\cdot|P(\estate))$. Noting that $\bsplit_\rmech=\sum_{\estate\in\estates}(\prior\otimes\cord)(\estate)\delta_{\belief(\cdot|P(\estate))}$ completes the proof.

For the exploration block, define for each $\mssg\in\mssgs$, the empirical frequency $\hat{\rmech}_{N,\mssg}:(\Types\times\mssgs\times\Alloc)^{N|\mssgs|}\to\Delta(\Alloc)$, as follows
\begin{align*}
\hat{\rmech}_{N,\mssg}(h^{N|\mssgs|})(\alloc)=\frac{1}{N}\sum_{n=1}^N\mathbbm{1}[\Alloc_{\mssg,n}=\alloc],
\end{align*}
where $\Alloc_{\mssg,n}$ is the $n^{th}$ draw from \Alloc\ when the message is $\mssg$. Let $\hat\rmech_N:H^{N|\mssgs|}\to\Delta(\Alloc)^{\mssgs}$ denote the vector of empirical frequencies. The agent's belief at history $h^{N|\mssgs|}$ is given by:
\begin{align}\label{eq:explore-post}
\belief_{N|\mssgs|}(h^{N|\mssgs|})(\estate)=\frac{(\prior\otimes\cord)(\estate)\prod_{\mssg\in\mssgs} \prod_{\alloc\in \Alloc}\rmech(\alloc|\mssg,\estate)^{N\hat{\rmech}_{N,\mssg}(h^{N|\mssgs|})(\alloc)}}{\sum_{\estateb}(\prior\otimes\cord)(\estateb)\prod_{\mssg\in\mssgs} \prod_{\alloc\in\Alloc}\rmech(\alloc|\mssg,\estateb)^{N\hat{\rmech}_{N,\mssg}(h^{N|\mssgs|})(\alloc)}}.
\end{align}
Denote by $\bsplit_{N|\mssgs|,\estate}\in\Delta(\ebeliefs)$ the law of $\belief_{N|\mssgs|}$ conditional on \estate, i.e., $\bsplit_{N|\mssgs|,\estate}=\mathbb{P}_{\stratb_N}(\cdot|\estate)\circ\belief_{N|\mssgs|}^{-1}$.  Below, we show that $\bsplit_{N|\mssgs|,\estate}$ converges weakly to $\delta_{\belief(\cdot|P(\estate))}$.

Suppose the true state is \testate. Then, $(\Alloc_{m,1},\dots,\Alloc_{\mssg,N})$ are drawn i.i.d. from distribution $\rmech(\cdot|\mssg,\testate)$. Fix a continuous and bounded function $\testf$ on \Alloc. Then, almost surely,\footnote{This almost surely is under the law of \Alloc\ under $\rmech(\cdot|\mssg,\testate)$.}
\begin{align*}
\int_\Alloc \testf d\hat{\rmech}_N(\cdot|m)=\frac{1}{N}\sum_{n=1}^N\testf(\Alloc_{\mssg,n})\to\mathbb{E}_{\rmech}[\testf(\Alloc_{m,1})]=\sum_{\alloc\in\Alloc}\testf(\alloc)\rmech(\alloc|\mssg,\testate),
\end{align*}
by the strong law of large numbers applied to the i.i.d random variables $(\testf(\Alloc_{m,1}),\dots,\testf(\Alloc_{m,N}))$. Because this holds for all \testf, then $\hat{\rmech}_N(\cdot|\mssg)\weakc\rmech(\cdot|\mssg,\testate)$ almost surely when the true state is \testate. 

Fix an arbitrary state \estate\ and consider the ratio of the right-hand side of \autoref{eq:explore-post} at \estate\ and \testate:
\begin{align}\label{eq:log-odds}
\frac{\belief_{N|M|}(h^{N|M|})(\estate)}{\belief_{N|M|}(h^{N|M|})(\testate)}=\frac{(\prior\otimes\cord)(\estate)}{(\prior\otimes\cord)(\testate)}\prod_{\alloc\in\Alloc}\prod_{\mssg\in\mssgs}\left(\frac{\rmech(\alloc|\mssg,\estate)}{\rmech(\alloc|\mssg,\testate)}\right)^{N\hat{\rmech}_{N,\mssg}(h^{N|\mssgs|})(\alloc)}.
\end{align}
Suppose $\estate\notin P(\testate)$. By definition of the partition $\mathcal{P}$, a message $\mssg\in\mssgs$ and allocation $\alloc\in\Alloc$ exist such that $\rmech(\alloc|\mssg,\estate)\neq\rmech(\alloc|\mssg,\testate)$. Taking logarithm on both sides of \autoref{eq:log-odds} and dividing by $N$, 
\begin{align}
\frac{1}{N}\log\left(\frac{\belief_{N|M|}(h^{N|M|})(\estate)}{\belief_{N|M|}(h^{N|M|})(\testate)}\right)=\frac{1}{N}\log\left(\frac{(\prior\otimes\cord)(\estate)}{(\prior\otimes\cord)(\testate)}\right)+\sum_{\alloc^\prime\in\Alloc}\sum_{\mssg^\prime\in\mssgs}\hat{\rmech}_{N,\mssg^\prime}(h^{N|\mssgs|})(\alloc^\prime)\log\left(\frac{\rmech(\alloc^\prime|\mssg^\prime,\estate)}{\rmech(\alloc^\prime|\mssg^\prime,\testate)}\right).
\end{align}
Because $\hat{\rmech}_N(\cdot|\mssg)\weakc\rmech(\cdot|\mssg,\testate)$ almost surely when the true state is \testate, 
\begin{align}\label{eq:limit}
\lim_{N\to\infty}\frac{1}{N}\log\left(\frac{\belief_{N|M|}(h^{N|M|})(\estate)}{\belief_{N|M|}(h^{N|M|})(\testate)}\right)=-\sum_{\mssg\in\mssgs}\mathrm{D_{KL}}\left(\rmech(\cdot|\mssg,\testate)|\rmech(\cdot|\mssg,\estate)\right),
\end{align}
where $\mathrm{D_{KL}}$ is the Kullback-Leibler divergence. Note that at least one of the terms in the KL-divergence is positive as $\rmech(\cdot|\mssg,\estate)\neq\rmech(\cdot|\mssg,\testate)$. Hence, 
\begin{align*}
\lim_{N\to\infty}\log\left(\frac{\belief_{N|M|}(h^{N|M|})(\estate)}{\belief_{N|M|}(h^{N|M|})(\testate)}\right)=-\infty,
\end{align*}
meaning that $\belief_{N|M|}(h^{N|M|})(\estate)/\belief_{N|M|}(h^{N|M|})(\testate)\to0$. 

Suppose now that $\estate\in P(\testate)$. Then, \autoref{eq:log-odds} reduces to 
\begin{align}\label{eq:fixed-ratio}
\frac{\belief_{N|M|}(h^{N|M|})(\estate)}{\belief_{N|M|}(h^{N|M|})(\testate)}=\frac{(\prior\otimes\cord)(\estate)}{(\prior\otimes\cord)(\testate)},
\end{align}
for all $N$. 

Collecting both cases, we conclude that conditional on the true state being \testate,
\begin{align*}
\sum_{\estate\in P(\testate)}\belief_{N|M|}(\estate)\rightarrow_{N\to\infty}1,
\end{align*}
and moreover, within the cell, the fixed-ratio property implies the law $\bsplit_{N|\mssgs|,\testate}\weakc\delta_{\belief(\cdot|P(\testate))}$. We conclude that the unconditional belief distribution, $\sum_{\estate\in\estates}(\prior\otimes\cord)(\estate)\bsplit_{N|\mssgs|,\estate}\weakc\sum_{\estate\in\estates}(\prior\otimes\cord)(\estate)\delta_{\belief(P(\estate))}=\bsplit_\rmech$. In particular, 
\begin{align*}
U(\stratb_N)=\mathbb{E}_{\mathbb{P}_{\stratb_N}\circ\belief_{N|\mssgs|}^{-1}}\left[\pay^*(\belief)\right]\to\mathbb{E}_{\bsplit_\rmech}\left[\pay^*(\belief)\right],
\end{align*}
completing the proof.
\end{proof}
\begin{lemma}\label{prop:ball} 
Suppose $\pay:\Beliefs\to\reals$ satisfies that
\begin{align*}
\int_{\Delta(\States)} u(\belief)\,\bsplit(d\belief)<\int_{\Delta(\States)} \max\{u(\belief),0\}\,\bsplit(d\belief),
\end{align*}
then a set $B\subseteq\Delta(\States)$ open relative to \Beliefs\ exists such that
\begin{align*}
\int_B u(\belief)\,\bsplit(d\belief)<0.
\end{align*}
\end{lemma}
\begin{proof}
Define the positive and negative parts of \pay:
\begin{align*}
u_+(\belief):=\max\{u(\belief),0\},\qquad u_-(\belief):=\max\{-u(\belief),0\}.
\end{align*}
Then $u=u_+-u_-$ pointwise. Integrating and using the assumed strict inequality,
\begin{align*}
\int u\,d\bsplit=\int u_+d\bsplit-\int u_-d\bsplit
<\int u_+d\bsplit
\quad\Longrightarrow\quad
\int u_-d\bsplit>0.
\end{align*}
Hence the set $N:=\{\belief\in\Delta(\States):u(\belief)<0\}$ has strictly positive mass under \bsplit.

Embed $\Delta(\States)\subset\mathbb R^{|\States|}$. Extend $\bsplit$ to a finite Borel measure $\tilde\bsplit$
on $\mathbb R^d$ by
\begin{align*}
\tilde\bsplit(A):=\bsplit(A\cap\Delta(\States))\qquad(A\subseteq\mathbb R^d\ \text{Borel}),
\end{align*}
and extend $u$ to $\tilde u:\mathbb R^d\to[-1,1]$ by $\tilde u=u$ on $\Delta(\States)$ and $\tilde u=0$
on $\mathbb R^d\setminus\Delta(\States)$. Then $\tilde u\in L^1(\tilde\bsplit)$ and $\tilde\bsplit(N)=\bsplit(N)>0$.

Let $B_{\reals^{|\States|}}(x,r)$ denote the ball in $\reals^{|\States|}$ with center $x$ and radius $r$. By the Lebesgue differentiation theorem for finite Borel measures on $\mathbb R^d$,
there is a $\tilde\bsplit$-full-measure set $D\subseteq\mathbb R^d$ such that for every $x\in D$,
\begin{align*}
\lim_{r\downarrow 0}\frac{1}{\tilde\bsplit(B_{\mathbb R^d}(x,r))}
\int_{B_{\mathbb R^d}(x,r)} \tilde u\,d\tilde\bsplit
=\tilde u(x),
\end{align*}
whenever $\tilde\bsplit(B_{\mathbb R^d}(x,r))>0$ (and this positivity holds for all sufficiently small $r$
for $\tilde\bsplit$-a.e.\ $x$). Since $\tilde\bsplit(N\cap D)>0$, choose $\belief_0\in N\cap D$. Then
$\tilde u(\belief_0)=u(\belief_0)<0$. Therefore the above limit is strictly negative, so there exists $r_0>0$
such that for all $0<r<r_0$,
\begin{align*}
\int_{B_{\mathbb R^d}(\belief_0,r)} \tilde u\,d\tilde\bsplit<0.
\end{align*}
For such an $r$, let $B:=B_\Delta(\belief_0,r)=\Delta(\States)\cap B_{\mathbb R^d}(\belief_0,r)$, which is an
open ball in $\Delta(\States)$. Using the definitions of $\tilde\bsplit$ and $\tilde u$,
\begin{align*}
\int_B u\,d\bsplit
=
\int_{B_{\mathbb R^d}(\belief_0,r)} \tilde u\,d\tilde\bsplit
<0.
\end{align*}
This proves the claim.
\end{proof}

\subsubsection{Revelation principle for limit of means preferences}\label{appendix:dyn-rp}
We show in this section that when the designer uses dynamic mechanisms, it is without loss of generality for the designer to employ direct dynamic mechanisms that (i) implement the outside option at all histories after the agent first exercises her option not to participate in the mechanism, and (ii) for which the agent's best response is to always participate and truthfully report her type. This justifies the class of mechanisms we employ in the analysis of \autoref{sec:dynamic}.

\paragraph{Histories, mechanisms, and strategies} As in the main text, to simplify notation, we do not include the agent's decision to participate in the mechanism in the histories of the game. Instead, we follow the convention that if the agent does not participate, it is as if she reported $\emptyset$ and the allocation is \oo. Formally, let $\mssgs\Alloc_\emptyset=(\mssgs\times\Alloc)\cup\{(\emptyset,\oo)\}$.  With this notation, a history through period $t$ is an element of $\hat{H}_\mssgs^t\equiv(\mssgs\Alloc_\emptyset)^{t-1}$ and let $\hat{\terminals}_\mssgs^t=\States\times \hat{H}_\mssgs^t$.\footnote{We index histories by the messages to distinguish these histories from those when the designer uses direct mechanisms.}

A mechanism is a collection $\dmech\equiv(\dmecht)_{t=1}^\infty$ such that the mechanism in period $t$ is a mapping $\dmecht:\hat{\terminals}_\mssgs^t\times\mssgs\to\Delta(\Alloc)$. 

Let $H_\mssgs^t=(\Types\times\mssgs\Alloc_\emptyset)^{t-1}=\Types^{t-1}\times\hat{H}_\mssgs^t$. The agent's strategy, $(\pp,\strat)$, is given by her participation strategy $p_t:H_\mssgs^t\times\Types\to[0,1]$, and conditional on participating, her reporting strategy $\strat_t:H_\mssgs^t\times\Types\to\Delta(\mssgs)$.

\paragraph{The distribution over terminal histories} To obtain the complete description of the paths on the tree we need to append $\States$ to  $H_\mssgs^t$; hence the paths through period $t-1$ are $\States\times H_\mssgs^t\equiv\terminals_\mssgs^t$.  
 The distributions over states and agent's types, the agent's strategy, and the mechanism induce a distribution over the terminal histories $\terminalsh_\mssgs\equiv\States\times H_\mssgs^\infty$, which we denote by $\pr_{\dmech,\estrat}\in\Delta(\States\times H_\mssgs^\infty)$, as it is now useful to keep track of the mechanism. We denote by $\mathbb{E}_{\estrat}$ the expectation under this measure. The distribution $\pr_{\dmech,\estrat}\in\Delta(\States\times H_\mssgs^\infty)$ is the unique distribution that satisfies that for all $t\in\naturals$, $\tilde\terminals_\mssgs^t\subset\States\times H_\mssgs^t$, 
\begin{align*}
\pr_{\dmech,\estrat}(\tilde\terminals_\mssgs^t\times\prod_{s=t+1}^\infty(\Types\times\mssgs\Alloc_\emptyset))=\pr_{\dmech,\estrat}^t(\tilde\terminals_\mssgs^t),
\end{align*}
where the distributions $(\pr_{\dmech,\estrat}^t)_{t\in\naturals}$ satisfy 
\begin{align*}
\pr_{\dmech,\estrat}^{t+1}(\state,h_\mssgs^t,\type,\mssg,\alloc)&=\pr_{\dmech,\estrat}^t(\state,h_\mssgs^t)\typed(\type)\pp_t(h_\mssgs^t,\type)\strat_t(h_\mssgs^t,\type)(\mssg)\dmecht(\state,\hat{h}_\mssgs^t,\mssg)(\alloc),\\
\pr_{\dmech,\estrat}^{t+1}(\state,h_\mssgs^t,\type,\emptyset,\alloc)&=\pr_{\dmech,\estrat}^t(\state,h_\mssgs^t)\typed(\type)(1-\pp_t(h_\mssgs^t,\type))\mathbbm{1}[\alloc=\oo].
\end{align*}
\paragraph{Outcome distribution} Our interest is in the distribution over payoff-relevant outcomes, $\States\times (\Types\times\Alloc)^\infty$, and hence on the marginal of $\pr_{\dmech,\estrat}$ on $\States\times (\Types\times\Alloc)^\infty$, which we denote by $\bar{\pr}_{\dmech,\estrat}$.

\paragraph{Best response} We say that strategy $\estrat$ is a best response for the agent if for all alternative strategies $\estratb$, we have that
\begin{align}\label{eq:br-1}
\liminf_{T\to\infty}\mathbb{E}_{\estrat}\left[U_T\right]\geq\limsup_{T\to\infty}\mathbb{E}_{\estratb}\left[U_T\right],
\end{align}
where recall $U_T$ is the agent's average payoff until period $T$.

\paragraph{Direct and full participation mechanisms}  A special case of the above game is that in which $\mssgs=\Types$, and whenever the agent does not participate, the mechanism chooses \oo\ with probability 1 for any message in all continuation histories. We call these mechanisms direct and full participation mechanisms. Below, when $\mssgs=\Types$, we drop the dependence of the set histories on \mssgs.

Formally, let $\hat{\terminals}_\emptyset^t$ denote the subset of $\hat{\terminals}^t$ such that at some point the sequence $(\emptyset,\oo)$ appears. We define mechanisms 
\[\tilde{\dmech}_t:\hat{\terminals}^t\times\Types\to\Delta(\Alloc),\]
such that $\tilde{\dmech}_t(\state,\hat{h}^t,\cdot)=\mathbbm{1}[\alloc=\oo]$ whenever $(\state,\hat{h}^t)\in\hat{\terminals}_\emptyset^t$. 

\begin{theorem}\label{theorem:dynamic-rp}
Suppose that \estrat\ is a best response to mechanism \dmech. Then, a direct and full participation mechanism $\tilde{\dmech}$ exists such that 
\begin{enumerate}
\item Participation with probability 1 and truthtelling are a best response for the agent,
\item The distribution over $\States\times(\Types\times\Alloc)^\infty$ induced by $(\dmech,\estrat)$ is the same as that induced by $\tilde{\dmech}$ under participation and truthtelling.
\end{enumerate}
\end{theorem}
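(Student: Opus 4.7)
The plan is to apply the standard revelation-principle construction adapted to this setup. I would first define $\tilde\dmech$ by setting, for any $\hat h^t\notin\hat\terminals_\emptyset^t$, the kernel $\tilde\dmech_t(\state,\hat h^t,\cdot)$ equal to the regular conditional distribution of $\alloc_t$ under $\bar\pr_{\dmech,\estrat}$ given $(\state,(\type_s,\alloc_s)_{s<t},\type_t)$\textemdash that is, $\tilde\dmech$ internally integrates out the messages and participation randomizations that $\estrat$ would have drawn, treating each report as if it were the true type. On histories in $\hat\terminals_\emptyset^t$, set $\tilde\dmech_t\equiv\delta_{\oo}$ as required by the direct-and-full-participation form. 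Existence and measurability of these conditional kernels are standard since \States\ is Polish and the per-period data are finite.

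Property (2) is then essentially tautological: under truthtelling and full participation in $\tilde\dmech$, the Ionescu--Tulcea kernels building $(\state,\type^\infty,\alloc^\infty)$ coincide by construction with those obtained from $\bar\pr_{\dmech,\estrat}$ after integrating out messages, so an induction on $t$ matches all finite-dimensional distributions and hence the full joint laws.

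For property (1), fix an arbitrary $\tilde\dmech$-strategy $(\pp',\strat')$. I would construct a $\dmech$-strategy $(\pp'',\strat'')$ by a two-layer simulation: the agent in $\dmech$ privately maintains a virtual $\tilde\dmech$-history generated by feeding her true types into $(\pp',\strat')$; at each period she samples a pseudo-report $\typeb$ from $\strat'$ (or chooses non-participation according to $\pp'$), then plays in $\dmech$ whatever $\estrat$ would prescribe at the corresponding virtual $\dmech$-history with type $\typeb$; once $(\pp',\strat')$ triggers non-participation she stays out permanently, matching $\tilde\dmech$'s absorbing $\oo$-phase. Because $\tilde\dmech$'s kernel is exactly the $\estrat$-marginal, a period-by-period coupling argument shows that the induced law of $(\state,\type^\infty,\alloc^\infty)$ under $(\pp'',\strat'')$ in $\dmech$ coincides with that under $(\pp',\strat')$ in $\tilde\dmech$, so the two sequences $\mathbb{E}_{(\pp'',\strat'')}[U_T]$ and $\mathbb{E}_{(\pp',\strat')}[U_T]$ are identical for every $T$. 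Combining with the best-response property of $\estrat$ in $\dmech$ and property (2) yields $\liminf_T \mathbb{E}^{\tilde\dmech}_{\mathrm{truth,full}}[U_T]=\liminf_T\mathbb{E}^{\dmech}_{\estrat}[U_T]\geq\limsup_T\mathbb{E}^{\dmech}_{(\pp'',\strat'')}[U_T]=\limsup_T\mathbb{E}^{\tilde\dmech}_{(\pp',\strat')}[U_T]$, so truthtelling and full participation is a best response in $\tilde\dmech$.

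The main obstacle is the measurable bookkeeping in the reverse simulation: $\estrat$ is adapted to the full $\dmech$-history, which contains message realizations, whereas the pseudo-report fed to it is sampled from $(\pp',\strat')$'s adaptation to a direct virtual history. Keeping these two histories coupled so that $(\pp'',\strat'')$ is a legitimate adapted $\dmech$-strategy is routine but requires care, and is also the place where the asymmetry between $\tilde\dmech$'s $\oo$-absorption and the free re-entry allowed in $\dmech$ must be addressed\textemdash any $\tilde\dmech$ deviation that "re-enters" after triggering absorption simply is not available, but this does not make the agent strictly better off, because limit-of-means preferences only reward the long-run frequency and the simulation in $\dmech$ already delivers the same one. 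No new mathematical difficulty arises from limit-of-means itself: since payoffs depend only on the marginal law of $(\state,\type^\infty,\alloc^\infty)$, matching that marginal suffices and the message layer can be safely integrated out.
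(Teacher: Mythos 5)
Your proof is correct and follows essentially the same route as the paper's: construct the direct mechanism by folding $\estrat$ and the message layer into the allocation kernel, match the outcome distribution via the Ionescu--Tulcea factorization, and verify optimality by a reverse simulation in which the agent in $\dmech$ draws pseudo-reports from the candidate direct strategy and plays them through $\estrat$. The only presentational difference is that you define $\tilde\dmech_t$ directly as a regular conditional distribution of $\alloc_t$ under $\bar\pr_{\dmech,\estrat}$ (with the absorbing $\oo$ built in from the outset), whereas the paper first constructs $\dmechc$ via an explicit recursion that carries a hidden simulated message history as internal state and only then passes to the full-participation version in a separate step; the two yield the same kernel on the positive-probability histories, which is all the reverse simulation ever visits.
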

\begin{proof}
Fix a mechanism $\dmech = (\dmech^t)_{t\geq1}$ and a best response $\estrat$ for the agent in the sense of \autoref{eq:br}. Let $\pr_{\dmech,\estrat}$ denote the induced distribution over $\terminals^\infty$. We write $(\state,(\type_t,\mssg_t,\alloc_t)_{t\ge1})$ for a generic realization, where $\mssg_t=\emptyset\Rightarrow\alloc_t=\oo$.

The proof proceeds in three steps. In the first step, we construct a direct (but not full participation) mechanism \dmechc, which under participation and truthtelling after every history on path implements the same outcome distribution as $(\dmech,\estrat)$. In the second step, we verify that participation and truthtelling after every history on path is a best response to \dmechc. In the third step, we construct a direct and full participation mechanism from \dmechc. That the agent can always quit the mechanism at each step and obtain \oo\ and Step 3 implies that participation and truthtelling after every history is also a best response to the full participation mechanism obtained from $\dmechc$.

\textbf{Step 1:} We first construct the direct mechanism $\dmechc_t:\hat\terminals^t\times\Types\to\Delta(\Alloc)$.  Define a collection of transition probabilities $\kappa_t:\terminals_\mssgs^t\times\Types\to\Delta(\mssgs\cup\{\emptyset\})$ as follows:
\begin{align*}
\kappa_t(\mssg_t|h_\mssgs^t,\type_t)=(1-p_t(h_\mssgs^t,\type_t))\mathbbm{1}[\mssg_t=\emptyset]+p_t(h_\mssgs^t,\type_t)\strat_t(h_\mssgs^t,\type_t)(\mssg_t).
\end{align*}
We construct \dmechc\ recursively. In period 1, if the state is \state\ and the report is $\type_1$, the designer draws fictitious $\mssg_1$ from $\kappa_1(\cdot|\type_1)$, and implements $\alloc_1=\oo$ if $\mssg_1=\emptyset$, and otherwise draws $\alloc_1\sim\dmech_1(\state,\mssg_1)$.

Recursively, for $t\geq 2$, if the sequence of reports, fictitious messages, and allocations is $(\type^{\prime^{t-1}},\mssg^{t-1},\alloc^{t-1})=(\typeb_s,\mssg_s,\alloc_s)_{s=1}^{t-1}$ and the agent reports $\type_t$, the designer draws $\mssg_t$ from $\kappa_t(\cdot|\type^{\prime^{t-1}},\mssg^{t-1},\alloc^{t-1},\type_t)$ and implements $\alloc_t=\oo$ if $\mssg_t=\emptyset$, and otherwise draws $\alloc_t\sim\dmech_t(\state,\mssg^{t-1},\alloc^{t-1},\mssg_t)$.\footnote{Recall the fictitious reports encode the agent's participation decisions in the original mechanism.}

It is immediate that under truthtelling and participation the mechanism \dmechc\ implements the same distribution over $\States\times(\Alloc\times\Types)^{\infty}$.\footnote{In fact, if we kept track of the designer's draws of fictitious messages, the new mechanism implements the same distribution over terminal histories $\terminalsh_\mssgs$, and a fortiori, its marginal over $\States\times(\Alloc\times\Types)^{\infty}$ is the same.}

\textbf{Step 2:}  Let \estratopt\ denote the agent's strategy that participates and truthfully reports after every history. We now show that $\estratopt$ is a best response to $\dmechc$ in the sense of \autoref{eq:br-1}. 

To do so, we show that for any strategy $(\tilde\pp,\tilde\strat)$ in the game induced by the direct mechanism $\dmechc$, a strategy $(\pp^\prime,\stratb)$ exists such that 
\begin{align}\label{eq:payoff-match}
\expect_{\dmechc,(\tilde\pp,\tilde\strat)}[U_T] = \expect_{\dmech,\estratb}[U_T]
\quad\text{for all } T,
\end{align}
where $U_T$ is the average payoff through period $T$. Given \autoref{eq:payoff-match} and the best-response property of $\estrat$ to $\dmech$,
\[
\liminf_{T\to\infty} \expect_{\dmech,\estrat}[U_T]
\;\ge\; \limsup_{T\to\infty} \expect_{\dmech,\estratb}[U_T]
\quad\text{for all }\estratb.
\]
Using Step~1, we have $\expect_{\dmech,\estrat}[U_T]=\expect_{\dmechc,\estratopt}[U_T]$ for all $T$, and by \autoref{eq:payoff-match} we have $\expect_{\dmech,\estratb}[U_T]=\expect_{\dmechc,(\tilde\pp,\tilde\strat)}[U_T]$ for all $T$.
Hence
\[
\liminf_{T\to\infty} \expect_{\dmechc,\estratopt}[U_T]
\;\ge\; \limsup_{T\to\infty} \expect_{\dmechc,\estratc}[U_T]
\quad\text{for all }\estratc,
\]
which is exactly the definition of $\estratopt$ being a best response to $\dmechc$. It remains to construct $\estratb$ and verify~\eqref{eq:payoff-match}.

We show how the agent can emulate the strategy \estratc\ in the game induced by the indirect mechanism via strategy \estratb. Define $\tilde\kappa_t: H^t\times\Types\to\Delta(\Types\cup\{\emptyset\})$ as follows
\begin{align*}
\tilde\kappa_t(\typeb|h^t,\type_t)=(1-\tilde\pp(h^t,\type_t))\mathbbm{1}[\typeb=\emptyset]+\tilde\pp(h^t,\type_t)\tilde\strat(h^t,\type_t)(\typeb).
\end{align*}
The strategy \estratb\ privately simulates the report process induced by \estratc\ in the direct mechanism, and conditional on the fictitious reports, generates the actual message in the indirect mechanism \dmech\ using the kernel $\kappa_t$ in Step 1, evaluated at the fictitious type history.

Formally, for $t\geq 1$ given the history of types, messages, and allocations through period $t$, $(\type^{t-1},\mssg^{t-1},\alloc^{t-1})$, and the privately tracked fictitious type reports $\type^{\prime^{t-1}}$,\footnote{Recall that to minimize notation and make history lengths symmetric across participation and nonparticipation, we record the agent's rejection of the mechanism as the empty message $\emptyset$.} the agent of type $\type_t$ draws a fictitious report $\typeb_t\sim\tilde\kappa_t(\cdot|\type^{t-1},\type^{\prime^{t-1}},\alloc^{t-1},\type_t)$. If $\typeb_t=\emptyset$, then $\mssg_t=\emptyset$ (the agent does not participate in period $t$). Otherwise, $\typeb_t\in\Types$ and $\mssg_t$ is drawn from $\kappa_t(\cdot|\type^{\prime^{t-1}},\mssg^{t-1},\alloc^{t-1},\typeb_t)$. The allocation is \oo\ upon rejection, and $\alloc_t\sim\dmech_t(\state,\mssg^{t-1},\alloc^{t-1},\mssg_t)$, otherwise.

It is immediate that \autoref{eq:payoff-match} holds and hence \estratopt\ is a best response to \dmechc:\footnote{In fact, the construction ensures the stronger property that $\pr_{\dmechc,\estratc}^T=\pr_{\dmech,\estratb}^T$, when in a slight abuse of notation we keep track of the fictitious messages in $\pr_{\dmechc,\estratc}^T$.} In the extensive form game induced by \dmechc, the designer simulates the agent's participation and reporting strategies using \estrat\ based on the agent's type reports and determines allocations in the mechanism. When the agent's strategy is given by \estratc, the process described in the above paragraph correspond to the designer's simulated participation and reporting strategies, and allocations continued to be determined by \dmech. Hence, in the extensive form game induced by \dmechc, when the agent plays \estratc, it is as if she faces mechanism \dmech\ and plays strategy \estratb. The best response property of \estrat\ implies that playing \estratb\ yields a weakly worse payoff, and hence \estratc\ is not a profitable deviation from \estratopt\ in the direct mechanism \dmechc.

\paragraph{Step 3:} Modify $\dmechc$ at all histories that include at least one non-participation decision, so that the mechanism implements the outside option \oo. With this modification, the mechanism satisfies the full participation property. It is immediate that participation and truthtelling after every history remains a best response.
\end{proof}
\end{document}